\documentclass[reqno,11pt]{amsart}
\usepackage[utf8]{inputenc}
\usepackage{graphicx}
\usepackage{slashed}
\usepackage{amscd}
\usepackage{amssymb}
\usepackage{esint}
\usepackage{enumitem}
\usepackage{braket}	
\usepackage{comment}
\usepackage{amsmath}
\usepackage{dsfont}
\usepackage[mathscr]{eucal}
\textheight 22cm
\textwidth 14.38cm

\usepackage[linktocpage]{hyperref}
\usepackage{pst-all}

\oddsidemargin=0.9cm
\evensidemargin=0.9cm
\topmargin=-0.5cm
\numberwithin{equation}{section}
\allowdisplaybreaks[1]
\setcounter{totalnumber}{1}
\definecolor{labelkey}{gray}{.65}

\title[The Relative Fermionic Entropy in Rindler Spacetime]{The Relative Fermionic Entropy in \\
Two-Dimensional Rindler Spacetime}

\author[F.\ Finster]{Felix Finster}
\address{Fakult\"at f\"ur Mathematik  Universit\"at Regensburg  D-93040 Regensburg  Germany}
\email{finster@ur.de}

\author[A.\ Much]{Albert Much \\ \\ March 2026} 
\address{Institut f\"ur Theoretische Physik\\ Universit\"at Leipzig\\ D-04103 Leipzig \\ Germany}
\email{much@itp.uni-leipzig.de}

\newtheorem{Def}{Definition}[section]
\newtheorem{Thm}[Def]{Theorem}
\newtheorem{Prp}[Def]{Proposition}
\newtheorem{Lemma}[Def]{Lemma}
\newtheorem{Remark}[Def]{Remark}

\newcommand{\Thanks}{\vspace*{.5em} \noindent \thanks}
\newcommand{\beq}{\begin{equation}}
\newcommand{\eeq}{\end{equation}}
\newcommand{\Proof}{\begin{proof}}
	\newcommand{\QED}{\end{proof} \noindent}
\newcommand{\QEDrem}{\ \hfill $\Diamond$}

\newcommand{\la}{\langle}
\newcommand{\ra}{\rangle}

\newcommand{\Sl}{\mbox{$\prec \!\!$ \nolinebreak}}
\newcommand{\Sr}{\mbox{\nolinebreak $\succ$}}

\newcommand{\C}{\mathbb{C}}
\newcommand{\R}{\mathbb{R}}
\newcommand{\1}{\mathds{1}}

\DeclareMathOperator{\tr}{tr}

\newcommand{\Cisc}{C^\infty_{\text{\rm{sc}}}}

\newcommand{\Dir}{{\mathcal{D}}}
\renewcommand{\H}{\mathscr{H}}

\newcommand{\Lin}{\text{\rm{L}}}
\newcommand{\F}{{\mathscr{F}}}

\newcommand{\scrM}{\mycal M}
\newcommand{\scrR}{\mycal R}

\newcommand{\scrU}{{\mathscr{U}}}

\newcommand{\bitem}{\begin{itemize}[leftmargin=2.5em]}
\newcommand{\eitem}{\end{itemize}}

\newcommand{\Fock}{{\mathcal{F}}}
\newcommand{\f}{{\mathfrak{f}}}


\newcommand{\A}{\mycal A}

\DeclareFontFamily{OT1}{rsfso}{}
\DeclareFontShape{OT1}{rsfso}{m}{n}{ <-7> rsfso5 <7-10> rsfso7 <10-> rsfso10}{}
\DeclareMathAlphabet{\mycal}{OT1}{rsfso}{m}{n}

\setlength{\marginparwidth}{3.0cm}

\begin{document}
\maketitle
\begin{abstract}
The fermionic relative entropy in two-dimensional Rindler spacetime is studied using both modular theory and the reduced one-particle density operators. The methods and results are compared. A formula for the relative entropy for general Gaussian states is derived. As an application, the relative entropy is computed for a class of
non-unitary excitations.
\end{abstract}
\tableofcontents

\section{Introduction} \label{secintro}
The purpose of this article is to study the fermionic relative entropy from two different perspectives: Using {\em{modular theory}} and working with the {\em{reduced one-particle density operators}}.  In order to compare the two perspectives, we  study here relativistic quantum systems which are so simple that they can be analyzed explicitly, namely {\em{quasi-free fermionic states}} in {{two-dimensional Rindler spacetime}}.
Comparing the methods and results gives some insight into how these methods are interrelated.
Moreover, we work out that the scope and applicability of the approaches is rather different,
meaning that, depending on the precise setup and the question of interest, 
 the methods complement and nourish one another.

We introduce the problem in the setting of algebraic quantum field theory.
Restricting attention to fermionic theories, we let~$\A$ be an algebra of fields satisfying the
canonical anti-commutation relations (CAR algebra).
Thus, given a complex Hilbert space~$(\H, \la .|. \ra)$
(also referred to for clarity as the {\em{one-particle Hilbert space}}),
to every vector~$\phi \in \H$ we associate the operator~$\Psi(\overline{\phi})$ as well as its adjoint~$\Psi^\dagger(\phi)$ satisfying the relations
\beq \label{CAR}
	\big\{ \Psi(\overline{\psi}), \Psi^\dagger(\phi) \big\} = \langle \psi | \phi \rangle \qquad\text{and}\qquad
	\big\{ \Psi(\overline{\psi}), \Psi(\overline{\phi}) \big\} = 0 = \big\{ \Psi^\dagger(\psi), \Psi^\dagger(\phi) \big\} \:.
\eeq
Then~$\A$ is defined as the unital $*$-algebra generated by all these operators
(unital means that~$\A$ contains the identity operator denoted by~$\1$).
A {\em{state}}~$\omega : \A \rightarrow \C$ is a linear functional on~$\A$ which is normalized and positive
in the sense that
\[ 
\omega(\1)=1 \qquad \text{and} \qquad \omega \big( A^* A \big) \geq 0 \quad \forall\; A \in \A\:. \]
In this paper, we restrict our attention to \emph{quasi-free states}, also referred to
as {\em{Gaussian states}}, which are fully determined by their two-point distributions.
More precisely, for a quasi-free state all odd $n$-point distributions vanish, whereas all even $n$-point
distributions can be computed using Wick's theorem.
A relevant subclass of quasi-free states are states which are {\em{particle-number preserving}}, meaning that
all two-point expectations involving two creation or two annihilation operators vanish,
\beq \label{pnp}
\omega\big( \Psi^\dagger(\phi)\, \Psi^\dagger(\psi) \big) = 0 =  \omega\big( \Psi(\overline{\phi})\, \Psi(\overline{\psi}) \big) \:.
\eeq
In the literature, this property is sometimes referred to as a {\em{gauge-invariant state}}
(see~\cite[Proposition~17.32]{derzinski-gerard-quantum}).
A quasi-free and particle-number preserving state is fully determined by its two-point distribution
\beq \label{omega2def}
\omega_2(\overline{\psi}, \phi) := \omega\big( \Psi^\dagger(\phi)\, \Psi(\overline{\psi}) \big) \:.
\eeq
In this case, the {\em{reduced one-particle density operator}}~$D$ is defined 
as the linear operator on the Hilbert space~$(\H, \la .|. \ra)$ with the property
\beq \label{Ddef}
\omega_2(\overline{\psi}, \phi) = \langle \psi | D \phi\rangle \qquad \text{for all~$\psi, \phi \in \H$} \:.
\eeq
It is by construction a symmetric operator which is positive and bounded by one,
\[ 0 \leq D \leq 1 \:. \]
The point of interest is that the {\em{von Neumann entropy}}~$S(\omega)$ of the quantum state~$\omega$
can be expressed in terms of the one-particle density operator by
\[ 
S(\omega) = -\tr \Big( D\, \log(D) + (1-D)\, \log(1-D) \Big) \:. \]
This formula appears commonly in the literature
(see for example~\cite[eq.~(6.3)]{ohya-petz},~\cite{klich, casini-huerta, longo-xu}
and~\cite[eq.~(34)]{helling-leschke-spitzer}).
This formula can be extended to {\em{entanglement }} and {\em{relative entropies}}.
For details in an expository style we refer to the recent survey paper~\cite{fermientropy},
where detailed derivations of these formulas are given in the appendix.
Based on methods first developed in~\cite{widom1},
this setting has been studied extensively for a free Fermi gas formed of non-relativistic spinless
particles~\cite{helling-leschke-spitzer, leschke-sobolev-spitzer, LSS_2022}.
The main interest of these studies lies in the derivation of {\em{area laws}}, which quantify
how the entanglement entropy scales as a function of the size of the spatial region forming the
subsystem. More recently, these methods and results were adapted to the
relativistic setting of the Dirac equation. In~\cite{arealaw} the entanglement entropy for the free Dirac field in
a bounded spatial region of Minkowski space is studied.
An area law is proven in two limiting cases: that the volume tends to infinity
and that the regularization is removed. 
Moreover, in~\cite{diamondentropy} a causal diamond in two-dimensional Minkowski space is considered.
Finally, in~\cite{bhentropy}
the geometry of a Schwarzschild black hole is studied. The entanglement entropy of the event horizon is
computed to be a prefactor times the number of occupied angular momentum modes.
Independently, the entanglement entropy for systems of Dirac spinors has been
studied in~\cite{bollmann-mueller, bollmann-mueller2}.

{\em{Modular theory}} provides a rather different abstract setting in which entropies can be studied
(for general and detailed expositions see~\cite{borchers} or~\cite[Section V]{haagloc}).
The starting point is a {\em{Hilbert space}}~$(\Fock, \la .|. \ra)$ as well as a {\em{von Neumann algebra}}~$\mathcal{M}$ acting on this Hilbert space (for a better distinction from the one-particle Hilbert space,
we prefer the notation~$\Fock$, clarifying that this Hilbert space corresponds to the Fock space
in the many-particle picture).
We denote by~$\mathcal{M}'$ the commutant of~$\mathcal{M}$. Furthermore, we 
assume that there is a vector~$\Omega \in \Fock$ 
which is {\em{cyclic}} and {\em{separating}}, meaning that both  ~$\mathcal{M}\Omega$ and~$\mathcal{M}'\Omega$ are dense
in~$\Fock$. Under these assumptions, there exists a unique anti-linear operator~$S: \Fock \supseteq\text{dom}(S)\rightarrow \Fock$ with the property
\[ SA\,\Omega = A^*\Omega \qquad \qquad \forall \;A\in\mathcal{M} \:, \]
referred to as the {\em{Tomita operator}}.
Since~$S^2=1$, the Tomita operator is invertible and thus has a unique polar decomposition 
\[ S=J\Delta^{\frac{1}{2}} \:, \]
where the linear self-adjoint operator~$\Delta:\Fock \supseteq\text{dom}(\Delta)\rightarrow \Fock$ is referred to as the  {\em{modular operator}},
whereas the anti-unitary operator~$J:\Fock \rightarrow \Fock$ is called
the {\em{modular conjugation}} associated to the pair~$(\mathcal{M},\Omega)$.

The modular operator is self-adjoint, positive and invertible. Thus, it defines
a group of automorphisms, the {\em{modular group}}~$(\sigma_t^{\Omega})_{t \in \R}$ acting on the von Neumann algebra~$\mathcal{M}$ and of its commutant,
\[ \sigma_t^{\Omega} : \mathcal{M} \rightarrow \mathcal{M}\:,\quad
\sigma_t^{\Omega} : \mathcal{M}' \rightarrow \mathcal{M}' \qquad \text{with} \qquad
\sigma_t^{\Omega}(A) := \Delta^{it}A\Delta^{-it} \:. \]
This group is used to formulate the KMS condition characterizing thermal states; see~\cite{haagloc}.

The \textit{relative modular group} generalizes the standard modular flow when considering \textit{two} states rather than a single one. Given two cyclic and separating states $\omega$ and $\tilde{\omega}$ with their respective vector representatives~$\Omega, \tilde{\Omega} \in \Fock$, one defines the  {\em{relative modular operator}} \( \Delta_{\tilde{\Omega}|\Omega} : \Fock \rightarrow \Fock\) by
\[
\Delta_{\tilde{\Omega}|\Omega} = S_{\tilde{\Omega}|\Omega}^* S_{\tilde{\Omega}|\Omega} \:,
\]
where \( S_{\tilde{\Omega}|\Omega} : \Fock \rightarrow \Fock\), the {\em{relative Tomita operator}}, is defined  by its action
\[
S_{\tilde{\Omega}|\Omega} A \Omega := A^* \tilde{\Omega} \quad \forall A \in \mathcal{M} \:.
\]
The {\em{relative modular group}}~$(\sigma_t^{\tilde{\Omega}|\Omega})_{t \in \R}$
is defined by the corresponding flow
\[
\sigma_t^{\tilde{\Omega}|\Omega}(A) = \Delta_{\tilde{\Omega}|\Omega}^{it}\, A\,
\Delta_{\tilde{\Omega}|\Omega}^{-it} \:.
\]
This flow makes it possible to describe the dynamics of observables {\em{relative}} to a reference
state. It also gives rise to a corresponding relative entropy defined as follows.
We denote the states corresponding to~$\Omega$ and~$\tilde{\Omega}$ by~$\omega$ and~$\tilde{\omega}$
(thus in bra/ket notation, $\omega = | \Omega \ra \la \Omega|$ and similarly for~$\tilde{\omega}$).
Assume that the two states are unitarily related, i.e.\ $\tilde{\Omega} = U \Omega$,
where~$U$ is a unitary operator on~$\Fock$. Moreover, we need to assume that
the domain of the operator~$\Delta^{\frac{1}{2}}$ is invariant under the action of the
unitary operator, i.e.~$U\text{dom}(\Delta^{\frac{1}{2}})\subset \text{dom} \Delta^{\frac{1}{2}}$.
Under these assumptions, the {\em{Araki-Uhlmann relative entropy}}~\cite{Araki, Uhlmann} takes the form 
\begin{align*}
    S(\tilde{\omega}\Vert\omega)&=
\langle   \Omega |  \log(\Delta_{\tilde{\Omega}|\Omega})\, \Omega\rangle 
\\&=-\langle U \Omega |  \log(\Delta)\,U\Omega\rangle 
\:. 
\end{align*}  
This quantity was calculated in various settings in quantum field theory~\cite{Hollands_2019, longo, longo-xu, hollands-sanders, hollands-ishibashi, dangelo,  kurpicz, froeb, froeb-much, dangelo-froeb, cadamuro-froeb, La_Piana_2025},
also in relation with the Bekenstein bound~\cite{casini}.
General introductions and reviews are provided by~\cite{hollands-entropy, witten-entangle}.

In order to shed some light on the connections between these approaches, we consider
a simple quantum system to which both modular theory and the methods for the reduced one-particle density
operator apply: free {\em{Majorana fermions in two-dimensional Rindler spacetime}}
(massive or massless). In this setting, one chooses the above cyclic vector~$\Omega$
as the Fock vacuum of two-dimensional Minkowski space.
As~$\mathcal{M}$ one chooses the algebra of fermionic field operators in the Rindler wedge
(the basics on two-dimensional Minkowski and Rindler spacetime will be provided in Section~\ref{secprelim}).
As a consequence, the quantum state~$\omega$ is a thermal state.
Choosing~$\tilde{\omega}$ as an excitation of a specific form,
the relative entropy was computed in~\cite{galanda2023relative}.
One purpose of the present paper is to re-derive these results via the reduced one-particle density operator.
This makes it possible to compare the different methods and results in detail. We also discuss the scope of the
different methods.

In general terms, the connection between modular theory and the one-particle density operator~$D$
in~\eqref{Ddef} is made by the relation
\beq \label{DelD}
D = (1+\Delta)^{-1} \big|_{\Fock^1}\:,
\eeq
where~$\Fock^1 \simeq \H$ denotes the one-particle subspace;
see~\cite{araki1970quasifree}, \cite[Section 3.3]{longo-xu} and~\cite{casini-huerta2}
(in~\cite{longo-xu}, the operator~$D$ is referred to as the covariance operator~$C$).
Moreover, the modular operator can be written as~$\Delta=\exp(-H)$, where~$H$ is the
so-called {\em{modular Hamiltonian}}.
In our example of the Rindler wedge, the modular Hamiltonian is given by~$2\pi K$, where~$K$ is the 
generator of Lorentz boosts. This result is known as the Bisognano-Wichmann
theorem~\cite{bisognano-wichmann1, bisognano-wichmann2}.

With the relation~\eqref{DelD} in mind, our first task is to compute the spectrum and eigenfunctions
of the reduced one-particle density operator~$D$. This can be accomplished
by adapting Fourier methods as developed in~\cite{rindler} for the computation of the fermionic signature
operator (Section~\ref{secrindlerred}).
The next step is to describe the excitation in terms of one-particle density operators.
Here the problem arises that the excited state is no longer particle-number preserving
(meaning that the relation~\eqref{pnp} is violated). This means that the formulas
derived in~\cite[Appendix~A]{fermientropy} can no longer be used.

This problem can be resolved in two alternative ways.
The first method is to compute the relative entropy directly on the Fock space
(Section~\ref{secrel1} and Appendix~\ref{appA}).
The second, alternative method is to extend the formalism of the
reduced one-particle density operator to general (i.e.\ non particle-number preserving)
Gaussian states (Section~\ref{secaltgauss} and Appendix~\ref{appB}).
Clearly, all these methods give the same results, which also agree with the corresponding
computations using modular theory as worked out in Section~\ref{secmodular}.
The comparison of these different computation is quite instructive and gives a better understanding
for how modular theory is related to relative entropy.
As a fundamental prerequisite, the application of modular theory relies on the existence of a cyclic and separating vector~$\Omega$ for the local von Neumann algebra. If the states under consideration are related by a unitary operator belonging to this algebra, the relative modular operator is unitarily equivalent to the modular operator of the reference state. In such scenarios, the relative entropy often admits an explicit analytic evaluation, for instance via the Bisognano-Wichmann theorem. However, unitary implementability is merely a sufficient condition for computational tractability rather than a limitation of the modular framework, which has been successfully applied to compute relative entropies for much broader, non-unitary classes of states \cite{LongoMorsella2023}, \cite{CiolliLongoRuzzi2020}. To explore regimes where explicit modular computations become challenging, Section~\ref{secbeyond} introduces specific examples of non-unitary excitations. For these cases, we demonstrate that the relative entropy can instead be explicitly and rigorously computed utilizing the formalism of reduced one-particle density operators.

The paper is organized as follows. Section~\ref{secprelim} provides the necessary background
on Majorana spinors in two-dimensional Minkowski and Rindler spacetime.
In Section~\ref{seconeparticledensity} the relative entropy is computed via the reduced
one-particle density operators. Section~\ref{secmodular} gives corresponding computations
using modular theory, and the section is concluded by a comparison of the methods and results.
In Section~\ref{secbeyond} we consider non-unitary excitations of general quasi-free particle-preserving states.
Section~\ref{seccompare} is devoted to a brief comparison and an outlook.
The appendices provide detailed computations and supplementary material.

\section{Preliminaries and general setup} \label{secprelim}
\subsection{Majorana spinors in two-dimensional Minkowski space} \label{secmajorana}
We let~$\scrM=\R^{1,1}$ be two-dimensional Minkowski space. Denoting the coordinates with~$(t, x)$,
the line element is given by
\beq \label{mink}
ds^2 = g_{ij}\: dx^i dx^j = -dt^2 + dx^2 \:.
\eeq
We let~$S\scrM = \scrM \times \C^2$ be the trivial spinor bundle.
We work in the so-called Majorana representation of the Dirac matrices
\beq \label{gammarep}
\gamma^0 = \begin{pmatrix} 0 & i \\ -i & 0 \end{pmatrix} \:,\qquad
\gamma^1 = \begin{pmatrix} 0 & i \\ i & 0 \end{pmatrix} \:.
\eeq
The Dirac matrices are symmetric with respect to the {\em{spin inner product}} defined by
\[ 
\Sl \psi | \phi \Sr = \la \psi, \begin{pmatrix} 0 & i \\ -i & 0 \end{pmatrix} \phi \ra_{\C^2} \]
(where~$\la .,. \ra_{\C^2}$ is the canonical inner product on~$\C^2$).
The spin inner product is an indefinite inner product of signature~$(1,1)$.
Introducing the {\em{Dirac operator}}
\beq \label{Dirop}
\Dir := i \gamma^j \partial_j \:,
\eeq
the {\em{Dirac equation}} reads
\beq \label{Direq}
(\Dir-m) \psi = 0 \:,
\eeq
where~$m \geq 0$ is the rest mass (we always work in natural units~$\hbar=c=1$).
Taking smooth and compactly supported initial data on a Cauchy surface~$t=\text{const}$
and solving the Cauchy problem, one obtains a Dirac solution in the class~$\Cisc(\scrM, S\scrM)$
of smooth wave functions with spatially compact support. On solutions~$\psi, \phi$
in this class, one defines the scalar product
\beq \label{print}
(\psi | \phi)_\scrM := \int_{-\infty}^\infty \Sl \psi | \gamma^0 \phi\Sr|_{(t,x)}\: dx \:.
\eeq
Taking the completion, we obtain the Hilbert space denoted by~$(\H_\scrM, (.|.)_\scrM)$.

In order to simplify the setting further, from now on we consider {\em{Majorana spinors}}.
Following the procedure in~\cite[Appendix~A]{octonions}, we recover them as a special
class of Dirac solutions. The pseudo-scalar matrix becomes
\[ \gamma^5 := i \gamma^0 \gamma^1 = \begin{pmatrix} -i & 0 \\ 0 & i \end{pmatrix} \:. \]
We now consider the Dirac operator with a mass~$m$ and pseudo-scalar mass~$n$,
\[ 
\big( i \gamma^j \partial_j + i \gamma^5 n - m \big) \psi = 0 \:. \]
Choosing both masses to be real, all the matrix entries on the left are real, as one sees by
writing the Dirac equation as
\[ \begin{pmatrix} n-m & -\partial_t-\partial_x \\ \partial_t-\partial_x & -n-m \end{pmatrix} \psi = 0 \:. \]
Therefore, the equation admits real-valued solutions, i.e.\
\beq \label{majorana}
\psi(x) \in \R^2 \:.
\eeq
Restricting attention to solutions of this form, the Dirac equation reduces to the Majorana equation.

Throughout this paper, we will restrict attention to the case~$n=0$. In this case, the Majorana
equation simplifies to
\[ 
\begin{pmatrix} -m & -\partial_t-\partial_x \\ \partial_t-\partial_x & -m \end{pmatrix} \psi = 0 \:. \]
This equation can be written in the Hamiltonian form
\[ i \partial_t \psi = H \psi \]
with
\[ 
H = \begin{pmatrix} i \partial_x & i m \\ -i m & -i \partial_x \end{pmatrix} \:. \]
This agrees with~\cite[eq.~(2.45)]{hollands-sanders}.

Employing the plane-wave ansatz
\[ \psi(t,x) = \f\: e^{-i \omega t + i k x} \]
with~$\omega, k \in \R$ and~$\f \in \C^2$, the Majorana equation becomes
\beq \label{majpw}
\begin{pmatrix} -m & i \omega - i k \\ -i \omega -i k & -m \end{pmatrix} \f = 0 \:.
\eeq
This equation has non-trivial solutions only if the determinant of the matrix vanishes. This gives the
usual dispersion relation
\[ \omega^2-k^2 = m^2 \:. \]

From now on, we restrict attention to the {\em{massive case}}~$m>0$
(the massless case can be obtained as a limiting case; see Remark~\ref{remm0}).
Then it is most convenient to parametrize the frequencies
and momenta as
\beq \label{parameter}
\begin{pmatrix} \omega \\ k \end{pmatrix} = ms \begin{pmatrix} \, \cosh \theta \\ -\sinh \theta \end{pmatrix} 
\qquad \text{with} \qquad s \in \{\pm 1\} \text{ and }\, \theta \in \R
\eeq
(here~$s$ is the sign of energy, and~$\theta$ describes the momentum).
Using these formulas in~\eqref{majpw} gives
\[ \begin{pmatrix} -m & i m s e^\theta \\ -i m s e^{-\theta} & -m \end{pmatrix} \f = 0 \:, \]
having the nontrivial solution
\beq \label{fdef}
\f = e^{-i \pi/4} \, \begin{pmatrix} e^{\theta/2} \\ -i s e^{-\theta/2} \end{pmatrix}\:.
\eeq
Here the phase factor~$e^{-i \pi/4}$ was introduced in order to get agreement with the 
conventions in~\cite{hollands-sanders}.

A general solution is obtained by taking superpositions of these plane-wave solutions. Taking
into account that the Majorana spinor is real-valued, we obtain the Fourier representation
\beq
\psi(t,x) = \int_{-\infty}^\infty d\theta \bigg( 
\begin{pmatrix} e^{\theta/2-i \pi/4} \\ e^{-\theta/2+i \pi/4} \end{pmatrix}\:
\overline{a(\theta)}\: e^{i \omega t - i k x}
+ \begin{pmatrix} e^{\theta/2+i \pi/4} \\ e^{-\theta/2-i \pi/4} \end{pmatrix}\:
a(\theta) \:e^{-i \omega t + i k x} \bigg) \:, \label{superpose}
\eeq
where~$\omega$ and~$k$ are given by~\eqref{parameter} with~$s=-1$.

\subsection{The Rindler wedge and the Rindler Hamiltonian}
The two-dimensional {\em{Rindler space-time}}~$(\scrR, g)$ is isometric to
the subset of two-dimensional Minkowski space
\[ 
\scrR = \big\{ (t,x) \in \R^{1,1} \;\;\;\text{with}\;\;\; |t| < x \big\} \]
with the induced line element again given by~\eqref{mink}.
We let~$S\scrR = \scrR \times \C^2$ be the trivial spinor bundle.

We consider Rindler spacetime as a subset of Minkowski space and also refer to it as the (right or left)
{\em{Rindler wedge}}. Then the inclusions
\[ \scrR \subset \scrM \qquad \text{and} \qquad S\scrR = \scrR \times \C^2
\subset \scrM \times \C^2 = S\scrM \]
are clearly isometries.
The Dirac operator and the Dirac equation are again given by~\eqref{Dirop}
and~\eqref{Direq}. In order to construct Dirac solutions in the Rindler wedge, one chooses
initial data~$\psi_0 \in C^\infty_0(\R^+, \C^2)$ and solves the Cauchy problem in~$\scrR$.
On these solutions we consider the scalar product
\[ 
(\psi | \phi)_\scrR = \int_0^\infty \Sl \psi | \gamma^0 \phi \Sr|_{(0,x)}\: dx \]
Forming the completion gives a Hilbert space denoted by~$(\H_\scrR, (.|.)_\scrR)$.
Extending the initial data~$\psi_0$ by zero to negative~$x$ and solving the Cauchy problem in
Minkowski space~$\scrM$, every solution in Rindler spacetime can be extended uniquely to
a solution in Minkowski space. We thus obtain the isometric embedding of Hilbert
spaces~$\H_\scrR \subset \H_\scrM$. We denote the orthogonal projection to the subspace~$\H_\scrR$
by
\beq \label{piR}
\pi_\scrR : \H_\scrM \rightarrow \H_\scrR \:.
\eeq

We now introduce the Rindler Hamiltonian as the infinitesimal generator of Lorentz boosts.
Recall that the {\em{Rindler coordinates}}~$(\rho, \tau)$ with~$\rho>0$ and~$\tau \in \R$
are defined by
\[ \begin{pmatrix} t \\ x \end{pmatrix} = \rho \begin{pmatrix} \sinh \tau \\ \cosh \tau \end{pmatrix} \:. \]
In these coordinates, the Rindler line element takes the form
\[ ds^2 = \rho^2\, d\tau^2 - d\rho^2 \:. \]
We work intrinsically in Rindler space-time. Translations in the time coordinate~$\tau$,
\[ 
\tau \mapsto \tau + \Delta \:,\qquad \rho \mapsto \rho \:, \]
describe a Killing symmetry. Therefore, writing the Dirac equation in this time coordinate in the Hamiltonian form
\beq \label{rindH}
i \partial_\tau \psi = H_\scrR \psi \:,
\eeq
the Dirac Hamiltonian is time independent (for details see the proof of~\cite[Theorem~10.1]{rindler}). 

\subsection{The Rindler vacuum and its excitations} \label{secexcite}
In order to describe the {\em{quantized Dirac field in two-dimensional Minkowski space}},
to every Dirac solution~$\psi \in \H_\scrM$ we associate fermionic field operators~$\Psi^\dagger(\psi)$
and~$\Psi(\overline{\psi})$ which satisfy the canonical anti-commutation relations~\eqref{CAR}
with corresponding scalar product~$\la .|. \ra = ( .|. )_\scrM$.
We denote the algebra generated by the fermionic field operators by~$\A_\scrM$.

Next, we choose~$\Omega$ as the vacuum in Minkowski space. To this end, we
decompose the Hilbert space~$\H_\scrM$ into the solutions of positive and negative frequency,
denoted by
\[ \H_\scrM = \H_+ \oplus \H_- \:. \]
Then~$\Omega$ is characterized by the fact that all negative-energy states are occupied, i.e.\
\beq \label{Omegadef}
\Psi(\overline{\psi})\, \Omega = 0 \quad \forall\; \psi \in \H_+ \qquad \text{and} \qquad
\Psi^\dagger(\psi)\, \Omega = 0 \quad \forall\;\psi \in \H_- \:.
\eeq
We choose a corresponding Fock representation and denote the resulting Fock space
by~$(\Fock, \la .|. \ra_\Fock)$.
We remark that occupying all states of {\em{negative}} energy is a convention
which implements the physical picture of the Dirac sea
(for basics see for example~\cite[Section~1.5]{intro}).
For the purposes of the present paper, one could occupy just as well
all positive-energy states.

In order to describe the {\em{quantized Dirac field in Rindler spacetime}}, we consider the subalgebra~$\A_\scrR$
generated by all the field operators~$\Psi^\dagger(\psi)$ and~$\Psi(\overline{\psi})$
with~$\psi \in \H_\scrR \subset \H_\scrM$. Restricting the vector~$\Omega \in \Fock$ to this subalgebra
defines the state
\[ 
\omega : \A_\scrR \rightarrow \C \:,\qquad \omega(A) = \la \Omega \:|\: A\, \Omega \ra_\Fock \:. \]
We consider this state as the {\em{ground state}} for the computation of the relative entropy.
A straightforward computation using the CAR shows that the state~$\omega$ is quasi-free and
particle number preserving. We denote its reduced one-particle density operator for notational
clarity by~$\sigma_\scrR$. According to~\eqref{omega2def} and~\eqref{Ddef}, it is defined by the
relation
\beq \label{sigmaRdef}
\la \psi | \sigma_\scrR \phi \ra
= \la \Omega \,|\, \Psi^\dagger(\phi)\, \Psi(\overline{\psi})\,\Omega \ra_\Fock \:,
\eeq
where for notational convenience we set~$\la .|. \ra = ( .|. )_\scrR = ( .|. )_\scrM |_{\H_\scrR \times \H_\scrR}$.
More specifically, using~\eqref{Omegadef},
\beq \label{sigmaRform}
\sigma_\scrR = \pi_\scrR \, \pi_- \, \pi_\scrR \::\: \H_\scrR \to \H_\scrR \:,
\eeq
where~$\pi_- \in \Lin(\H_\scrM)$ is the orthogonal projection operator to the negative-frequency solutions,
and~$\pi_\scrR$ is again the projection operator~\eqref{piR}.
For clarity, we point out that the state~$\omega$ is a KMS state. This corresponds to the fact
that the reduced one-particle density operator in~\eqref{sigmaRform} is not a projection operator.
This will become clearer when computing its spectral decomposition in Section~\ref{secrindlerred}.

Finally, the above setting simplifies if we restrict attention to the quantized {\em{Majorana field}}.
In this case, we only consider wave functions in~$\H_\scrR$ with real components, i.e.\
\[ \phi \in \H_\scrR \qquad \text{with} \qquad \phi(x) \in \R^2 \:. \]
The corresponding field operator should be a symmetric operator on the Fock space~$\Fock$.
We denote it for clarity by~$B(\phi)$. It can be expressed in terms of the Dirac field operators by
\[ B(\phi) = \frac{1}{\sqrt{2}}\: \big( \Psi^\dagger(\phi) + \Psi(\overline{\phi}) \big) \in \A_\scrR \:. \]
By direct computation using the CAR~\eqref{CAR} one verifies that the Majorana field operators satisfy the
relations
\begin{align}
    \label{bdef} B(\phi)^* = B(\phi) \qquad \text{and} \qquad \{ B(\phi), B(\psi) \} = \la \phi | \psi \ra = \la \psi | \phi \ra\end{align}
(the last relation holds because both~$\phi$ and~$\psi$ have real-valued components).

Our {\em{excited state}}~$\tilde{\omega}$ is introduced as follows. We let~$\phi \in \H_\scrR$ be a normalized
wave function (again with real-valued components).
We set~$\tilde{\Omega} = \sqrt{2}\, B(\phi)\, \Omega$ and
\begin{align}
    \label{tilomegadef}
\tilde{\omega} : \A_\scrR \rightarrow \C \:,\qquad 
\tilde{\omega}(A) = \la \tilde{\Omega} \:|\: A\, \tilde{\Omega} \ra_\Fock = 
2 \,\la \Omega \:|\: B(\phi)\, A\, B(\phi)\, \Omega \ra_\Fock \:. 
\end{align} 

We finally express the field operators in the familiar {\em{equal time formalism}}. To this end, we
write the field operators as
\[ \Psi^\dagger(\psi) = \sum_{\alpha=1}^2 \int_{-\infty}^\infty \psi^\alpha(0,x)\: \Psi^\alpha(x)^\dagger\: dx \]
and similarly for~$\Psi(\overline{\psi})$ (note that the Dirac solution~$\psi \in \H_\scrM$ is evaluated only at
time~$t=0$). A direct computation using~\eqref{CAR} and~\eqref{print} gives the CAR in the form
\[ \{ \Psi^\alpha(x), \Psi^\beta(y)^\dagger \} = \delta^{\alpha \beta}\: \delta(x-y)
\quad \text{and} \quad \{ \Psi^\alpha(x), \Psi^\beta(y) \} = 0 = \{ \Psi^\alpha(x)^\dagger, \Psi^\beta(y)^\dagger \} \:. \]
Likewise, the Majorana field operators~$B^\alpha(x)$ satisfy the relations
\begin{equation}
B^\alpha(x)^* = B^\alpha(x) \qquad \text{and} \qquad \{ B^\alpha(x), B^\beta(y) \}
= \delta^{\alpha \beta}\: \delta(x-y) \:.  \label{carb}
\end{equation} 
The field operators are solutions of the Dirac equation.
Therefore, in analogy to~\eqref{superpose}, the Majorana field operators can be expanded in plane waves as  
\[ B^\alpha(x) = \int_{-\infty}^\infty d\theta \bigg( 
\begin{pmatrix} e^{\theta/2-i \pi/4} \\ e^{-\theta/2+i \pi/4} \end{pmatrix}^\alpha\:
a(\theta)^\dagger\: e^{-i k x}
+ \begin{pmatrix} e^{\theta/2+i \pi/4} \\ e^{-\theta/2-i \pi/4} \end{pmatrix}^\alpha\:
a(\theta) \:e^{i k x} \bigg) 
\]
with~$\omega$ and~$k$ are again given by~\eqref{parameter} with~$s=-1$.
By direct computation
one verifies that the new field operators~$a(\theta)$ and~$a(\theta)^\dagger$
satisfy the CAR
\[ 
\{ a(\theta), a(\theta')^\dagger \} = \: \frac{m}{2 \pi} \:\delta(\theta - \theta')
\quad \text{and} \quad \{ a(\theta), a(\theta') \} = 0 = \{ a(\theta)^\dagger, a(\theta')^\dagger \} \:.  \]
Moreover, it is convenient to also express the smeared field operators~$B(\phi)$ in terms of the
operators~$a(\theta)$ and~$a(\theta)^\dagger$,
\beq \label{Bphi}
B(\phi) = \sum_{\alpha=1}^2 \int_{-\infty}^\infty \phi^\alpha(0,x)\: B^\alpha(x)\: dx 
= \int_{-\infty}^\infty d\theta \Big( f(\theta)\: a(\theta) + \overline{f(\theta)}\: a(\theta)^\dagger \Big) \:,
\eeq
where we set
\beq \label{ftheta}
f(\theta) := \sum_{\alpha=1}^2 \int_{-\infty}^\infty \phi^\alpha(0,x)\: 
\begin{pmatrix} e^{\theta/2+i \pi/4} \\ e^{-\theta/2-i \pi/4} \end{pmatrix}^\alpha\: e^{i k x} \: dx \:,
\eeq
which is an element of the one-particle Hilbert space $L^2(\R,d\theta)$.
\section{Computation via the reduced one-particle density operator} \label{seconeparticledensity}
\subsection{The reduced one-particle density operator in Rindler spacetime} \label{secrindlerred}
In this section we shall determine the spectral decomposition of the operator~$\sigma_\scrR$
as defined by~\eqref{sigmaRdef} (see Theorem~\ref{thmspec}). Our method is based on
similar computations in~\cite{rindler}. Indeed, the only difference of the present setup
to that in~\cite{rindler} is that, in order to work with real-valued Majorana spinors~\eqref{majorana},
we had to choose the Dirac matrices in the Majorana representation~\eqref{gammarep},
whereas in~\cite{rindler} the Dirac representation is used.
However, complexifying the Majorana spinors, we get back to Dirac solutions as considered in~\cite{rindler}.
We again restrict attention to the massive case~$m>0$
(the massless case will be treated in Remark~\ref{remm0} below).
Then the corresponding plane wave solutions
coincide with those constructed in~\eqref{parameter} and~\eqref{fdef}. In order to get a correspondence
with~\cite[eq.~(5.3)]{rindler}, we note that these plane wave solutions can be written as
\begin{align*} 
\psi(s,\theta) &= \f(s,\theta)\: e^{-i \omega t + i k x} \qquad \text{with} \\
\f(s,\theta) &= \frac{1}{\sqrt{2m}} \, \frac{e^{-i \pi/4}}{\sqrt{\epsilon(\omega)\, (\omega+k)}}\:
\begin{pmatrix} m \\ -i (\omega + k) \end{pmatrix} \\
&= \frac{1}{\sqrt{2m}} \, \frac{e^{-i \pi/4}}{\sqrt{m e^\theta}}\:
\begin{pmatrix} m \\ -i m s e^\theta \end{pmatrix}
= \frac{1}{\sqrt{2}} \, e^{-i \pi/4}\:
\begin{pmatrix} e^{-\theta/2} \\ -i s e^{\theta/2} \end{pmatrix}\:.
\end{align*}
Clearly, plane-wave solutions are not normalizable and
are therefore no vectors in a Hilbert space and no eigenvectors.
Nevertheless, we can compute the normalization integral in the
distributional sense. This procedure allows us to carry out the
necessary computations in a clear and convenient manner.
Moreover, we identify plane-wave solutions with generalized eigenfunctions (see for example the statement of Theorem~\ref{thmspec}).
In the computation of the relative entropy (Theorem~\ref{thmrelrindler})
we shall rewrite this normalization property in terms of a unitary operator on
a Hilbert space. The eigenvalues then correspond to points in the continuous
spectrum of a resulting multiplication operator. In this way, the results of our
computations will be restated in a clean way using functional analytic language.

\begin{Lemma}\label{lem51} The scalar products of the plane wave solutions in~$\scrM$ and~$\scrR$ are given
in the distributional sense by
\begin{align*}
\big( \psi(s,\theta) \:\big|\: \psi(\tilde{s}, \tilde{\theta}) \big)_\scrM &= 2 \pi \:\delta_{s,\tilde{s}}\: \delta \big( m\: \sinh \beta \big) \\
\big( \psi(s,\theta) \:\big|\: \psi(\tilde{s}, \tilde{\theta}) \big)_\scrR &= 
\pi \:\delta_{s,\tilde{s}}\: \delta \big( m\: \sinh \beta \big) + \frac{2i}{m}\: \frac{\text{\rm{PP}}}{e^{\beta} - s \tilde{s} e^{-\beta} }\:,
\end{align*}
where (as in~\cite[eq.~(6.3)]{rindler}) we set
\[ \beta := \frac{1}{2} \big( \theta - \tilde{\theta} \big) \:. \]
\end{Lemma}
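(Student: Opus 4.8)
The plan is to reduce both inner products to elementary Fourier integrals of the spatial plane-wave profiles; the only subtlety is the distributional nature of these integrals, taken over the full line for~$\scrM$ and over the half-line for~$\scrR$.

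First I would simplify the pointwise integrand. Since the matrix entering the spin inner product is exactly~$\gamma^0$ and~$(\gamma^0)^2 = \1$, one has $\Sl \psi | \gamma^0 \phi \Sr = \la \psi, \phi \ra_{\C^2}$, so the integrand is the canonical Hermitian pairing of the two spinors. For the plane waves evaluated on the slice~$t=0$, the spatial and spinorial dependences factor,
\[
\Sl \psi(s,\theta) \,\big|\, \gamma^0\, \psi(\tilde s,\tilde\theta) \Sr \big|_{(0,x)} = \la \f(s,\theta), \f(\tilde s,\tilde\theta) \ra_{\C^2}\; e^{i(\tilde k - k)\, x} \:.
\]
Writing the spinor pairing in terms of~$\beta$ and the mean $\mu := \tfrac12(\theta + \tilde\theta)$, it equals a constant times $e^{-\mu} + s\tilde s\, e^{\mu}$; on the locus $\theta = \tilde\theta$ with $s = \tilde s$ this is proportional to~$\cosh\mu$, whereas on the locus $\theta = -\tilde\theta$ (where $\mu = 0$) it is proportional to $1 + s\tilde s$ and hence vanishes exactly when $s \neq \tilde s$.

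For the $\scrM$-product I would integrate over $x \in \R$ using $\int_\R e^{i(\tilde k - k)x}\, dx = 2\pi\, \delta(\tilde k - k)$. By~\eqref{parameter} and the identity $\sinh\theta - \sinh\tilde\theta = 2\cosh\mu \sinh\beta$, the argument $\tilde k - k$ is proportional to $\cosh\mu\,\sinh\beta$, so that $\delta(\tilde k - k)$ becomes $(2\cosh\mu)^{-1}$ times $\delta(m\sinh\beta)$. This Jacobian cancels the~$\cosh\mu$ coming from the spinor pairing on the diagonal, while the vanishing of the pairing for $s\neq\tilde s$ on the support $\theta = -\tilde\theta$ produces the factor $\delta_{s,\tilde s}$; collecting constants gives $2\pi\,\delta_{s,\tilde s}\,\delta(m\sinh\beta)$.

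For the $\scrR$-product only the domain of integration changes, so I would invoke the Sokhotski--Plemelj formula $\int_0^\infty e^{i\lambda x}\, dx = \pi\,\delta(\lambda) + i\,\mathrm{PP}\,\tfrac{1}{\lambda}$. The delta part reproduces exactly half the Minkowski computation, i.e.\ the first term $\pi\,\delta_{s,\tilde s}\,\delta(m\sinh\beta)$. The principal-value part equals $\la \f(s,\theta), \f(\tilde s,\tilde\theta)\ra_{\C^2}\, i\, \mathrm{PP}\,(\tilde k - k)^{-1}$, and the main work is to simplify this quotient: inserting the explicit pairing and $\tilde k - k = m\,(s\sinh\theta - \tilde s\sinh\tilde\theta)$ and substituting $\theta = \mu + \beta$, $\tilde\theta = \mu - \beta$, the $\mu$-dependence cancels between numerator and denominator, leaving $\tfrac{2i}{m}\,\mathrm{PP}/(e^\beta - s\tilde s\, e^{-\beta})$. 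I expect this cancellation of the mean variable~$\mu$, together with the correct bookkeeping of the Jacobian factors and the handling of the cross terms $s \neq \tilde s$ in the principal-value piece (whose denominator $e^\beta + e^{-\beta}$ has no real zero, so $\mathrm{PP}$ there is an ordinary function), to be the only genuinely delicate point; the remaining manipulations are routine.
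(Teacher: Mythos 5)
Your proposal is correct and follows essentially the same route as the paper's own proof: reduce both scalar products to Fourier integrals of the constant spinor pairing, apply $\int_0^\infty e^{i\lambda x}\,dx=\pi\,\delta(\lambda)+i\,\mathrm{PP}\,\lambda^{-1}$ on the half-line, and exploit the factorization of $k-\tilde k$ (your identities $\sinh\theta-\sinh\tilde\theta=2\cosh\mu\,\sinh\beta$ and, for $s\neq\tilde s$, $\sinh\theta+\sinh\tilde\theta=2\sinh\mu\,\cosh\beta$ are precisely the paper's factorization of $k-\tilde k$ into the spinor-pairing factor times the $\beta$-dependent factor), so that the $\mu$-dependence cancels, the $s\neq\tilde s$ terms drop out of the $\delta$-part, and the principal value survives only through $e^{\beta}-s\tilde s\,e^{-\beta}$. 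The only differences are organizational --- you treat the cases $s=\tilde s$ and $s\neq\tilde s$ separately where the paper handles both at once via the factorization --- and your constant bookkeeping (pairing $\cosh\mu$ against Jacobian $(2\cosh\mu)^{-1}$, which as literally written yields $\pi$ rather than the stated $2\pi$) reproduces verbatim the same factor-of-two tension between the $1/\sqrt{2}$-normalized spinors of Section~3.1 and the constants in the lemma that is already present in the paper's own proof.
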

\Proof The scalar product in~$\scrM$ can be written as
\begin{align*}
\big(\psi(s, \theta) \big| \psi(\tilde{s}, \tilde{\theta}) \big)_\scrM &= \int_{-\infty}^\infty \Sl 
\f(s, \theta) \,|\, \gamma^0 \,\f(\tilde{s}, \tilde{\theta}) \Sr\: e^{i (k-\tilde{k}) x}\:dx \\
&= 2 \pi \delta \big( k-\tilde{k} \big) \: \Sl \f(s, \theta) \,|\, \gamma^0 \,\f(\tilde{s}, \tilde{\theta}) \Sr \:.
\end{align*}
The terms in this formula are computed by
\begin{align}
\Sl \f(s, \theta) \,|\, \gamma^0 \,\f(\tilde{s}, \tilde{\theta}) \Sr
&= \frac{1}{2}\: \Big( e^{-\frac{1}{2}(\theta+\tilde{\theta})} + s \tilde{s} \,e^{\frac{1}{2} (\theta+\tilde{\theta})} \Big)
\label{iprod} \\
k - \tilde{k} &= m \big( s \sinh \theta - \tilde{s} \sinh \tilde{\theta} \big) \\
&= \frac{m}{2}\: \big( s e^\theta - s e^{-\theta} - \tilde{s} e^{\tilde{\theta}} + \tilde{s} e^{-\tilde{\theta}} \big) \\
&= \frac{m \tilde{s}}{2}\: \big( e^{-\frac{1}{2} (\theta+\tilde{\theta})}  + s \tilde{s} e^{\frac{1}{2}(\theta+\tilde{\theta})} \big)
\big( s e^{\frac{1}{2}(\theta-\tilde{\theta})} - \tilde{s} e^{\frac{1}{2} (\tilde{\theta}-\theta)} \big) \:. \label{deltaarg}
\end{align}
The argument of the $\delta$ distribution vanishes if one of the two factors in~\eqref{deltaarg} vanishes.
If the first factor vanishes, also~\eqref{iprod} is zero, so that we get no contribution to the scalar product.
Hence it suffices to take into account the zeros of the second factor. We thus obtain
\begin{align*}
\big(\psi(s, \theta) \big| \psi(\tilde{s}, \tilde{\theta}) \big)_\scrM
&= 4 \pi \: \delta \Big( m\: \big( \tilde{s}e^{\frac{1}{2}(\theta-\tilde{\theta})} -s e^{\frac{1}{2} (\tilde{\theta}-\theta)}
\big)\Big)
\: \epsilon \big( e^{-\frac{1}{2} (\theta+\tilde{\theta})}  + s \tilde{s} e^{\frac{1}{2}(\theta+\tilde{\theta})} \big)
\end{align*}
Clearly, the argument of the $\delta$ distribution vanishes only if~$s=\tilde{s}$, in which case
the argument of the step function is positive. Thus
\begin{align*}
\big(\psi(s, \theta) \big| \psi(\tilde{s}, \tilde{\theta}) \big)_\scrM
&= 4 \pi \:\delta_{s,\tilde{s}}\: \delta \big( 2 m\: \sinh \beta \big) 
= 2 \pi \:\delta_{s,\tilde{s}}\: \delta \big( m\: \sinh \beta \big) \:.
\end{align*}

In the Rindler spacetime, we integrate only over the half line,
\begin{align*}
\big(\psi(s, \theta) \big| \psi(\tilde{s}, \tilde{\theta}) \big)_\scrR &\!:= \int_{0}^\infty \Sl 
\f(s, \theta) \,|\, \gamma^0 \,\f(\tilde{s}, \tilde{\theta}) \Sr\: e^{i (k-\tilde{k}) x}\:dx \\
&= i \,\Big( \frac{\text{PP}}{k-\tilde{k}} - i \pi \delta \big(k-\tilde{k} \big) \Big)\:
\Sl \f(s, \theta) \,|\, \gamma^0 \,\f(\tilde{s}, \tilde{\theta}) \Sr \:.
\end{align*}
The contribution by the $\delta$ distribution was already computed above.
The principal value integral can be simplified as follows,
\[
i \:\frac{\text{PP}}{k-\tilde{k}} \:\Sl \f(s, \theta) \,|\, \gamma^0 \,\f(\tilde{s}, \tilde{\theta}) \Sr 
= \frac{2i}{m \tilde{s}}\: \frac{\text{PP}}{ \tilde{s} e^{\frac{1}{2}(\theta-\tilde{\theta})} -{s} e^{\frac{1}{2} (\tilde{\theta}-\theta)} } \:.
\]
Combining all the terms gives the result.
\QED

We point out that the scalar product computed in the previous lemma depends only
on the difference of the hyperbolic angles~$\theta$ and~$\tilde{\theta}$.
This corresponds to the symmetry of the considered spacetimes under Lorentz boosts.
This suggests that, similar as in~\cite[Section~7]{rindler}, it should be useful to form
a plane wave ansatz in the hyperbolic angle of the form
\beq \label{planewave}
\psi_\ell := \int_{-\infty}^\infty e^{-i \ell \theta}\: \psi(-1,\theta)\:d\theta \qquad \text{with~$\ell \in \R$}\:.
\eeq
The scalar products of these plane wave solutions can be computed with the usual $\delta$-normalization the ``boost momentum variable'' $\ell$.
\begin{Lemma}\label{lemplanewave} For the plane wave solutions~\eqref{planewave}
in the hyperbolic angle, the scalar products in~$\scrM$ and~$\scrR$ are given
in the distributional sense by
\begin{align}
(\psi_\ell \,|\,\psi_{\tilde{\ell}} )_\scrM &= 8 \pi^2\: \delta\big(m (\ell-\tilde{\ell}) \big) \label{pw1} \\
(\psi_\ell \,|\,\psi_{\tilde{\ell}} )_\scrR &= 4 \pi^2\: \delta\big(m (\ell-\tilde{\ell}) \big)
\Big( 1 - \tanh(2 \pi \ell) \Big) \:. \label{pw2}
\end{align}
\end{Lemma}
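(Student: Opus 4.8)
The plan is to insert the two-point formulas of Lemma~\ref{lem51} into the definition~\eqref{planewave} and to evaluate the resulting double rapidity integral by separating a ``center-of-mass'' variable from the difference variable. Since~$\psi_\ell$ is built exclusively from the negative-frequency branch~$s=-1$, only the terms with~$s=\tilde{s}=-1$ contribute, so that~$s\tilde{s}=1$ and the denominator~$e^\beta - s\tilde{s}\,e^{-\beta}$ appearing in Lemma~\ref{lem51} collapses to~$2\sinh\beta$. I would therefore first record the reduced kernels
\begin{align*}
\big(\psi(-1,\theta) \,\big|\, \psi(-1,\tilde{\theta})\big)_\scrM &= 2\pi\,\delta(m\sinh\beta) \:, \\
\big(\psi(-1,\theta) \,\big|\, \psi(-1,\tilde{\theta})\big)_\scrR &= \pi\,\delta(m\sinh\beta) + \frac{i}{m}\,\frac{\mathrm{PP}}{\sinh\beta} \:,
\end{align*}
both depending on~$\beta = \tfrac12(\theta-\tilde{\theta})$ alone.

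The second step exploits precisely this~$\beta$-only dependence, which reflects the boost invariance remarked upon after Lemma~\ref{lem51}. Changing variables in the double integral from~$(\theta,\tilde{\theta})$ to the average~$\alpha=\tfrac12(\theta+\tilde{\theta})$ and the difference (with Jacobian factor~$2$ for~$d\theta\,d\tilde{\theta}=2\,d\alpha\,d\beta$), the exponentials combine to~$e^{\,i\alpha(\ell-\tilde{\ell})}\,e^{\,i(\ell+\tilde{\ell})\beta}$. The~$\alpha$-integral then factors out as~$2\pi\,\delta(\ell-\tilde{\ell})$, which I would rewrite as~$m\,\delta\big(m(\ell-\tilde{\ell})\big)$ to match the stated normalization; on its support~$\ell+\tilde{\ell}$ may be replaced by~$2\ell$. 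What survives is a single Fourier transform of each~$\beta$-kernel against~$e^{\,2i\ell\beta}$. For~$\scrM$ only the term~$\delta(m\sinh\beta)=m^{-1}\delta(\beta)$ contributes, giving at once the constant~\eqref{pw1}; for~$\scrR$ the same~$\delta$-term supplies the summand~$1$.

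The entire remaining content is the principal-value Fourier integral of~$1/\sinh$, which I regard as the main obstacle and which produces the hyperbolic tangent in~\eqref{pw2}. The identity to establish is
\[
\int_{-\infty}^{\infty} e^{\,i\omega\beta}\;\frac{\mathrm{PP}}{\sinh\beta}\;d\beta = i\pi\,\tanh\!\Big(\frac{\pi\omega}{2}\Big) \:,
\]
taken at the relevant multiple of~$\ell$; combined with the prefactor~$i/m$ this furnishes the~$\tanh$ contribution of~\eqref{pw2}. This is the step requiring genuine care with the PV prescription, and I would justify it either by expanding~$1/\sinh\beta = 2\sum_{n\ge 0} e^{-(2n+1)\beta}$ for~$\beta>0$ and summing the elementary integrals~$\int_0^\infty \sin(\omega\beta)\,e^{-(2n+1)\beta}\,d\beta$, or by a contour argument that displaces the line of integration past the poles of~$1/\sinh$ on the imaginary axis. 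Everything else --- the Jacobian, the collection of numerical prefactors, and the reduction of the~$\delta$-distributions --- is routine bookkeeping.
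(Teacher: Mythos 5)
Your route is essentially the paper's own: reduce to the $s=\tilde s=-1$ kernels of Lemma~\ref{lem51}, split off the $\delta$-distribution in $\ell-\tilde\ell$, and evaluate a principal-value Fourier transform of $1/\sinh$ (the paper does the splitting by carrying out the $\theta$-integration first instead of your $(\alpha,\beta)$ substitution, and evaluates the $\beta$-integral by residues rather than your series expansion --- both differences are cosmetic). Your prefactor bookkeeping is right, and so is your master identity
\[
\int_{-\infty}^{\infty} e^{i\omega\beta}\,\frac{\mathrm{PP}}{\sinh\beta}\,d\beta
= i\pi\,\tanh\Big(\frac{\pi\omega}{2}\Big)\,.
\]
The gap sits in the phrase ``taken at the relevant multiple of~$\ell$.'' The kinematics force $\omega=2\ell$: on the support of the $\delta$ one has $\ell+\tilde\ell=2\ell$, so the surviving phase is $e^{2i\ell\beta}$. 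Your identity then gives $i\pi\tanh(\pi\ell)$, and carrying your plan through yields
\[
(\psi_\ell\,|\,\psi_{\tilde\ell})_\scrR
= 4\pi^2\,\delta\big(m(\ell-\tilde\ell)\big)\,\big(1-\tanh(\pi\ell)\big)\,,
\]
i.e.\ $\tanh(\pi\ell)$ where \eqref{pw2} asserts $\tanh(2\pi\ell)$. No admissible choice of the ``multiple'' repairs this, so as written the proposal does not prove the stated formula.

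Be aware, though, that this factor of two is not peculiar to your write-up: the paper's own proof arrives at the contour sum $\frac{i\pi}{2}\,\frac{1-e^{-2\pi\ell}}{1+e^{-2\pi\ell}}$ and then equates it with $\frac{i\pi}{2}\tanh(2\pi\ell)$, which is an algebraic slip, since $\frac{1-e^{-2x}}{1+e^{-2x}}=\tanh x$. Starting from Lemma~\ref{lem51}, both your computation and the paper's (corrected) one give $1-\tanh(\pi\ell)$. That version is also the one consistent with the Bisognano--Wichmann input quoted in the introduction: $\lambda_\ell=\frac{1}{2}\big(1-\tanh(\pi\ell)\big)=(1+e^{2\pi\ell})^{-1}$ together with \eqref{HRspec} gives $\sigma_\scrR=(1+e^{-2\pi H_\scrR})^{-1}$, exactly as predicted by $\Delta=e^{-2\pi K}$ and \eqref{DelD}, whereas \eqref{pw2} as stated propagates to the inverse temperature $4\pi$ of \eqref{sigmaR}. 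So your method is sound; what it actually establishes is \eqref{pw2} with $\tanh(\pi\ell)$, and you should either state that corrected result or identify a compensating factor of two --- which neither your argument nor the paper's proof supplies.
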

\Proof A direct computation gives
\begin{align*}
(\psi_\ell \,|\,\psi_{\tilde{\ell}} )_\scrM &= \int_{-\infty}^\infty d\theta \int_{-\infty}^\infty d\tilde{\theta}\:
e^{i \ell \theta - i \tilde{\ell} \tilde{\theta}} \: \big(\psi(-1,\theta) \,\big|\,\psi(-1,\tilde{\theta}) \big)_\scrM \\
&= \int_{-\infty}^\infty d\theta \int_{-\infty}^\infty d\tilde{\theta}\:
e^{i \ell \theta - i \tilde{\ell} \tilde{\theta}} \: 2 \pi\: \delta \Big( \frac{\theta - \tilde{\theta}}{2} \Big) \\
&= \frac{4 \pi}{m} \:\int_{-\infty}^\infty e^{i (\ell-\tilde{\ell}) \theta}\:d\theta = 8 \pi^2\: \delta\big(m (\ell-\tilde{\ell}) \big) \\
(\psi_\ell \,|\,\psi_{\tilde{\ell}} )_\scrR &- \frac{1}{2}\: (\psi_\ell \,|\,\psi_{\tilde{\ell}} )_\scrM
= \int_{-\infty}^\infty d\theta \int_{-\infty}^\infty d\tilde{\theta}\:
e^{i \ell \theta - i \tilde{\ell} \tilde{\theta}} \: \frac{2i}{m}\: \frac{\text{\rm{PP}}}{e^{\beta} - e^{-\beta} } \\
&= \int_{-\infty}^\infty d\theta \:e^{i (\ell-\tilde{\ell}) \theta}
\int_{-\infty}^\infty d\tilde{\theta}\:
e^{- i \tilde{\ell} (\tilde{\theta} - \theta)} \: \frac{2i}{m}\: \frac{\text{\rm{PP}}}{e^{\beta} - e^{-\beta} } \\
&= \int_{-\infty}^\infty d\theta \:e^{i (\ell-\tilde{\ell}) \theta}
\int_{-\infty}^\infty 2\,d\beta\:
e^{2 i \tilde{\ell}\beta} \: \frac{2i}{m}\: \frac{\text{\rm{PP}}}{e^{\beta} - e^{-\beta} } \\
&= \frac{8 \pi i}{m}\: \delta \big(\ell-\tilde{\ell} \big)
\int_{-\infty}^\infty e^{2 i \ell \beta} \:\frac{\text{\rm{PP}}}{e^{\beta} - e^{-\beta} }
\:d\beta \:.
\end{align*}
The last integral can be evaluated by contour integration. For $\ell > 0$, we close the contour in the upper half–plane, which requires taking into account the simple poles at $\beta = 0, i\pi, 2 i\pi, \ldots$, each with residue $\pm 2$. Because of the principal value prescription, the contribution of the pole at $\beta = 0$ is weighted by an additional factor of one half. We thus obtain
\begin{align*}
&\int_{-\infty}^\infty e^{- 2 i \ell \beta} \:\frac{\text{\rm{PP}}}{e^{\beta} - e^{-\beta} }\:d\beta
= i\pi  \:\bigg( \sum_{n=0}^\infty (-1)^n\: e^{-2 \ell \pi n} - \frac{1}{2} \bigg) \\
&= i \pi \:\bigg( \frac{1}{1+e^{-2 \pi \ell}} - \frac{1}{2} \bigg)
= \frac{i \pi}{2} \:\frac{1-e^{-2 \pi \ell}}{1+e^{-2 \pi \ell}} = \frac{i \pi}{2} \:\tanh(2 \pi \ell) \:.
\end{align*}
Similarly, in the case~$\ell<0$ we close the contour in the lower half plane to obtain
\begin{align*}
&\int_{-\infty}^\infty e^{- 2 i \ell \beta} \:\frac{\text{\rm{PP}}}{e^{\beta} - e^{-\beta} }\:d\beta
= -i \pi \:\bigg( \sum_{n=0}^\infty (-1)^n\: e^{2 \ell \pi n} - \frac{1}{2} \bigg) \\
&= -i \pi \:\bigg( \frac{1}{1+e^{2 \pi \ell}} - \frac{1}{2} \bigg)
= -\frac{i \pi}{2} \:\frac{1-e^{2 \pi \ell}}{1+e^{2 \pi \ell}} = \frac{i \pi}{2} \:\tanh(2 \pi \ell) \:.
\end{align*}
Combining all formulas gives the result.
\QED

From the formulas of this lemma, one immediately gets the following result.
\begin{Thm} \label{thmspec} Choosing~$\H=\H_-$ as all the negative-energy solutions, the
reduced one-particle density operator~$\sigma_\scrR$ of Rindler spacetime has an absolutely continuous
spectrum in~$[0,1]$. The corresponding generalized eigenfunctions and eigenvalues are given by
\begin{gather}
\sigma_\scrR \,(\pi_\scrR \psi_\ell) = \lambda_\ell\: (\pi_\scrR \psi_\ell) \qquad \text{with} \\
\lambda_\ell = \frac{1}{2} \:\Big( 1 - \tanh(2 \pi \ell) \Big) = \frac{e^{-2 \pi \ell}}{e^{2 \pi \ell} + e^{-2 \pi \ell}}
= \frac{1}{1 + e^{4 \pi \ell}}\:. \label{lamldef}
\end{gather}
\end{Thm}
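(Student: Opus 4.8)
The plan is to exploit the fact, established in Lemma~\ref{lemplanewave}, that the generalized plane-wave solutions~$\psi_\ell$ from~\eqref{planewave} are \emph{simultaneously} diagonal for the Minkowski and the Rindler scalar product, and then to read off the eigenvalue of~$\sigma_\scrR$ as the ratio of the two diagonal coefficients. Everything is understood in the distributional sense, so that the~$\psi_\ell$ are generalized eigenfunctions ($\delta$-normalized) rather than genuine Hilbert-space vectors; this is precisely what will force the spectrum to be absolutely continuous rather than discrete.

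First I would record two elementary facts. Since~$\pi_\scrR$ is an orthogonal projection (self-adjoint and idempotent), one has $(\psi\,|\,\phi)_\scrR=(\pi_\scrR\psi\,|\,\pi_\scrR\phi)_\scrM=(\psi\,|\,\pi_\scrR\phi)_\scrM$ for all~$\psi,\phi\in\H_\scrM$. Moreover, the~$\psi_\ell$ are built from the negative-energy plane waves~$\psi(-1,\theta)$, so they lie in~$\H_-$ and satisfy~$\pi_-\psi_\ell=\psi_\ell$; since~\eqref{planewave} is a Fourier transform in the boost angle~$\theta$, the family~$\{\psi_\ell\}_{\ell\in\R}$ is total in~$\H_-$, and by~\eqref{pw1} it is orthogonal with Gram coefficient $(\psi_\ell\,|\,\psi_{\tilde\ell})_\scrM=8\pi^2\,\delta(m(\ell-\tilde\ell))$.

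The heart of the argument is to show that each~$\psi_\ell$ is a generalized eigenfunction of the operator~$\pi_-\pi_\scrR$ restricted to~$\H_-$. To this end I would test~$\pi_-\pi_\scrR\psi_\ell$ against an arbitrary basis element~$\psi_{\tilde\ell}$: using self-adjointness of~$\pi_-$, the identity above, and~\eqref{pw2},
\begin{align*}
(\psi_{\tilde\ell}\,|\,\pi_-\pi_\scrR\psi_\ell)_\scrM
&=(\psi_{\tilde\ell}\,|\,\pi_\scrR\psi_\ell)_\scrM
=(\psi_{\tilde\ell}\,|\,\psi_\ell)_\scrR \\
&=4\pi^2\,\delta\big(m(\ell-\tilde\ell)\big)\,\big(1-\tanh(2\pi\ell)\big).
\end{align*}
Comparing this with $(\psi_{\tilde\ell}\,|\,\lambda_\ell\,\psi_\ell)_\scrM=\lambda_\ell\,8\pi^2\,\delta(m(\ell-\tilde\ell))$, which follows from~\eqref{pw1}, the two expressions agree for every~$\tilde\ell$ exactly when $\lambda_\ell=\tfrac12\big(1-\tanh(2\pi\ell)\big)$. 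Since~$\{\psi_{\tilde\ell}\}$ is total in~$\H_-$ and both $\pi_-\pi_\scrR\psi_\ell$ and~$\lambda_\ell\,\psi_\ell$ lie in~$\H_-$, I would then conclude that $\pi_-\pi_\scrR\psi_\ell=\lambda_\ell\,\psi_\ell$. To transfer this to~$\sigma_\scrR=\pi_\scrR\pi_-\pi_\scrR$ of~\eqref{sigmaRform}, I apply~$\pi_\scrR$ and use~$\pi_\scrR^2=\pi_\scrR$ to get $\sigma_\scrR(\pi_\scrR\psi_\ell)=\pi_\scrR\pi_-\pi_\scrR\psi_\ell=\pi_\scrR\big(\pi_-\pi_\scrR\psi_\ell\big)=\lambda_\ell\,\pi_\scrR\psi_\ell$, which is the asserted eigenvalue equation. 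Finally, rewriting $\lambda_\ell=\tfrac12(1-\tanh(2\pi\ell))=(1+e^{4\pi\ell})^{-1}$ as in~\eqref{lamldef} shows that~$\lambda_\ell$ decreases continuously and monotonically from~$1$ to~$0$ as~$\ell$ runs through~$\R$, so that, together with the~$\delta$-normalization, the spectrum of~$\sigma_\scrR$ is purely absolutely continuous and fills~$[0,1]$.

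The only genuine subtlety I anticipate is the passage from the equality of smeared matrix elements to the operator identity~$\pi_-\pi_\scrR\psi_\ell=\lambda_\ell\psi_\ell$: because the~$\psi_\ell$ are merely distributional (not square-integrable), this must be justified through the totality of~$\{\psi_{\tilde\ell}\}$ in~$\H_-$ in the same distributional sense used in~\cite{rindler}, rather than by a naive Hilbert-space completeness argument. Everything else is bookkeeping with the two projection operators.
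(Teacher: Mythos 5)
Your proposal is correct and follows essentially the same route as the paper: both rest on Lemma~\ref{lemplanewave} together with the identity~$\sigma_\scrR = \pi_\scrR\,\pi_-\,\pi_\scrR$ and the distributional completeness of the family~$\{\psi_\ell\}$ in~$\H_-$. The only difference is presentational --- the paper writes out the spectral resolution~$\pi_- = \frac{m}{8\pi^2}\int |\psi_\ell)(\psi_\ell|\,d\ell$ and computes~$\pi_-\pi_\scrR\pi_-$ directly, whereas you test matrix elements of~$\pi_-\pi_\scrR$ against the total family~$\{\psi_{\tilde\ell}\}$ --- but these are equivalent formulations of the same computation.
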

\Proof According to~\eqref{pw1}, the projection operator~$\pi_-$ can be written as
\[ \pi_- = \frac{m}{8 \pi^2} \int_{-\infty}^\infty |\psi_\ell ) (\psi_\ell \,|\: d\ell \:. \]
A direct computation using~\eqref{pw2} yields
\begin{align*}
\pi_- \pi_\scrR \pi_-
&= \bigg( \frac{m}{8 \pi^2} \bigg)^2 \int_{-\infty}^\infty d\ell \int_{-\infty}^\infty d\ell' \;
|\psi_\ell )\; (\psi_\ell \,|\, \psi_{\tilde{\ell}} )_\scrR\; (\psi_{\tilde{\ell}} \,| \\
&= \frac{m}{8 \pi^2}
\int_{-\infty}^\infty \frac{1}{2}\:\Big( 1 - \tanh(2 \pi \ell) \Big) |\psi_\ell )(\psi_\ell \,|\: d\ell \:.
\end{align*}
Using the formula~\eqref{sigmaRform} for~$\sigma_\scrR$, it follows, again using~\eqref{pw2}, that
\[ \sigma_\scrR \:\pi_\scrR \psi_\ell =  \pi_\scrR \,  \pi_- \, \pi_\scrR   \,\psi_\ell 
= \lambda_\ell\, \pi_\scrR \,\psi_\ell \]
with~$\lambda_\ell$ according to~\eqref{lamldef}. This gives the result.
\QED

As is shown in detail in~\cite[Section~10]{rindler}, the solutions of the form~\eqref{planewave} are also eigenfunctions of the Rindler Hamiltonian~\eqref{rindH}. More precisely,
\beq \label{HRspec}
H_\scrR \psi_\ell = - \ell \psi_\ell \:.
\eeq
Therefore, we can write the result of the above Theorem as
\beq \label{sigmaR}
\sigma_\scrR = \big( 1 + e^{-4 \pi H_\scrR} \big)^{-1} \:.
\eeq
This coincides with the operator~$Q(\beta)$ in~\cite{galanda2023relative} with inverse temperature
given by
\[ \beta = 4 \pi \:. \]

We finally explain how the above results can be generalized to the massless case.
\begin{Remark} \label{remm0} {\em{ {\bf{(massless Dirac particles)}}
The massless case~$m=0$ can be obtained from the above analysis by a suitable limiting case
obtained as follows. Clearly, taking the naive limit~$m \searrow 0$ in the ansatz~\eqref{parameter}
is not sensible because the right side tends to zero. This problem can be avoided by
also letting~$\theta=\theta(m)$ tend to plus or minus infinity. More precisely, we take the limits such that
\[ \lim_{m \searrow 0} \frac{m}{2} e^{\pm \theta_\pm(m)} = p \qquad \text{with} \qquad p>0 \:, \]
because then~\eqref{parameter} reduces to
\[ \begin{pmatrix} \omega \\ k \end{pmatrix} = s \begin{pmatrix} p \\ \pm p \end{pmatrix} \]
(thus~$p$ is the absolute value of the momentum, whereas~$\pm$ determines its sign).
Using this method, we can also take the limit of all other formulas. For example,
in the case of plus signs,
\[  m\: \sinh \beta = \frac{m}{2}\: \Big( e^{\frac{\theta_+-\tilde{\theta}_+}{2}} - e^{\frac{-\theta_+ +\tilde{\theta}_+}{2}} \Big)
= e^{\frac{-\theta_+ - \tilde{\theta}_+}{2}} \: \Big( \frac{m}{2}\: e^{\theta_+} - \frac{m}{2}\: e^{\tilde{\theta}_+} \Big)\:. \]
After rescaling, this converges to~$p-\tilde{p}$. The other cases can be treated similarly.
The plane wave ansatz in the rapidity variable~\eqref{planewave} needs to be modified according to
\[ e^{-i \ell \theta} =  \exp \Big\{ \mp i \ell \log \Big( \frac{2p}{m} \Big) \Big\} 
= e^{\mp i \ell \log(2/m)}\: \exp \Big\{ \mp i \ell \log p \Big\} \:. \]
After dropping the irrelevant phase factor~$e^{\mp i \ell \log(2/m)}$, we can take the limit~$m \searrow 0$.
In this way, one sees that Theorem~\ref{thmspec} also holds in the massless case.
}} \QEDrem
\end{Remark}

\subsection{The relative entropy of vacuum excitations} \label{secrel1}
Our goal is to compute the relative entropy of the vacuum excitations introduced in Section~\ref{secexcite}.
Before entering the analysis, we give a brief outline of the our methods.
In~\cite[Theorem~A.7]{fermientropy} a formula was derived which expresses the relative entropy
of quasi-free fermionic states in terms of their reduced one-particle density operators.
This result has a shortcoming that it applies only if both the vacuum and the excited state
are {\em{particle-number preserving}}, meaning that
all two-point expectations involving two creation or two annihilation operators vanish,
\beq \label{particle-perserve}
\omega\big( \Psi^\dagger(\phi)\, \Psi^\dagger(\psi) \big) = 0 =  \omega\big( \Psi(\overline{\phi})\, \Psi(\overline{\psi}) \big) \:.
\eeq
This condition is satisfied for the vacuum state of Rindler spacetime.
However, it is {\em{not}} satisfied for the excited state.
For this reason, here we cannot apply~\cite[Theorem~A.7]{fermientropy}.
Instead, there are two alternative methods. The first method worked out in Appendix~\ref{appA}
is to compute the relative entropy in the Fock space using that the vacuum is a particle-number preserving
quasi-free state of the form considered in~\cite[Lemma~A.2]{fermientropy}.
The second, more general method worked out in Appendix~\ref{appB} is to extend the
result~\cite[Theorem~A.7]{fermientropy} to general Gaussian states, i.e.\ quasi-free states
which do {\em{not}} preserve the particle number.

We now state and derive a formula for the relative entropy based on the method in Appendix~\ref{appA}.
The method in Appendix~\ref{appB} will be explored in the next subsection (Section~\ref{secaltgauss}).
In Appendix~\ref{appA} it is shown in the finite-dimensional setting
how the relative entropy can be expressed in terms of the reduced one-particle density operator
of the vacuum and the form of excitation. The resulting formula from Theorem~\ref{thmexcite}
also applies in the infinite-dimensional setting if one simply replaces sums by convergent series,
giving the following result.
\begin{Thm} \label{thmrelrindler}
The relative entropy of the Rindler vacuum~$\omega$ and the excited state~$\tilde{\omega}$
can be expressed by
\[ S(\tilde{\omega}\Vert\omega) = 4 \pi \: \big\la f \:\big|\: H_\scrR \tanh \big(2 \pi H_\scrR \big) f \big\ra_{\H } \:. \]
\end{Thm}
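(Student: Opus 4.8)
The plan is to reduce the claim to the finite-dimensional relative-entropy formula of Theorem~\ref{thmexcite} and then to evaluate the resulting expression by functional calculus, using the operator identity~\eqref{sigmaR} together with the spectral data of Theorem~\ref{thmspec} and~\eqref{HRspec}. The genuine mathematical work lies in Appendix~\ref{appA}; the content of the present statement is the explicit evaluation of that formula for the Rindler data.

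First I would invoke Theorem~\ref{thmexcite}. For the reflection-type excitation~$\tilde\Omega=\sqrt{2}\,B(\phi)\,\Omega$ of a particle-number-preserving quasi-free state, that theorem expresses the relative entropy solely through the vacuum reduced one-particle density operator~$\sigma_\scrR$ and the excitation function~$f$ of~\eqref{Bphi} and~\eqref{ftheta}. Written out, the resulting quadratic form reads
\[
S(\tilde{\omega}\Vert\omega) = \big\la f \,\big|\, (\1 - 2\sigma_\scrR)\,\big( \log(\1-\sigma_\scrR) - \log \sigma_\scrR \big)\, f \big\ra_{\H} \:.
\]
The scalar weight~$(1-2\lambda)\,\log\frac{1-\lambda}{\lambda}$ attached to a vacuum mode of occupation~$\lambda$ is manifestly nonnegative and vanishes at~$\lambda=\tfrac12$, as it must for a relative entropy. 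Since, by Theorem~\ref{thmspec}, the operator~$\sigma_\scrR$ has purely absolutely continuous spectrum, the finite sums of Theorem~\ref{thmexcite} are to be read as spectral integrals; this is the meaning of ``replacing sums by convergent series.''

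Next I would carry out the functional calculus. By~\eqref{sigmaR} we substitute~$\sigma_\scrR=(\1+e^{-4\pi H_\scrR})^{-1}$, so that both factors in the quadratic form become functions of the self-adjoint Rindler Hamiltonian~$H_\scrR$. A direct computation gives
\[
\log(\1-\sigma_\scrR) - \log \sigma_\scrR = \log \frac{\1-\sigma_\scrR}{\sigma_\scrR} = \log e^{-4\pi H_\scrR} = -4\pi H_\scrR
\]
and
\[
\1 - 2\sigma_\scrR = \frac{e^{-4\pi H_\scrR}-\1}{e^{-4\pi H_\scrR}+\1} = -\tanh\big(2\pi H_\scrR\big) \:.
\]
Multiplying these two commuting operators yields~$(\1-2\sigma_\scrR)\big(\log(\1-\sigma_\scrR)-\log\sigma_\scrR\big)=4\pi\,H_\scrR\tanh(2\pi H_\scrR)$, and inserting this into the formula above gives exactly the asserted identity. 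Equivalently, one may diagonalize on the generalized eigenfunctions~$\pi_\scrR\psi_\ell$ of Theorem~\ref{thmspec}, on which~$\sigma_\scrR$ acts by~$\lambda_\ell=(1+e^{4\pi\ell})^{-1}$ and, by~\eqref{HRspec}, $H_\scrR$ acts by~$-\ell$; the per-mode weight then reduces to the nonnegative quantity~$4\pi\,\ell\tanh(2\pi\ell)$.

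The main obstacle I anticipate is not the algebra but the passage to the continuous spectrum. Because the~$\psi_\ell$ are $\delta$-normalized rather than genuine eigenfunctions (Lemma~\ref{lemplanewave}), the trace/series in Theorem~\ref{thmexcite} must be rewritten as a spectral integral in the boost variable~$\ell$, and one has to justify interchanging this integral with the Fock-space computation underlying Appendix~\ref{appA}. Convergence is the delicate point: the weight~$H_\scrR\tanh(2\pi H_\scrR)$ grows linearly in~$|\ell|$ and the logarithm diverges at the band edges~$\lambda\to 0,1$, so the finiteness of~$\la f \,|\, H_\scrR\tanh(2\pi H_\scrR)\, f\ra_{\H}$ hinges on sufficiently fast decay of~$f$. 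Since~$\phi\in\H_\scrR$ is smooth with compact support, its rapidity transform~\eqref{ftheta} decays rapidly, placing~$f$ in the form domain of~$H_\scrR\tanh(2\pi H_\scrR)$ and rendering the integral absolutely convergent. Verifying this decay, and with it the legitimacy of the spectral representation, is the step that requires genuine care.
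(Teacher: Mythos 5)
Your proposal is correct and follows essentially the same route as the paper: the paper's proof likewise invokes Theorem~\ref{thmexcite}, computes $s_k=-4\pi k$ and $1-2d_k=-\tanh(2\pi k)$ from the spectral data of Theorem~\ref{thmspec}, and then identifies the mode sum $4\pi\sum_k |f_k|^2\,k\tanh(2\pi k)$ with the quadratic form of $H_\scrR$ via~\eqref{HRspec}. Your only departure is cosmetic --- you phrase the per-mode computation as functional calculus in $\sigma_\scrR=(\1+e^{-4\pi H_\scrR})^{-1}$ --- and your closing remarks on the continuum limit and the decay of $f$ address a point the paper dispatches with the phrase ``replacing sums by convergent series.''
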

\Proof In preparation, we restate the results of the above computations
in a functional analytic language. A direct computation using~\eqref{pw1}
shows that the operator~$U$ defined by
\[ U \::\: L^2(\R, d\ell) \rightarrow \H_\scrM = \H_- \:,\quad
U f := \sqrt{\frac{m}{8 \pi^2}} \int_{-\infty}^\infty f(\ell)\: \psi_\ell\: d\ell \]
is isometric. Moreover, since the plane-wave solutions~$\psi(-1,\theta)$ span~$\H_-$,
it follows that it is even unitary. Now we can restate the result of Theorem~\ref{thmspec}
\[ U^{-1} \sigma_\scrR U = T_h \:, \]
where~$T_h$ is the multiplication operator
\[ T_h \::\: L^2(\R, d\ell) \rightarrow L^2(\R, d\ell)\:,\quad
(T_h f)(\ell) = h(\ell)\, f(\ell) \qquad \text{with} \qquad
h(\ell) = \frac{1}{1 + e^{4 \pi \ell}} \:. \]

We now want to apply Theorem~\ref{thmexcite} in Appendix~\ref{appA}.
This theorem was derived in finite dimensions. But by a standard approximation
argument, it also holds in separable Hilbert spaces, stating that
\beq \label{Srelform}
S(\tilde{\sigma}\Vert\sigma_\scrR) = - \la f \:|\: S\, (2D-1)\: f \ra_\H
= - ( f \:|\: S\, (2D-1)\: f )_\scrM
\eeq
(in the last step we use that~$\H_\scrR$ is a subspace of~$\H_\scrM$).
By unitary invariance, the last expectation value can be computed on~$L^2(\R, d\ell)$,
\beq \label{continuous}
S(\tilde{\sigma}\Vert\sigma_\scrR) 
= -\int_{-\infty}^\infty |f_\ell|^2\: s_\ell\: (2 d_\ell - 1)\:d\ell \:.
\eeq
Using the notation from Appendix~\ref{appA}, we have
\begin{align*}
d_\ell &= \lambda_\ell = \frac{1}{1 + e^{-4 \pi \ell}} \\
1-d_\ell &= \frac{e^{-4 \pi \ell}}{1 + e^{-4 \pi \ell}} \\
s_\ell &= \log \Big( \frac{1-d_\ell}{d_\ell} \Big) = \log \big( e^{-4 \pi \ell} \big) = - 4 \pi \ell \\
1-2d_\ell &= \frac{e^{-4 \pi \ell} - 1}{1 + e^{-4 \pi \ell}} = - \frac{1 - e^{-4 \pi \ell}}{1 + e^{-4 \pi \ell}}
= - \frac{e^{2 \pi \ell} - e^{-2 \pi \ell}}{e^{2 \pi \ell}  + e^{-2 \pi \ell}} = - \tanh ( 2 \pi \ell) \:.
\end{align*}
Employing these formulas in~\eqref{continuous}, we obtain
\[ S(\tilde{\sigma}\Vert\sigma_\scrR)
= 4 \pi \int_{-\infty}^\infty |f_\ell|^2\, \ell\, \tanh(2 \pi \ell) \:d\ell\:. \]
Using~\eqref{HRspec} gives the result.
\QED

\subsection{Alternative computation via general Gaussian states} \label{secaltgauss}
We now explain how the relative entropy can be computed using the formula
for the relative entropy for general Gaussian states derived in Appendix~\ref{appB}
(see Theorem~\ref{thmgauss}). Clearly, we get the same result as already obtained
in Theorem~\ref{thmrelrindler}. Nevertheless, it is instructive to see how the different methods
and formulas fit together.

According to~\eqref{bogu} and~\eqref{Udef},
\[ \tilde{\Psi}_n = f_n \,\big( \Psi^\dagger(f) + \Psi(\overline{f}) \big) - \Psi_n \:, \]
where we reverted the roles of the tilde and untilde operators.
Hence the unitary transformation~$\scrU$ in~\eqref{scrUdef} takes the form
\beq \label{Ubog}
\scrU = -\1 + \begin{pmatrix} |f \ra \la f| & |f \ra \la \overline{f}| \\ | \overline{f} \ra \la f| & |\overline{f} \ra \la \overline{f}| \end{pmatrix}\:.
\eeq
We transform to the normal modes of the untilde operators. Then, using the notation~\eqref{Tprp},
\begin{align*}
T &= \begin{pmatrix} D & 0 \\ 0 & \1-D \end{pmatrix} =: \frac{e^{-\mathscr{C}}}{2 \cosh \mathscr{C}} \qquad
\text{and thus} \\
\1-T &= \begin{pmatrix} \1-D & 0 \\ 0 & D \end{pmatrix} = \frac{e^{\mathscr{C}}}{2 \cosh \mathscr{C}}
\:,\qquad \frac{T}{\1-T} = e^{-2 \mathscr{C}} \:.
\end{align*}
We conclude that
\[ {\mathscr{C}} = -\frac{1}{2} \: \log \Big( \frac{T}{\1-T} \Big)
= \frac{1}{2}\: \begin{pmatrix} S & 0 \\ 0 & -S \end{pmatrix} 
\qquad \text{with} \qquad S :=  \log \Big( \frac{\1-D}{D} \Big) \:. \]
Next,
\begin{align*}
&\tilde{T}-T = \scrU \,T\, \scrU^{-1} -T \\
&= - \begin{pmatrix} |f \ra \la f| & |f \ra \la \overline{f}| \\ | \overline{f} \ra \la f| & |\overline{f} \ra \la \overline{f}| \end{pmatrix} \: T 
- T\: \begin{pmatrix} |f \ra \la f| & |f \ra \la \overline{f}| \\ | \overline{f} \ra \la f| & |\overline{f} \ra \la \overline{f}| \end{pmatrix} \\
&\quad\: + \begin{pmatrix} |f \ra \la f| & |f \ra \la \overline{f}| \\ | \overline{f} \ra \la f| & |\overline{f} \ra \la \overline{f}| \end{pmatrix}\,T\, \begin{pmatrix} |f \ra \la f| & |f \ra \la \overline{f}| \\ | \overline{f} \ra \la f| & |\overline{f} \ra \la \overline{f}| \end{pmatrix} \\
&= \begin{pmatrix} -|f \ra \la f|\,D & |f \ra \la \overline{f}|\,D \\ -| \overline{f} \ra \la f|\,D & |\overline{f} \ra \la \overline{f}| \,D \end{pmatrix}
- \begin{pmatrix} 0 & |f \ra \la \overline{f}| \\ 0 & |\overline{f} \ra \la \overline{f}| \end{pmatrix} \\
&\quad\:+ \begin{pmatrix} -D\,|f \ra \la f| & -D\,|f \ra \la \overline{f}| \\ D\,| \overline{f} \ra \la f| & D\,|\overline{f} \ra \la \overline{f}| \end{pmatrix} 
- \begin{pmatrix} 0 & 0 \\ | \overline{f} \ra \la f| & |\overline{f} \ra \la \overline{f}| \end{pmatrix}
\\
&\quad\: + \begin{pmatrix} 0 & |f \ra \la \overline{f}| \\ 0 & |\overline{f} \ra \la \overline{f}| \end{pmatrix}\, \begin{pmatrix} |f \ra \la f| & |f \ra \la \overline{f}| \\ | \overline{f} \ra \la f| & |\overline{f} \ra \la \overline{f}| \end{pmatrix} \\
&= \begin{pmatrix} -|f \ra \la f|\,D & |f \ra \la \overline{f}|\,D \\ -| \overline{f} \ra \la f|\,D & |\overline{f} \ra \la \overline{f}| \,D \end{pmatrix}
- \begin{pmatrix} 0 & |f \ra \la \overline{f}| \\ 0 & |\overline{f} \ra \la \overline{f}| \end{pmatrix} \\
&\quad\:+ \begin{pmatrix} -D\,|f \ra \la f| & -D\,|f \ra \la \overline{f}| \\ D\,| \overline{f} \ra \la f| & D\,|\overline{f} \ra \la \overline{f}| \end{pmatrix} 
- \begin{pmatrix} 0 & 0 \\ | \overline{f} \ra \la f| & |\overline{f} \ra \la \overline{f}| \end{pmatrix} \\
&\quad\: + \begin{pmatrix} |f \ra \la f| & |f \ra \la \overline{f}| \\ | \overline{f} \ra \la f| & |\overline{f} \ra \la \overline{f}| \end{pmatrix} \\
&= \begin{pmatrix} -|f \ra \la f|\,D & |f \ra \la \overline{f}|\,D \\ -| \overline{f} \ra \la f|\,D & |\overline{f} \ra \la \overline{f}| \,D \end{pmatrix} + \begin{pmatrix} -D\,|f \ra \la f| & -D\,|f \ra \la \overline{f}| \\ D\,| \overline{f} \ra \la f| & D\,|\overline{f} \ra \la \overline{f}| \end{pmatrix} + \begin{pmatrix} |f \ra \la f| &0 \\ 0 & -|\overline{f} \ra \la \overline{f}| \end{pmatrix} \:.
\end{align*}
Employing these formulas in the relation~\eqref{Srel2} in Theorem~\ref{thmgauss}, we obtain
\begin{align*}
S(\tilde{W}\Vert W) &= \tr_\H \Big\{ (\tilde{T} - T) \:\mathscr{C} \Big\} \\
&= \frac{1}{2} \: \tr_\H \Big\{ (\tilde{T} - T) \:\begin{pmatrix} S & 0 \\ 0 & -S \end{pmatrix} \Big\} \\
&= \frac{1}{2} \: \tr_\H \Big( -4\, |f \ra \la f|\,D\,S + 2 \,|\overline{f} \ra \la \overline{f}| \,S \Big) \\
&= \la f \,|\, (1-2 D) S \:f \ra \:,
\end{align*}
and this coincides precisely with~\eqref{Srelform}.

\section{Computation using modular theory} \label{secmodular}
 
In order to calculate the relative entropy using modular theory, we use the framework of the self-dual CAR algebra introduced by~\cite{araki1970quasifree} and explicit formulas given in~\cite{galanda2023relative}. In the following, we concisely summarize the paper's content in a manner tailored to the application; for more details, please refer to the original papers. 

First, let $\H$ be a complex Hilbert space with scalar product 
$\braket{\cdot\vert\cdot}_{\H}$, together with an anti-linear operator $\Gamma : \H \to
\H$ which is anti-unitary, so that 
\[ \Gamma^2 = {\bf 1} \quad \text{and} \quad 
    \braket{\Gamma {f}\vert\Gamma {g}}_{\H}= \braket{{g}\vert{f}}_{\H} \quad \ \
    ({f},{g} \in \H) \]

\begin{Def}
{\rm~\cite{araki1970quasifree} } \ \ {\tt CAR}$(\H,\Gamma)$, the $C^*$-algebra of the {\rm self-dual canonical anti-commu\-tation 
relations} with respect to $(\H,\Gamma)$, is defined as the (up to $C^*$-equivalence unique)
unital $C^*$-algebra generated by the unit element ${\bf 1}$ and elements $B({f})$, ${f} \in 
\H$, fulfilling the conditions
\begin{gather*}
{f} \mapsto B({f})  \ \ \text{is complex-linear}\,, \quad B({f})^* = B(\Gamma {f})\,, \quad \text{and}\\
[B({f})^*, B({g})]_+  = \braket{{f}\vert{g}}_{\H} {\bf 1}
\end{gather*}
for all ${f},{g} \in \H$, where $[X,Y]_+ = XY + YX$ denotes the algebraic
{\it anti-commutator bracket}.
\end{Def}
\begin{Def} \label{def:stand}
    Let $\omega$ be a standard state on {\tt CAR}$(\H,\Gamma)$ and let $\Omega$ be its vector representative.
\\[6pt]
(a) \ \ For any function ${f}$ with the following properties
\begin{enumerate}
    \item $\Gamma{f} = {f}$ and 
    \item $\braket{{f}\vert{{f}}}_{\H} = 2$
\end{enumerate}
  the CAR-generator \( B({f}) \) is unitary and we call the state vector $F\Omega$, where $F = \pi_\omega(B({f}))$, 
a  \emph{single-excitation state} relative to $\omega$. This state is denoted by $\omega_{f}$.
\\[6pt] 
(b) \ \ We call a standard state $\omega$ a {\bf{standard 1-particle flow state (S1PFS)}} if there 
is a 1-particle flow $\{u_t\}_{t \in \mathbb{R}}$ on $(\H,\Gamma)$ (i.e.\ a 1-parametric
unitary group with the property $\Gamma u_t=u_t\Gamma$) such that 
\[ 
\Delta^{it} \pi_\omega(B({f})) \Delta^{-it} = \pi_\omega \big( B(u_t{f}) \big) \]
holds for all ${f} \in \H$ and all $t \in \mathbb{R}$. 
    
\end{Def}

We define the {\it 2-point function}  w.r.t.\ any state on {\tt CAR}$(\H,\Gamma)$ by the following relation 
\begin{align*}
 W^{(2)}_\omega({f},{g}) &= \omega(B({f})B({g}))
 \\&= \braket{{\sf \Gamma f}\vert{Q_\omega{g}}}_{\H}\quad \ \ ({f},{g} \in \H)\,.
\end{align*}
where $Q_\omega : \H \to \H$ is a linear operator,  called the {\it base polarization} of $\omega$,
characterized by the following properties (cf.~\cite{araki1970quasifree}):
\begin{align*}
 0 \le Q_\omega = &\ Q^*_\omega \le {\bf 1}\,,  \quad Q_\omega + \Gamma Q_\omega \Gamma = {\bf 1}\,, \quad \text{and} \\
 & W^{(2)}_\omega(\Gamma{f},{g}) = \braket{{\sf \Gamma f}\vert{Q_\omega{g}}}_{\H}.
\end{align*}
   For a S1PFS $\omega$, one can explicitly state the form
of $Q_\omega$. To this end, we assume that a one-particle space flow \(\{u_t\}_{t \in \mathbb{R}}\) on \(\H\) is given. The flow is expressed as  
\[
u_t = {\rm e}^{-itH},   
\]  
where \(H\) is a self-adjoint operator defined on a dense domain \({\rm dom}(H)\) in \(\H\) and for \(\beta > 0\), we then define  
\begin{align} \label{eq:Qbeta}  
Q_{(\beta)} = ({\bf 1} + {\rm e}^{-\beta H})^{-1}.
\end{align}
If the function \( {f} \in \H \) satisfies \( \Gamma {f} = {f} \) and has the norm $\Vert f\Vert^2_{\H}= 2$, then the CAR-generator \( B({f}) \) is unitary. This follows directly from the defining relations of the CAR-generators. 

Building on the previous results and definitions, we now introduce the key quantity of interest, namely the relative entropy.
\begin{Thm}\label{thms1}
 Let $\omega$ be a {\rm S1PFS} on {\tt CAR}$(\H,\Gamma)$, where 
 $\{u_t\}_{t \in \mathbb{R}}$ with $u_t = {\rm e}^{-itH}$ denotes the associated one-particle flow. Then, for any single-excitation state $\omega_{f}$\footnote{See Definition \ref{def:stand}.} relative to $\omega$, the relative entropy is 
 \begin{align} \label{eq:1exent}
  S(\omega_{f} \Vert \omega) & = i
    \left. \frac{d}{dt} \right|_{t = 0} W^{(2)}_\omega({f},u_t{f}) 
    \\
    &= 4\pi  \braket{{f}\vert{Q_{(4\pi)}H{f}}}_{\H} \label{eq:2exent}
\end{align}
where $Q_{(4\pi)}$ is given by~\eqref{eq:Qbeta} for $\beta=4\pi$.
Thus, the relative entropy $S(\omega_{f} \Vert \omega)$ is finite if 
${f}$ is contained in the 
domain of $|H|^{1/2}$.
\end{Thm}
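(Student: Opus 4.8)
The plan is to begin from the Araki--Uhlmann expression for the relative entropy recalled in the introduction. The single-excitation state $\omega_{f}$ has the vector representative $F\Omega$ with $F=\pi_\omega(B({f}))$, which is unitary; thus, taking $U=F$ in the formula $S(\tilde\omega\Vert\omega)=-\langle U\Omega\,|\,\log(\Delta)\,U\Omega\rangle$, I would reduce the claim to evaluating
\[ S(\omega_{f}\Vert\omega)=-\big\langle F\Omega \,\big|\, \log(\Delta)\,F\Omega\big\rangle \:. \]
Before differentiating I record the three structural inputs I will use: first, since $\Gamma {f}={f}$ one has $F^{*}=\pi_\omega(B(\Gamma {f}))=F$, so $F$ is a self-adjoint unitary; second, the vacuum is modular invariant, $\Delta^{it}\Omega=\Omega$; and third, by the S1PFS property $\Delta^{it}F\Delta^{-it}=\pi_\omega(B(u_t {f}))$, which defines the one-particle modular flow $u_t$.

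The core step is to introduce $g(t):=\langle F\Omega\,|\,\Delta^{it}F\Omega\rangle$ and evaluate it in two ways. Using $\Delta^{it}\Omega=\Omega$ to write $\Delta^{it}F\Omega=(\Delta^{it}F\Delta^{-it})\Omega$, and then the intertwining relation together with $F^{*}=F$,
\[ g(t)=\big\langle F\Omega\,\big|\,(\Delta^{it}F\Delta^{-it})\,\Omega\big\rangle
=\big\langle \Omega\,\big|\,F\,\pi_\omega(B(u_t {f}))\,\Omega\big\rangle
=\omega\big(B({f})\,B(u_t {f})\big)=W^{(2)}_\omega({f},u_t {f}) \:. \]
On the other hand, differentiating the spectral representation of $g$ at $t=0$ gives $g'(0)=i\,\langle F\Omega\,|\,\log(\Delta)\,F\Omega\rangle$, so $\langle F\Omega\,|\,\log(\Delta)\,F\Omega\rangle=-i\,g'(0)$. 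Combining the two evaluations yields $S(\omega_{f}\Vert\omega)=i\,g'(0)=i\,\frac{d}{dt}\big|_{t=0}W^{(2)}_\omega({f},u_t {f})$, which is the first asserted identity.

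To obtain the explicit form I insert the base polarization. By definition $W^{(2)}_\omega({f},u_t {f})=\langle \Gamma {f}\,|\,Q_\omega\,u_t {f}\rangle=\langle {f}\,|\,Q_{(4\pi)}\,u_t {f}\rangle$, where I used $\Gamma {f}={f}$ and the S1PFS value $Q_\omega=Q_{(4\pi)}=({\bf 1}+{\rm e}^{-4\pi H})^{-1}$. Since the flow implementing the modular group at the one-particle level is generated by the modular Hamiltonian $4\pi H$ (equivalently $\Delta={\rm e}^{-4\pi H}$ on the one-particle space, the content of the identification $\beta=4\pi$), differentiation gives $\frac{d}{dt}|_{t=0}u_t {f}=-i\,4\pi H{f}$, and hence
\[ i\,\frac{d}{dt}\Big|_{t=0}\big\langle {f}\,\big|\,Q_{(4\pi)}\,u_t {f}\big\rangle
=4\pi\,\big\langle {f}\,\big|\,Q_{(4\pi)}\,H{f}\big\rangle \:, \]
which is the second identity.

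For finiteness I would treat $4\pi\langle {f}\,|\,Q_{(4\pi)}H{f}\rangle=4\pi\langle {f}\,|\,\varphi(H){f}\rangle$ through the functional calculus, with $\varphi(\lambda)=\lambda\,(1+{\rm e}^{-4\pi\lambda})^{-1}$, and use the elementary bound $|\varphi(\lambda)|\le C(|\lambda|+1)$ (indeed $0\le\varphi(\lambda)\le\lambda$ for $\lambda\ge0$, while $\varphi(\lambda)\to0$ exponentially as $\lambda\to-\infty$). This gives $|S(\omega_{f}\Vert\omega)|\le 4\pi C\big(\||H|^{1/2}{f}\|^{2}+\|{f}\|^{2}\big)$, finite precisely when ${f}\in\text{dom}(|H|^{1/2})$. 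The main obstacle is not the algebra but the accompanying analysis: one must verify that $F\Omega$ lies in $\text{dom}(\log\Delta)$ --- equivalently the invariance $F\,\text{dom}(\Delta^{1/2})\subset\text{dom}(\Delta^{1/2})$ assumed in the introduction --- so that the Araki--Uhlmann quantity is defined, and one must justify differentiating $g$ at $t=0$ under the unbounded generator $\log\Delta$ and interchanging this derivative with the inner product. This regularity is exactly what the hypothesis ${f}\in\text{dom}(|H|^{1/2})$ is designed to secure.
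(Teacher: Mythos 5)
You should know at the outset that the paper contains no proof of Theorem~\ref{thms1}: Section~\ref{secmodular} states explicitly that it only summarizes the framework and formulas of~\cite{galanda2023relative}, from which the theorem is imported. So your proposal can only be measured against that source, and as a reconstruction of its argument it is essentially correct and takes the standard route: $F=\pi_\omega(B(f))$ is a self-adjoint unitary in $\mathcal{M}$ (since $\Gamma f=f$ and $\langle f|f\rangle_\H=2$), the Araki--Uhlmann formula recalled in Section~\ref{secintro} reduces the entropy to $-\langle F\Omega|\log\Delta\, F\Omega\rangle$, and the two evaluations of $g(t)=\langle F\Omega|\Delta^{it}F\Omega\rangle$ --- via $\Delta^{it}\Omega=\Omega$, $F^*=F$ and the S1PFS intertwining on one side, and via the spectral calculus on the other --- give exactly \eqref{eq:1exent}. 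Your bound $|\lambda|(1+e^{-4\pi\lambda})^{-1}\le C(1+|\lambda|)$ then correctly yields the finiteness claim for $f\in\mathrm{dom}\,|H|^{1/2}$.

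Two caveats. First, the factor bookkeeping between \eqref{eq:1exent} and \eqref{eq:2exent} deserves to be made explicit rather than asserted. Definition~\ref{def:stand}(b) declares $u_t$ to be the flow implementing the modular group, so under the literal hypothesis $u_t=e^{-itH}$ the one-particle modular generator is $H$, the relation \eqref{DelD} forces $Q_\omega=(1+e^{-H})^{-1}=Q_{(1)}$, and \eqref{eq:1exent} evaluates to $\langle f|Q_{(1)}Hf\rangle$. The form \eqref{eq:2exent} holds only under the complementary reading, which is the one you adopt: $H$ is the physical (boost) generator, the modular flow is $e^{-it(4\pi H)}$, equivalently $\Delta|_{\Fock^1}=e^{-4\pi H}$ and $Q_\omega=Q_{(4\pi)}$, whence $\tfrac{d}{dt}\big|_{t=0}u_tf=-4\pi iHf$. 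Your two ingredients are paired consistently, and this is the reading demanded by the Rindler application (it reproduces \eqref{sigmaR} and Theorem~\ref{thmrelrindler}); but it silently contradicts the hypothesis ``$u_t=e^{-itH}$'' as written, so a clean proof must state which flow generates the modular group and which operator enters \eqref{eq:2exent}. This tension is inherited from the paper's condensed statement, not created by you. Second, the domain analysis --- that $F\Omega$ lies in the form domain of $\log\Delta$ precisely when $f\in\mathrm{dom}\,|H|^{1/2}$, and that the differentiation of $g$ at $t=0$ is legitimate --- is acknowledged but deferred; in~\cite{galanda2023relative} this is the technically substantial part of the proof. As a final remark, since $\Gamma u_t=u_t\Gamma$ forces $\Gamma H\Gamma=-H$ while $\Gamma f=f$ makes the spectral measure of $f$ even, your formula symmetrizes to $S=2\pi\langle f|H\tanh(2\pi H)f\rangle\ge 0$, which makes positivity manifest and ties your result directly to Theorem~\ref{lemmalb}.
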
 \label{thm:AandB} 
Next, we calculate the relative entropy explicitly for the case of the Rindler spacetime. In this particular example, the Hamiltonian is given by the boost operator in the $0-1$ direction, denoted by $H_{\scrR}$. It generates Lorentz boosts with matrices $\Lambda_t$ of the following form
\[ 
 \Lambda_t  =  \left(\begin{matrix} \cosh{t} &\sinh{t} \\\sinh{t}  & \cosh{t}\end{matrix}\right). \]
The main expression we are interested in, is given according to Theorem~\ref{thms1}  by
\begin{equation}\label{eqrelentrop}
    S(\omega_f \Vert \omega) = 4\pi\braket{{f}\vert{Q_{(4\pi)}H_{\scrR}{f}}}_{\H} 
\end{equation}
where  $Q_{(4\pi)}=\left(1+e^{-4\pi H_{\scrR}}\right)^{-1}$ and $H_{\scrR}$ is the Rindler-Hamiltonian. The Rindler Hamiltonian can be deduced from the energy momentum tensor that is in the flat case given by
\begin{align*}
    \Theta^{\mu\nu}=\frac{i}{2} \bar{\Psi}\gamma^{\mu}\overleftrightarrow{\partial^{\nu}}\Psi
\end{align*} 
Next, we calculate the relative entropy for an explicit example. In particular, we take, following~\cite[Section 2.4]{hollands-sanders}, the Majorana field in $1+1$ dimensional Minkowski spacetime. The Hilbert space becomes in this example is  $L^2(\mathbb{R} ,dx;\mathbb{C}^2)$ and the inner product is the standard inner product on this space. The Dirac field operator in the representation is then given by the self-adjoint (since bounded and symmetric) $2$-component operator valued distribution that is given as by
\begin{align*}
 B^{\alpha}(x)=  \frac{1}{\sqrt{2}}\int_{\mathbb{R}}d\theta \left\{
\left(\begin{matrix} e^{\theta/2-i\pi/4}  \\ e^{-\theta/2+i\pi/4}  \end{matrix}\right)^{\alpha}
    e^{-ip(\theta)x}a(\theta)^{\dagger}+ \left(\begin{matrix} e^{\theta/2+i\pi/4}  \\ e^{-\theta/2-i\pi/4}  \end{matrix}\right)^{\alpha}
    e^{+ip(\theta)x}a(\theta)  \right\} ,
\end{align*}
where $x=(x_0,x_1)$,  $p(\theta)=(-m\cosh{\theta}, m\sinh{\theta})$ and $a(\theta), a^{\dagger}(\theta)$ fulfill anti-commutation relations. Note that the (infinitesimal) boosted\footnote{As per standard convention, we use the symbol $S(\Lambda)$ to denote the boosts in the spinor case.} spinor is given by
\[ i \frac{d}{dt}\bigg|_{t = 0} S(\Lambda_t)B^{\alpha}(\Lambda_{-t}y)= \Big( \frac{1}{2}\:\sigma^{01}
-i \big(y^0\partial^1+y^1\partial^0 \big) \Big)\,B^{\alpha}(y)=H_{\scrR}B^{\alpha}(y) \:. \]
where by a slight abuse of notation we use the same symbol $H_{\scrR}$ for the one-particle and second-quantized operator.
Next, we define, equivalently to~\eqref{Bphi}, \eqref{ftheta}), the smeared Majorana field  operator $B(\phi)$\footnote{Note we denote the representation of the operator $B$ and the operator with the same symbol.}  
\begin{align}\label{bfi}
    B(\phi)&=\sum_{\alpha=1}^2 \int \phi^{\alpha}(0,x)B^{\alpha}(x) \,dx
\\&=\int_{-\infty}^\infty d\theta \Big( f(\theta)\: a(\theta) + \overline{f(\theta)}\: a(\theta)^\dagger \Big)
\end{align}
It is easy to check that this satisfies the relation 
\begin{align*}
    \{B(\phi) ,B(\psi)\}&=\frac{m}{2\pi}\int d\theta \left(f(\theta)\overline{g(\theta)}+\overline{f(\theta)}g(\theta)\right)
   =\langle \phi\vert \psi\rangle 
\end{align*}
where we used Equations~\eqref{ftheta},~\eqref{carb} and used the notation 

$$g(\theta) := \sum_{\alpha=1}^2 \int_{-\infty}^\infty \psi^\alpha(0,x)\: 
\begin{pmatrix} e^{\theta/2+i \pi/4} \\ e^{-\theta/2-i \pi/4} \end{pmatrix}^\alpha\: e^{i k x} \: dx \:.$$   
Next, we calculate the action of the Rindler Hamiltonian $H_{\scrR}$ on the operator $B(\phi)$ using its action on the operator $B^{\alpha}$.
\begin{Lemma} let $B (\phi)$ denote the operator  given in Equation~\eqref{bfi}.  Then, the action of the Rindler Hamiltonian $H_{\scrR}$ on the operator $B(\phi)$ is explicitly given by
\[ H_{\scrR}  B(\phi)=     \int_{-\infty}^\infty \Big( f(\theta)\: i\frac{\partial}{\partial \theta}a(\theta)
+ \overline{f(\theta)}\: \big(-i\frac{\partial}{\partial \theta}a(\theta) \big)^\dagger \Big)\: d\theta \:. \]

\end{Lemma}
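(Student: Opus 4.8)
The plan is to feed the plane-wave expansion of $B^\alpha(x)$ into the pointwise identity established just before the statement,
\[ H_\scrR B^\alpha(y) = \Big( \tfrac{1}{2}\sigma^{01} - i\big(y^0\partial^1 + y^1\partial^0\big)\Big) B^\alpha(y) \:, \]
and to use linearity of the smearing. Writing $B(f) = \sum_\alpha\int f^\alpha(x)\,B^\alpha(x)\,d^2x$ and commuting $H_\scrR$ through the integral, the whole problem reduces to understanding how the first-order operator $\tfrac12\sigma^{01} - i(x^0\partial^1 + x^1\partial^0)$ acts on the two mode functions in the expansion of $B^\alpha$, namely $\f^\alpha(\theta)\,e^{\mp i p(\theta)x}$ with amplitudes $\f(\theta) = \big(e^{\theta/2\mp i\pi/4},\, e^{-\theta/2\pm i\pi/4}\big)^{\!\top}$ and $p(\theta) = (-m\cosh\theta, m\sinh\theta)$.

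The heart of the lemma is the claim that this operator acts on each mode function as the generator of rapidity translations, i.e.\ as $-i\,\partial_\theta$. I would verify this by splitting into spin and orbital parts. For the spin part one computes directly $\tfrac12\sigma^{01} = \mathrm{diag}(-\tfrac i2,\tfrac i2)$ (which in $2$D equals $\tfrac12\gamma^5$), and applying it to the amplitude $\f(\theta)$, whose components are $e^{\pm\theta/2}$ up to constant phases, reproduces $-i\,\partial_\theta \f(\theta)$. For the orbital part one uses that a boost $\Lambda_t$ shifts rapidity, $\Lambda_t\,p(\theta) = p(\theta - t)$, so that the Killing transport of the plane wave $e^{\mp i p(\theta)x}$ is again $-i\,\partial_\theta$ of it. By the product rule in $\theta$ the two contributions assemble into
\[ \Big(\tfrac12\sigma^{01} - i\big(x^0\partial^1 + x^1\partial^0\big)\Big)\big(\f^\alpha(\theta)\,e^{\mp ip(\theta)x}\big) = -i\,\partial_\theta\big(\f^\alpha(\theta)\,e^{\mp ip(\theta)x}\big) \:. \]
This is the infinitesimal version of the statement that the modes~\eqref{planewave} diagonalize $H_\scrR$, consistent with~\eqref{HRspec}.

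With this identity in hand, I would substitute it back so that $H_\scrR B(f)$ becomes $\sum_\alpha\int d^2x\, f^\alpha(x)\int d\theta\,(-i\partial_\theta)$ applied to the mode functions times $a(\theta)^\dagger$ or $a(\theta)$. The final move is an integration by parts in $\theta$: the boundary terms vanish for test functions $f$ with suitable decay, and transferring $\partial_\theta$ from the mode function onto the ladder operator flips the sign, turning $-i\partial_\theta$ acting on the mode into $+i\partial_\theta$ acting on $a(\theta)$ (resp.\ $a(\theta)^\dagger$). Carrying out the $x$-integrals and recognizing the resulting overlap integrals as $f(\theta)$ and $\overline{f(\theta)}$ via~\eqref{ftheta} and~\eqref{bfi}, the annihilation term becomes $f(\theta)\,i\partial_\theta a(\theta)$ and the creation term becomes $\overline{f(\theta)}\,i\partial_\theta a(\theta)^\dagger = \overline{f(\theta)}\,\big(-i\partial_\theta a(\theta)\big)^\dagger$, which is exactly the claimed formula.

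The main obstacle is the convention-consistent bookkeeping of signs in the rapidity-derivative identity. The spin part contributes a clean $-i\partial_\theta$, but the orbital part requires careful handling of the raised indices in $\partial^0,\partial^1$ and of the direction in which the boost shifts the rapidity; since the present (modular) conventions for $p(\theta)$ and for the phases of the plane waves differ slightly from those in Section~\ref{secrindlerred}, one must ensure that the spin and orbital contributions carry the \emph{same} sign so that they genuinely combine into a single derivative. Once this identity is pinned down with a fixed convention, and the vanishing of the $\theta$-boundary terms is justified, the remainder is routine.
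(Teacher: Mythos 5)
Your proposal is correct and follows essentially the same route as the paper: both start from the pointwise identity $H_\scrR B^\alpha = \big(\tfrac12\sigma^{01} - i(x^0\partial^1+x^1\partial^0)\big)B^\alpha$, use that the orbital part yields $-i\partial_\theta$ on the plane waves, and integrate by parts in $\theta$ to shift the derivative onto $a(\theta)$ and $a(\theta)^\dagger$. The only difference is organizational: you assemble the spin and orbital contributions into a single total derivative $-i\partial_\theta\big(\f^\alpha(\theta)e^{\mp ip(\theta)x}\big)$ \emph{before} integrating by parts, whereas the paper integrates by parts first and observes that the resulting amplitude-derivative term cancels the spin term $\tfrac12\sigma^{01}B^\alpha$ --- the same algebra in a different order.
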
 
\begin{proof}
The action of the Rindler Hamiltonian $H_{\scrR}$ on the operator $B^{\alpha}$ is  
\[ H_{\scrR} B^{\alpha}(x)= \Big( \frac{1}{2}\:\sigma^{01}-i \big(x^0\partial^1+x^1\partial^0 \big) \Big)\:B^{\alpha}(x) \:, \]
where the matrix $\sigma^{01}$ is defined by
\[ \sigma^{01}=\frac{1}{2}[\gamma^0,\gamma^1] :=
\begin{pmatrix}- i &0 \\0
& i\end{pmatrix}   \:. \]
We begin by evaluating the first part of the Rindler operator, rendering 
\begin{align*}&\frac{1}{2}\ \sigma^{01}B^{\alpha}(x)  \\&=
 \frac{1}{2\sqrt{2}}\int_{\mathbb{R}}d\theta \left\{
  \sigma^{01}\left(\begin{matrix} e^{\theta/2-i\pi/4}  \\ e^{-\theta/2+i\pi/4}  \end{matrix}\right)^{\alpha}
    e^{-ip(\theta)x}a(\theta)^{\dagger}+  \sigma^{01} \left(\begin{matrix} e^{\theta/2+i\pi/4}  \\ e^{-\theta/2-i\pi/4}  \end{matrix}\right)^{\alpha}
    e^{+ip(\theta)x}a(\theta)  \right\}  \\
    &= \frac{1}{2\sqrt{2}}\int_{\mathbb{R}}d\theta \left\{
 \left(\begin{matrix} -i\, e^{\theta/2-i\pi/4}  \\  i\, e^{-\theta/2+i\pi/4}  \end{matrix}\right)^{\alpha}
    e^{-ip(\theta)x}a(\theta)^{\dagger}+   \left(\begin{matrix}  -i\, e^{\theta/2+i\pi/4}  \\   i\, e^{-\theta/2-i\pi/4}  \end{matrix}\right)^{\alpha}
    e^{+ip(\theta)x}a(\theta)  \right\} 
\end{align*}
For the  second part of the Rindler Hamiltonian $H_{\scrR}$, we note that we have the relation 
\begin{align*}
-i(x^0\partial^1+x^1\partial^0) \:e^{-ip(\theta)x}&=- \big(x^0 p^1(\theta) -x^1p^0(\theta) \big)\:e^{-ip(\theta)x}
\\&=-
i\frac{\partial}{\partial \theta} e^{-ip(\theta)x}
\end{align*}
and, analogously,
\begin{align*}
-i(x^0\partial^1+x^1\partial^0) e^{ ip(\theta)x}&= \big(x^0 p^1(\theta) -x^1p^0(\theta) \big)\:e^{ ip(\theta)x}
\\&=-
i\frac{\partial}{\partial \theta} e^{ip(\theta)x} \:.
\end{align*}
Thus, acting with the second part of the Rindler Hamiltonian $H_{\scrR}$ on the operator $B^{\alpha}$,
we obtain
\begin{align*}&-i(x^0\partial^1+x^1\partial^0) B^{\alpha}(x) \\&=
 \frac{1}{ \sqrt{2}}\int_{\mathbb{R}}d\theta \left\{
   \left(\begin{matrix} e^{\theta/2-i\pi/4}  \\ e^{-\theta/2+i\pi/4}  \end{matrix}\right)^{\alpha}
  \left(- i\frac{\partial}{\partial \theta} e^{-ip(\theta)x}\right)a(\theta)^{\dagger} \right. \\
  &\qquad\qquad\qquad \left. +  \left(\begin{matrix} e^{\theta/2+i\pi/4}  \\ e^{-\theta/2-i\pi/4}  \end{matrix}\right)^{\alpha}
    \left( -i\frac{\partial}{\partial \theta} e^{+ip(\theta)x}\right)a(\theta)  \right\} \:.
\end{align*}
Integration by parts gives
\begin{align*}&
    -i(x^0\partial^1+x^1\partial^0) B^{\alpha}(x)=-\frac{1}{2}\ \sigma^{01}B^{\alpha}(x) \\&+
     \frac{1}{ \sqrt{2}}\int_{\mathbb{R}}d\theta \left\{
   \left(\begin{matrix} e^{\theta/2-i\pi/4}  \\ e^{-\theta/2+i\pi/4}  \end{matrix}\right)^{\alpha}
     e^{-ip(\theta)x}\:  i\frac{\partial}{\partial \theta}a(\theta)^{\dagger}+  \left(\begin{matrix} e^{\theta/2+i\pi/4}  \\ e^{-\theta/2-i\pi/4}  \end{matrix}\right)^{\alpha}   e^{ ip(\theta)x}\:  
 i\frac{\partial}{\partial \theta} a(\theta)  \right\} .
\end{align*}
After smearing the vector and distribution-valued operator $B_{\alpha}$, we obtain 
\begin{align*}
    \sum_{\alpha=1}^2 \int \phi^{\alpha}(0,x)H_{\scrR}B^{\alpha}(x) \,d x=
    \int_{-\infty}^\infty \Big( f(\theta)\: i\frac{\partial}{\partial \theta}a(\theta) + \overline{f(\theta)}\: (-i\frac{\partial}{\partial \theta}a(\theta))^\dagger \Big)\:d\theta \:,
\end{align*}
giving the result.
\end{proof}  
\begin{Thm}\label{lemmalb}
    Let $\omega$ be a S1PFS state, and let $B(\phi)$ denote the operator  given in Equation~\eqref{bfi}. Then, the relative entropy
$$S(\omega_f \Vert \omega)=4\pi \:\langle f \vert Q_{\omega} H_{\scrR}f \rangle_{\H} $$
(see~\eqref{eq:2exent} and~\eqref{eqrelentrop}) is given explicitly by 
   \begin{align*}
    S(\omega_f \Vert \omega)    &=2\pi\: \big\la f \:\big|\: H_\scrR \tanh \big(2 \pi H_\scrR \big) f \big\ra_{\H },
    \end{align*}
    which agrees with the relative entropy given in Theorem~\ref{thmrelrindler} if we choose the excited state $\omega_f=\tilde{\omega}$.
\end{Thm}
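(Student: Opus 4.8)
The plan is to begin from the closed expression for the relative entropy furnished by Theorem~\ref{thms1}, namely equation~\eqref{eqrelentrop},
\[ S(\omega_f \Vert \omega) = 4\pi\, \big\la f \,\big|\, Q_{(4\pi)}\, H_\scrR\, f \big\ra_\H \:,\qquad Q_{(4\pi)} = \big( 1 + e^{-4\pi H_\scrR} \big)^{-1}\:, \]
and to reduce the Fermi-type factor to a hyperbolic tangent purely by functional calculus. The key observation is the elementary identity $\big(1+e^{-4\pi x}\big)^{-1} = \tfrac{1}{2}\big(1+\tanh(2\pi x)\big)$, which gives
\[ \frac{x}{1+e^{-4\pi x}} = \frac{1}{2}\, x\, \tanh(2\pi x) + \frac{1}{2}\, x \:. \]
Substituting $x = H_\scrR$ through the spectral theorem splits the relative entropy into an even and an odd part,
\[ S(\omega_f \Vert \omega) = 2\pi\, \big\la f \,\big|\, H_\scrR \tanh(2\pi H_\scrR)\, f \big\ra_\H + 2\pi\, \big\la f \,\big|\, H_\scrR\, f \big\ra_\H \:. \]

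The crucial step, which I would carry out next, is to show that the second (odd) term vanishes. Here the defining properties of a single-excitation state from Definition~\ref{def:stand} enter decisively: the test function satisfies $\Gamma f = f$, and the one-particle flow $u_t = e^{-itH_\scrR}$ commutes with the anti-unitary $\Gamma$. Differentiating the relation $\Gamma u_t = u_t \Gamma$ at $t=0$ and using that $\Gamma$ is anti-linear yields the intertwining identity $\Gamma H_\scrR = -H_\scrR\,\Gamma$. Combining $\Gamma f = f$, this intertwining relation, and the anti-unitarity property $\big\la \Gamma a \,\big|\, \Gamma b \big\ra_\H = \big\la b \,\big|\, a \big\ra_\H$, a short computation gives $\big\la f \,\big|\, H_\scrR f \big\ra_\H = -\big\la f \,\big|\, H_\scrR f \big\ra_\H$, so this boost expectation is zero. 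Equivalently, in the spectral representation this is nothing but the statement that $|f(\ell)|^2$ is an even function of the boost momentum $\ell$, whence $\int \ell\, |f(\ell)|^2\, d\ell = 0$.

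Once the odd part has been discarded, I would pass to the spectral representation of $H_\scrR$. By the preceding lemma the action of $H_\scrR$ on $B(f)$ amounts to $-i\partial_\theta$ on the rapidity profile, and after the Fourier transform in the rapidity this becomes multiplication by the boost momentum $\ell$; alternatively one may invoke the eigenvalue relation~\eqref{HRspec}, $H_\scrR \psi_\ell = -\ell\,\psi_\ell$. Since $\ell \tanh(2\pi\ell)$ is even, the sign convention for the eigenvalue is immaterial, and the surviving term becomes
\[ 2\pi\, \big\la f \,\big|\, H_\scrR \tanh(2\pi H_\scrR)\, f \big\ra_\H = 2\pi \int \ell\, |f(\ell)|^2\, \tanh(2\pi\ell)\, d\ell \:, \]
which is precisely the asserted formula.

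To confirm agreement with Theorem~\ref{thmrelrindler}, I would finally reconcile the two normalization conventions. The single-excitation convention of Definition~\ref{def:stand} demands $\big\la f \,\big|\, f\big\ra_\H = 2$, whereas in Theorem~\ref{thmrelrindler} the excitation is built from a unit vector $\phi$ with $\big\la \phi \,\big|\, \phi \big\ra = 1$, so that $f = \sqrt{2}\,\phi$; this rescaling turns the prefactor $2\pi$ obtained here into the prefactor $4\pi$ of Theorem~\ref{thmrelrindler}, and the two results coincide once $\omega_f = \tilde{\omega}$. The main obstacle is the vanishing of $\big\la f \,\big|\, H_\scrR f\big\ra_\H$: the remaining steps are routine functional calculus and a change of spectral variable, but it is exactly this cancellation that distinguishes the clean $\tanh$-formula from the raw Fermi-factor expression, and it hinges on deploying the $\Gamma$-reality of the excitation together with the anti-commutation $\Gamma H_\scrR = -H_\scrR\,\Gamma$.
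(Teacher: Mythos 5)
Your proposal is correct, and at its core it follows the same strategy as the paper's proof: the identity $(1+e^{-4\pi x})^{-1}=\tfrac{1}{2}\,\big(1+\tanh(2\pi x)\big)$, the resulting split of $4\pi\,\langle f\,|\,Q_{(4\pi)}H_\scrR f\rangle_\H$ into a $\tanh$ part plus a part linear in $H_\scrR$, the cancellation of the linear part by symmetry, and a rescaling of the excitation profile to reconcile the prefactors $2\pi$ and $4\pi$ with Theorem~\ref{thmrelrindler}. The differences lie in execution, and they favor you on the key step. The paper works concretely in the rapidity representation: it expands $B(f)$ in $a(\theta)$, $a(\theta)^\dagger$, reduces the expectation value to a double integral against $i\partial_{\tilde\theta}\,\delta(\theta-\tilde\theta)$, and Fourier transforms to the boost momentum $\ell$; you instead invoke the spectral theorem for $H_\scrR$ directly, which is shorter and bypasses the distributional manipulations. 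More substantively, the paper disposes of the odd term with the single sentence ``the first term vanishes due to symmetry,'' whereas you supply the argument that sentence actually needs: from $\Gamma u_t=u_t\Gamma$ and anti-linearity you derive $\Gamma H_\scrR=-H_\scrR\Gamma$, and combined with $\Gamma f=f$ and anti-unitarity this forces $\langle f\,|\,H_\scrR f\rangle_\H=-\langle f\,|\,H_\scrR f\rangle_\H=0$, equivalently the evenness of the spectral measure $|f(\ell)|^2\,d\ell$. This justification is not cosmetic --- the cancellation fails for generic $f$ and hinges precisely on the $\Gamma$-reality condition built into Definition~\ref{def:stand}. Finally, your reconciliation of normalizations via $f=\sqrt{2}\,\phi$ (i.e.\ $\|f\|_\H^2=2$ versus $\|\phi\|=1$) is cleaner than the paper's phrase ``$\tilde\omega=2\omega_f$,'' and it also sidesteps the stray $\tanh(4\pi H_\scrR)$ in the paper's closing display, which should read $\tanh(2\pi H_\scrR)$ as in the theorem statement and in your computation.
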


\begin{proof}
We compute the expectation value $\langle f \vert Q_{\omega} H_{\scrR} f \rangle_{\H}$ by passing to the Fourier conjugate variable, where the modular Hamiltonian acts as a multiplication operator. 

Recall that the modular group $\Delta^{it}$ of the Rindler wedge acts on the one-particle space $L^2(\R, d\theta)$ as translations in the rapidity variable $\theta$:
\begin{align*}
    (\Delta^{it} f)(\theta) = f(\theta - 2\pi t).
\end{align*}
It follows that the Rindler Hamiltonian $H_\scrR$, defined by $\Delta^{it} = e^{-2\pi i t H_\scrR}$, acts as the generator of these translations, i.e.\ $H_\scrR = -i\partial_\theta$ on suitable domains. Introducing the Fourier transform
\begin{align*}
    \hat{f}(\ell) = \frac{1}{2\pi} \int_{-\infty}^{\infty} e^{i\ell\theta}\, f(\theta)\, d\theta,
\end{align*}
$H_\scrR$ becomes the multiplication operator $\ell \mapsto \ell$ on $L^2(\R, d\ell)$. By the spectral theorem and the Borel functional calculus for self-adjoint operators, the bounded operator $Q_\omega = \frac{1}{2}\big(1 + \tanh(2\pi H_\scrR)\big)$ likewise becomes a multiplication operator in Fourier space:
\begin{align*}
    \widehat{(Q_\omega g)}(\ell) = \frac{1}{2}\big(1 + \tanh(2\pi \ell)\big)\, \hat{g}(\ell).
\end{align*}
Here we have used the identity $(1 + e^{-4\pi \ell})^{-1} = \frac{1}{2}(1 + \tanh(2\pi \ell))$ relating the operator representation in the former Lemma to the hyperbolic tangent form.

Using Parseval's theorem and the fact that both $Q_\omega$ and $H_\scrR$ are multiplication operators in the $\ell$-variable, we obtain
\begin{align*}
    \langle f \vert Q_{\omega} H_{\scrR} f \rangle_{\H}
    &= \int \hat{\bar{f}}(\ell)\; \frac{1}{2}\big(1 + \tanh(2\pi \ell)\big)\; \ell\; \hat{f}(\ell)\, d\ell \\
    &= \frac{1}{2} \int \ell\, |\hat{f}(\ell)|^2\, \big(1 + \tanh(2\pi \ell)\big)\, d\ell.
\end{align*}
Splitting the integrand, the term $\frac{1}{2}\int \ell\, |\hat{f}(\ell)|^2\, d\ell$ vanishes by the parity symmetry $\ell \mapsto -\ell$, since $\ell\, |\hat{f}(\ell)|^2$ is odd when integrated against the even measure $d\ell$.\footnote{More precisely, for $f \in C_0^\infty(\R^2,\C^2)$, we have $|\hat{f}(\ell)|^2 = |\hat{f}(-\ell)|^2$ by the reality-type symmetry of the test function space, so the odd part integrates to zero. Alternatively, note that $\langle f | H_\scrR f\rangle$ is real for self-adjoint $H_\scrR$ and simultaneously purely imaginary by the skew-symmetry of $-i\partial_\theta$; hence it vanishes.} We are left with
\begin{align*}
    \langle f \vert Q_{\omega} H_{\scrR} f \rangle_{\H}
    &= \frac{1}{2} \int \ell\, |\hat{f}(\ell)|^2\, \tanh(2\pi \ell)\, d\ell.
\end{align*}
Multiplying by $4\pi$, the relative entropy becomes
\begin{align*}
    S(\omega_f \Vert \omega) 
    &= 2\pi \int \ell\, |\hat{f}(\ell)|^2\, \tanh(2\pi \ell)\, d\ell \\
    &= 2\pi\, \big\la f \:\big|\: H_\scrR \tanh\big(2\pi H_\scrR\big)\, f \big\ra_{\H},
\end{align*}
where the last equality follows from the functional calculus, since $H_\scrR \tanh(2\pi H_\scrR)$ is the multiplication operator $\ell \mapsto \ell\, \tanh(2\pi \ell)$ in Fourier space. Note that $\ell \tanh(2\pi\ell) \geq 0$ for all $\ell \in \R$, confirming positivity of the relative entropy.

Choosing the state $\widetilde{\omega} = 2\omega_f$ (where the factor of $2$ resolves the discrepancy in~\eqref{tilomegadef}), the quantum relative entropy is given by
\begin{align*}
    S(\tilde{\omega} \Vert \omega)
 =4\pi\: \big\la f \:\big|\: H_\scrR \tanh \big(4\pi H_\scrR \big) f \big\ra_{\H },
\end{align*}
which matches the result derived in Theorem~\ref{thmrelrindler}.
\end{proof}

\section{Non-unitary excitations of more general states} \label{secbeyond}
With the previous constructions we exemplified that all computations 
performed previously with the help of modular theory can also be carried out
using the reduced one-particle density operator, giving the same results.
We now go one step further and show in simple examples that the computational methods
using the reduced one-particle density can be used even in cases when modular theory does not apply.
To this end, we consider more general ground states where the one-particle density operator~$\sigma_\scrR$
has the form
\[ \sigma_\scrR = \eta(H_\scrR) \]
with~$\eta \::\: \R \rightarrow [0,1]$ a Borel function. 
Choosing~$\eta(\ell) = \big( 1 + e^{-4 \pi \ell} \big)^{-1}$, we get back the state considered in~\eqref{sigmaR}.
But now more general states are possible, like for example
\begin{align*}
\eta(\ell) &= \big( 1 + e^{\beta \ell} \big)^{-1} && \text{general inverse temperature~$\beta \in (0, \infty)$} \\
\eta(\ell) &= \Theta(-\ell) && \text{zero temperature} \\
\eta(\ell) &= \frac{1}{2} && \text{infinite temperature} \:.
\end{align*}
The zero temperature state corresponds to the ground state of an observer at infinity.
All states are all quasi-free and particle-number preserving.
As the excitations we can consider arbitrary quasi-free states (not necessarily particle-number
preserving). The corresponding relative entropy can be computed using the formula
in Theorem~\ref{thmgauss}.

As a simple example, let us assume that the operator~$\sigma_\scrR:= \eta(H_\scrR)$ has an
eigenvalue~$\lambda \in (0,1)$.
The existence of an eigenvalue can be arranged by choosing~$\eta$ in such a way that the set
\[ \text{$\eta^{-1}(\lambda) \subset \R$ has strictly positive Lebesgue measure}\:. \]
We let~$\f \in \H_\scrR$ be a corresponding eigenvector~$f$, i.e.\
\[ \sigma_\scrR f = \lambda f \:. \]
As the simplest example, one can consider the
infinite temperature state with~$\lambda=\mu=1/2$.
We let~$\omega$ be the quasi-free, particle-number preserving state corresponding to~$\sigma_\scrR$.
We introduce the excited state by
\beq \label{exnonunit}
\omega_f(\cdots) := \frac{1}{Z}\,\omega \Big( \big( 1 + \Psi(\overline{f}) + \Psi^\dagger(f) \big)
\cdots \big(1 + \Psi^\dagger(f) + \Psi(\overline{f}) \big) \Big)
\eeq
with the normalization constant
\[ Z = \omega \Big( \big( 1 + \Psi(\overline{f}) + \Psi^\dagger(f) \big)
\big(1 + \Psi^\dagger(f) + \Psi(\overline{f}) \big) \Big) = 1+|f|^2\:. \]
In the limiting case~$|f|\rightarrow \infty$ is normalized, the excitation can be described similar to~\eqref{Ubog} 
and~\eqref{Udef} by a Bogoliubov transformation. However, this is {\em{not}} possible otherwise.
Therefore, the relative entropy cannot be computed via modular theory. But, working with the reduced one-particle density operators, the relative entropy can indeed be described explicitly.
\begin{Prp} The relative entropy of the excitation~\eqref{exnonunit} is given by
\beq \label{Srelf2}
S(\omega_f \Vert \omega) = -\tr_{\C^2} \Big\{ T \big( \log T - \log T_0 \big) \Big\} \:,
\eeq
where~$T$ and~$T_0$ are the $2 \times 2$-matrices
\beq \label{TT0}
T = \frac{1}{1+|f|^2}
\begin{pmatrix} \lambda + (1-\lambda)\: |f|^2 & 0 \\ 0 & (1-\lambda) + \lambda |f|^2
\end{pmatrix} \:,\qquad
T_0 = \begin{pmatrix} \lambda & 0 \\ 0 & 1-\lambda \end{pmatrix}\:.
\eeq
\end{Prp}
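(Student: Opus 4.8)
The plan is to apply the general Gaussian relative-entropy formula of Theorem~\ref{thmgauss} to the excitation~\eqref{exnonunit}, reducing the entire computation to the two-dimensional mode determined by the eigenvector~$f$. First I would use the spectral assumption $\sigma_\scrR f = \lambda f$: since $\omega$ is quasi-free and particle-number preserving and $f$ is an eigenvector, the one-particle space splits orthogonally as $\H_\scrR = \C f \oplus f^\perp$, and $\sigma_\scrR$ is block diagonal with respect to this splitting. Because the operator $1 + \Psi(\overline{f}) + \Psi^\dagger(f)$ appearing in~\eqref{exnonunit} involves only the mode~$f$, the excited state agrees with~$\omega$ on the subalgebra generated by $f^\perp$; by additivity of the relative entropy over the (graded) tensor factorization, the whole problem collapses to the mode~$f$, so that the trace $\tr_\H$ in Theorem~\ref{thmgauss} becomes the finite trace $\tr_{\C^2}$.

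The core of the argument is then the evaluation of the relevant two-point operators. Writing $a := \Psi(\overline{f})$ and $a^\dagger := \Psi^\dagger(f)$, the CAR~\eqref{CAR} give $\{a, a^\dagger\} = \la f | f \ra = |f|^2$ and $a^2 = 0 = (a^\dagger)^2$, while~\eqref{sigmaRdef} yields $\omega(a^\dagger a) = \la f | \sigma_\scrR f\ra = \lambda\,|f|^2$ and hence $\omega(a a^\dagger) = (1-\lambda)\,|f|^2$. Upon normalizing by $\la f | f \ra$, these data fix the reference matrix~$T_0$. For the excited state I would expand $\omega_f(a^\dagger a)$ and $\omega_f(a a^\dagger)$ according to~\eqref{exnonunit}; all odd contributions vanish, and the surviving terms are reduced with the above CAR identities. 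Using the normalization $Z = 1 + |f|^2$, this produces exactly the diagonal entries in~\eqref{TT0}, identifying the excited-state matrix as~$T$.

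Substituting $T$ and $T_0$ into the relative-entropy relation~\eqref{Srel2} of Theorem~\ref{thmgauss} then gives~\eqref{Srelf2}. I expect the main obstacle to lie not in the $2 \times 2$ algebra, which is routine bookkeeping with $a^2 = 0$ and $\{a, a^\dagger\} = |f|^2$, but in justifying the reduction step: one must verify carefully that, despite the fermionic grading, the excited state factorizes as the excited mode-$f$ state tensored with the unchanged state on $f^\perp$, so that the infinite-dimensional trace in Theorem~\ref{thmgauss} genuinely reduces to $\tr_{\C^2}$ and no cross terms between $f$ and $f^\perp$ survive.
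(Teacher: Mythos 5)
Your reduction to the two-dimensional particle--hole space of the mode~$f$ and your computation of the matrices~$T$ and~$T_0$ via the CAR follow the same route as the paper: the paper likewise restricts to the subspace spanned by~$f$ and~$\overline{f}$, verifies that the anomalous expectations $\omega_f(\Psi_i\Psi_j)$ and $\omega_f(\Psi^\dagger_i\Psi^\dagger_j)$ vanish, and arrives at~\eqref{TT0}. The genuine gap is your final step. You invoke~\eqref{Srel2}, but that identity is established in Theorem~\ref{thmgauss} only under the hypothesis that $\mathscr{C}$ and $\mathscr{C}_0$ are unitarily equivalent --- equivalently, that $T$ and $T_0$ have the same spectrum (this is what makes $S(W)=S(W_0)$ via Proposition~\ref{prpvN}, so that~\eqref{Srel1} collapses to~\eqref{Srel2}). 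Here the spectra differ: the eigenvalues of~$T$ are $\bigl(\lambda+(1-\lambda)|f|^2\bigr)/(1+|f|^2)$ and $\bigl((1-\lambda)+\lambda|f|^2\bigr)/(1+|f|^2)$, which coincide with $\{\lambda,1-\lambda\}$ only if $\lambda=\tfrac12$, $f=0$, or in the limit $|f|\to\infty$. This is not a technicality but the very point of Section~\ref{secbeyond}: the excitation~\eqref{exnonunit} is \emph{not} implementable by a Bogoliubov transformation except in the limit $|f|\to\infty$, so the hypothesis behind~\eqref{Srel2} is exactly what fails.

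The failure is quantitative, not cosmetic: \eqref{Srel2} would yield $\tr_{\C^2}\bigl\{(T-T_0)\,\mathscr{C}_0\bigr\}$, which is of \emph{first} order in the perturbation $T-T_0$ (of order $|f|^2$ for small~$f$), whereas $\tr_{\C^2}\bigl\{T(\log T-\log T_0)\bigr\}$ is the relative entropy of two trace-one positive matrices and is of \emph{second} order (of order $|f|^4$); the two expressions genuinely disagree. The paper's proof instead starts from the unconditional formula~\eqref{Srel0} and shows that its correction term vanishes: by~\eqref{Tprp} one has $\mathscr{C}_0=\text{diag}(c_0,-c_0)$ with $c_0=-\tfrac12\log\bigl(\lambda/(1-\lambda)\bigr)$, so $\log(2\cosh\mathscr{C}_0)$ is a multiple of the identity, and $T-T_0$ is trace-free by~\eqref{TmT0}; hence $\tr_{\C^2}\bigl\{(T-T_0)\log(2\cosh\mathscr{C}_0)\bigr\}=0$ and only $\tr_{\C^2}\bigl\{T(\log T-\log T_0)\bigr\}$ survives. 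Replacing your last step by this argument makes your proof go through. (Note also that~\eqref{Srel0}, together with positivity of the relative entropy of the trace-one matrices $T$ and $T_0$, produces a \emph{plus} sign in front of $\tr_{\C^2}\bigl\{T(\log T-\log T_0)\bigr\}$; the minus sign in~\eqref{Srelf2} appears to be a typo, and in any case cannot be obtained from~\eqref{Srel2} either.)
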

\Proof We again use the notation and results of Appendix~\ref{appB}. 
It clearly suffices to consider the corresponding operators~$T_0$ and~$T$
on the two-dimensional space spanned by~$f$ and~$\bar{f}$. We represent the restrictions
onto this two-dimensional subspace by $2 \times 2$-matrices.
Then the operator~$T_0$ in Theorem~\ref{thmgauss} takes the form as in~\eqref{TT0}.
Moreover, from the relation~\eqref{Tprp} in Proposition~\ref{prpgauss}, it follows that
\beq \label{C0ex}
\mathscr{C}_0 = -\frac{1}{2} \: \log \Big( \frac{T_0}{1-T_0} \Big)
= \begin{pmatrix} c_0 & 0 \\ 0 & -c_0 \end{pmatrix}
\qquad \text{with} \qquad c_0 := -\frac{1}{2} \log \Big( \frac{\lambda}{1-\lambda} \Big) \:.
\eeq

Next, we need to compute the reduced one-particle density operator~$T$
of the excited state. Using~\eqref{CAR}, we obtain
\begin{align*}
&Z\: \omega_f \big( \Psi^\dagger_i \Psi_j \big) =
\omega \Big( \big(1+\Psi(\overline{f}) + \Psi^\dagger(f) \big) \:\Psi^\dagger_i \Psi_j\:
\big(1+\Psi^\dagger(f) + \Psi(\overline{f}) \big) \Big) \\
&= \omega \Big( \big(1+\Psi(\overline{f}) + \Psi^\dagger(f) \big) \:\Psi^\dagger_i\, f_j \Big) 
-\omega \Big( \big(1+\Psi(\overline{f}) + \Psi^\dagger(f) \big) \:\overline{f}_i \Psi_j \big) \Big)\\
&\quad\: + \omega \Big( \big(1+\Psi(\overline{f}) + \Psi^\dagger(f) \big)^2\:\Psi^\dagger_i \Psi_j\: \Big) \\
&= \omega \Big( \Psi(\overline{f}) \:\Psi^\dagger_i\, f_j \Big) 
-\omega \Big( \Psi^\dagger(f)\:\overline{f}_i \Psi_j \big) \Big) + (1+|f|^2)\: \omega \Big(\Psi^\dagger_i \Psi_j\: \Big) \\
&= (1-\lambda)\: \overline{f}_i f_j  - \overline{f}_i f_j\: \lambda + \big(1 + |f|^2 \big)\: \lambda \,\delta_{ij} \\
&Z\: \omega_f \big( \Psi^\dagger_i \Psi^\dagger_j \big)  =
\omega \Big( \Psi(\overline{f}) \:\Psi^\dagger_i \Psi^\dagger_j\:\Psi(\overline{f}) \Big) \\
&= \overline{f_j} \:\omega \Big( \Psi(\overline{f}) \:\Psi^\dagger_i \Big)
- \omega \Big( \Psi(\overline{f}) \:\Psi^\dagger_i\:\Psi(\overline{f})\: \Psi^\dagger_j\ \Big) \\
&= \overline{f_j} \:\omega \Big( \Psi(\overline{f}) \:\Psi^\dagger_i \Big)
- \overline{f}_i \:\omega \Big( \Psi(\overline{f}) \:\Psi^\dagger_j\ \Big)
= -\overline{f_i} \overline{f_j} \: \big( \lambda - \lambda \big) = 0 \\
&Z\: \omega_f \big( \Psi_i \Psi_j \big) =
\omega \Big( \Psi^\dagger(f) \:\Psi_i \Psi_j\:\Psi^\dagger(f) \Big) \\
&= f_j \:\omega \Big( \Psi^\dagger(f) \:\Psi_i \Big)
- \omega \Big( \Psi^\dagger(f) \:\Psi_i\:\Psi^\dagger(f)\: \Psi_j\ \Big) \\
&= f_j \:\omega \Big( \Psi^\dagger(f) \:\Psi_i \Big)
- f_i\: \omega \Big( \Psi^\dagger(f) \: \Psi_j\ \Big) = 0 \\ 
&Z\: \omega_f \big( \Psi_i \Psi^\dagger_j \big) =
\omega \Big( \big(1+\Psi(\overline{f}) + \Psi^\dagger(f) \big) \:\Psi_i \Psi^\dagger_j\:
\big(1+\Psi^\dagger(f) + \Psi(\overline{f}) \big) \Big) \\
&=- f_i\: \omega \Big( \big(1+\Psi(\overline{f}) + \Psi^\dagger(f) \big) \:\Psi^\dagger_j \Big) 
+ \overline{f_j}\: \omega \Big( \big(1+\Psi(\overline{f}) + \Psi^\dagger(f) \big) \:\Psi_i \Big) \\
&\quad\: +\omega \Big( \big(1+\Psi(\overline{f}) + \Psi^\dagger(f) \big)^2 \:\Psi_i \Psi^\dagger_j \Big) \\
&=- f_i\: \omega \Big( \Psi(\overline{f})\:\Psi^\dagger_j \Big) 
+ \overline{f_j}\: \omega \Big(\Psi^\dagger(f) \:\Psi_i \Big)
+ \big(1+|f|^2 \big)\: \omega \Big( \Psi_i \Psi^\dagger_j \Big) \\
&=- (1-\lambda)\: \overline{f_j} f_i+ \overline{f_j} f_i\: \lambda
+ (1+|f|^2) \: (1-\lambda)\: \delta_{ij} \:.
\end{align*}
Hence, again on the subspace spanned by~$f$ and~$\overline{f}$,
\begin{align}
T &= \begin{pmatrix} \lambda & 0 \\ 0 & 1-\lambda
\end{pmatrix} + \frac{(1-2 \lambda)\:|f|^2}{1+|f|^2} \begin{pmatrix} 1 & 0 \\ 0 & -1 \end{pmatrix} \notag \\
&= \frac{1}{1+|f|^2}
\begin{pmatrix} \lambda + (1-\lambda)\: |f|^2 & 0 \\ 0 & 1-\lambda + \lambda |f|^2
\end{pmatrix} \\
T-T_0 &= \frac{(1-2 \lambda)\:|f|^2}{1+|f|^2} \begin{pmatrix} 1 & 0 \\ 0 & -1 \end{pmatrix} \label{TmT0} \:.
\end{align}
This gives the formula for~$T$ in~\eqref{TT0}.

We now employ these formulas in~\eqref{Srel0}. According to~\eqref{C0ex},
the operator~$\cosh \mathscr{C}_0$ is a multiple of the identity matrix.
Moreover, the relation~\eqref{TmT0} shows that the operator~$T-T_0$ is trace-free. Therefore,
\[ \tr_\H \Big\{ \big( T -T_0 \big) \,\log (2 \cosh \mathscr{C}_0) \Big\} = 0 \:. \]
With this in mind, the relation~\eqref{Srel0} simplifies to~\eqref{Srelf2}.
\QED

\section{Comparison of the methods and outlook} \label{seccompare}
In this paper, we explored two different approaches for computing relative entropies:
modular theory and computations based on the reduced one-particle density operator.
A-priori, these approaches apply in different setting and under different assumptions.
The starting point of modular theory is a Neumann algebra~$\mathcal{M}$ represented on a
Hilbert space together with a cyclic separating vector~$\Omega$.
The existence of this cyclic vector is quite restrictive. In simple terms, it applies only to
bipartite quantum systems, where one subsystem is generated by~${\mathcal{M}}$, whereas
the other is generated by its commutant~$\mathcal{M}'$.
In this setting, the Araki-Uhlmann relative entropy is defined for states which are unitarily related.
Despite these restrictions, modular theory is very general. In particular,
the resulting states~$\omega$ and $\tilde{\omega}$ do not necessarily need to be quasi-free.

For the approach based on the reduced one-particle operator, on the other hand, one does not
need a bipartite system. Moreover, the states considered for the relative entropy do not need
to be unitarily related. Instead, the method applies to arbitrary quasi-free states and an arbitrary 
quasi-free excitations. But, of course, the methods do not extend to fully interacting 
(i.e.\ non-quasi-free) states.

In Sections~\ref{seconeparticledensity} and~\ref{secmodular} we analyzed
excitations in the Rindler spacetime where both methods apply.
We verified by detailed direct computation that the results agree.
Apart from being an important consistency check, this analysis also
explains why and how these methods fit together.
In Section~\ref{secbeyond} we gave a simple example of an excitation which
lies outside the realm of modular theory, but where the relative entropy can still be
computed explicitly by means of the reduced one-particle density operator.
We expect that other physically interesting quantum systems and excitations can
be analyzed similarly by adapting our methods.
It seems an interesting open problem to analyze the relative entropy for non-quasi-free states using
modular theory.

\appendix
\section{The relative entropy of vacuum excitations} \label{appA}
In this appendix we compute the relative entropy of vacuum excitations.
We use the same setup and notation as in~\cite[Appendix~A]{fermientropy}.
Thus we assume that the one-particle Hilbert space~$\H$ is finite-dimensional of dimension~$N$.
The corresponding fermionic Fock space~$\F$ has dimension~$2^N$.
The fermionic field operators, denoted by~$\Psi_n^\dagger$ and~$\Psi_n$ with~$n=1,\ldots, N$,
act on this Fock space and realize the canonical anti-commutation relations~\eqref{CAR}

We assume that the vacuum state~$W_0$ is Gaussian and {\em{particle-number preserving}}
(as defined in~\eqref{particle-perserve}).
Then, in a suitable basis of~$\H$, the vacuum state can be written as
\beq \label{W0def}
W_0 = \det(\1-D_0) \:\exp \bigg( -\sum_{n=1}^N s_n\, \Psi_n^\dagger \Psi_n \bigg) \:,
\eeq
where~$D_0$ is the matrix
\[ D_0 = \text{diag} \big( d_1, \ldots, d_N ) \]
with real eigenvalues~$d_n \in (0,1)$ and
\[ 
s_n := \log \Big( \frac{1-d_n}{d_n} \Big) \]
(the cases~$d_n=0$ and~$d_n=1$ can be described as limiting cases where~$s_n \rightarrow \pm \infty$).

We describe the excitation by a transformation~$U$ on the Fock space~$\F$ of the specific form
\beq \label{Udef}
U = \Psi^\dagger(f) + \Psi(\overline{f}) \qquad \text{with} \qquad \la f,f\ra_{\C^N}=1\:.
\eeq
The operator~$U$ is obviously symmetric and unitary. Transforming the field operators according to
\[ \tilde{\Psi}_n := U \Psi_n U^* \:,\qquad \tilde{\Psi}^\dagger_n =  U \Psi^\dagger_n U^* \:, \]
the excited state is described by the density operator
\beq \label{Wdef}
W = \det(\1-D_0) \:\exp \bigg( -\sum_{n=1}^N s_n\, \tilde{\Psi}_n^\dagger \tilde{\Psi}_n \bigg) \:.
\eeq
\begin{Prp} \label{thmexcite}
The relative entropy between the vacuum state~$W_0$ and the excited state~$W$
(given by~\eqref{W0def} and~\eqref{Wdef}) can be expressed by
\[ S(W \Vert W_0) := \tr_\Fock \big(W \:(\log W-\log W_0) \big) = -\sum_{k=1}^N |f_k|^2\, s_k \:\big( 2d_k - 1 \big) \:. \]
\end{Prp}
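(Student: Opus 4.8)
The plan is to recognize that the excited density operator is nothing but a unitary conjugate of the vacuum, which collapses the nonlinear part of the relative entropy. Since~$\tilde{\Psi}_n = U \Psi_n U^*$ and~$\tilde{\Psi}_n^\dagger = U \Psi_n^\dagger U^*$, one has~$\tilde{\Psi}_n^\dagger \tilde{\Psi}_n = U\, \Psi_n^\dagger \Psi_n\, U^*$, and exponentiating the sum in~\eqref{Wdef} gives~$W = U W_0 U^*$. Because the von Neumann entropy is invariant under unitary conjugation, $\tr_\Fock(W \log W) = \tr_\Fock(W_0 \log W_0)$, so the term~$\tr_\Fock(W \log W)$ drops out and
\[ S(W \Vert W_0) = \tr_\Fock\big(W_0 \log W_0\big) - \tr_\Fock\big(W \log W_0\big) \:. \]
Inserting~$\log W_0 = \log\det(\1-D_0)\,\1 - \sum_n s_n\, \Psi_n^\dagger \Psi_n$ from~\eqref{W0def}, the additive constant cancels between the two traces, leaving
\[ S(W \Vert W_0) = -\sum_{n=1}^N s_n \Big( \tr_\Fock\big(W_0\, \Psi_n^\dagger \Psi_n\big) - \tr_\Fock\big(W\, \Psi_n^\dagger \Psi_n\big) \Big) \:. \]
Thus the entire computation reduces to the difference of occupation numbers in the two states.

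Next I would compute the two occupation numbers. The vacuum value is read off directly from the one-particle density operator, $\tr_\Fock(W_0\, \Psi_n^\dagger \Psi_n) = d_n$. For the excited state I would write $\tr_\Fock(W\, \Psi_n^\dagger \Psi_n) = \tr_\Fock(W_0\, U \Psi_n^\dagger \Psi_n U) = \tr_\Fock(W_0\, \tilde{\Psi}_n^\dagger \tilde{\Psi}_n)$, using that~$U = U^*$ and~$U^2 = \1$. The key simplification is that the anticommutator of~$U = \Psi^\dagger(f) + \Psi(\overline{f})$ with a single field operator is a $c$-number: the CAR~\eqref{CAR} give~$\{U, \Psi_n\} = f_n$ and~$\{U, \Psi_n^\dagger\} = \overline{f_n}$, whence
\[ \tilde{\Psi}_n = f_n\, U - \Psi_n \qquad \text{and} \qquad \tilde{\Psi}_n^\dagger = \overline{f_n}\, U - \Psi_n^\dagger \:. \]

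Expanding the product $\tilde{\Psi}_n^\dagger \tilde{\Psi}_n = |f_n|^2\, \1 - \overline{f_n}\, U \Psi_n - f_n\, \Psi_n^\dagger U + \Psi_n^\dagger \Psi_n$ and taking the vacuum expectation, the only surviving contributions come from the number two-point function~$\tr_\Fock(W_0\, \Psi_m^\dagger \Psi_n) = d_n\, \delta_{mn}$, while the anomalous two-point functions~$\tr_\Fock(W_0\, \Psi_m \Psi_n)$ and~$\tr_\Fock(W_0\, \Psi_n^\dagger \Psi_m^\dagger)$ vanish by the particle-number preservation~\eqref{particle-perserve} of the vacuum. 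This yields~$\tr_\Fock(W_0\, U \Psi_n) = f_n d_n$ and~$\tr_\Fock(W_0\, \Psi_n^\dagger U) = \overline{f_n} d_n$, so that~$\tr_\Fock(W\, \Psi_n^\dagger \Psi_n) = d_n + |f_n|^2\, (1 - 2 d_n)$. Substituting both occupation numbers into the reduced formula gives~$S(W \Vert W_0) = \sum_n s_n |f_n|^2 (1 - 2 d_n) = -\sum_n s_n |f_n|^2 (2 d_n - 1)$, as claimed.

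The only genuine conceptual step is the first one: realizing that unitary invariance of the entropy makes the quadratic term~$\tr_\Fock(W \log W)$ cancel, converting an a priori delicate relative-entropy computation into an elementary evaluation of occupation numbers. Everything after that is a short CAR manipulation whose sole nontrivial input is that the reference state is particle-number preserving; I do not expect any analytic obstacle in the finite-dimensional setting, and the passage to the infinite-dimensional case is the routine replacement of sums by convergent series, as noted before the statement.
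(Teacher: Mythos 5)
Your proof is correct, and it takes a route that is organizationally different from the paper's, although both rest on the same algebraic kernel. The shared core is the observation that the anticommutator of~$U$ with a single field operator is a c-number, $\{U,\Psi_n\}=f_n$, which yields the Bogoliubov-type relation~$\tilde{\Psi}_n = f_n U - \Psi_n$ (this is exactly the paper's relation~\eqref{bogu}, read in the opposite direction). Where you diverge: the paper computes \emph{both} traces~$\tr_\Fock(W\log W)$ and~$\tr_\Fock(W\log W_0)$ from scratch, by expressing everything in the tilde modes (which diagonalize~$W$), discarding terms with unbalanced tilde creation/annihilation operators, and evaluating the Fock traces mode-by-mode with explicit $2\times2$ matrices. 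You instead note that~$W = U W_0 U^*$, so unitary invariance of the von Neumann entropy gives~$\tr_\Fock(W\log W)=\tr_\Fock(W_0\log W_0)$ for free, and the relative entropy collapses to a difference of occupation numbers,
\[
S(W\Vert W_0) = -\sum_{n} s_n \Big( \tr_\Fock\big(W_0\,\Psi_n^\dagger\Psi_n\big) - \tr_\Fock\big(W\,\Psi_n^\dagger\Psi_n\big) \Big)\:,
\]
which you then evaluate purely from the two-point functions of the particle-number preserving state~$W_0$ (your computation of~$\tr_\Fock(W\,\Psi_n^\dagger\Psi_n)=\tr_\Fock(W_0\,\tilde{\Psi}_n^\dagger\tilde{\Psi}_n)$ is the mirror image of the paper's evaluation of~$\tr_\Fock(W\,\Psi_n^\dagger\Psi_n)$ in tilde modes, using~\eqref{particle-perserve} in place of the paper's counting of tilde operators). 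What each approach buys: yours is shorter, avoids all explicit Fock-trace evaluations, and makes conceptually transparent why the nonlinear term~$\tr_\Fock(W\log W)$ never needs to be computed; the paper's longer computation produces the values of~$\tr_\Fock(W\log W)$ and~$\tr_\Fock(W\log W_0)$ individually, which exhibits~$S(W)=S(W_0)$ explicitly and matches the bookkeeping used later for general Gaussian states in Appendix~\ref{appB}.
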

\Proof Taking the logarithm of~\eqref{W0def} and using~\eqref{Wdef}, one finds
\begin{align*}
&\tr_\Fock \big(W \log W_0) = \log \det(\1-D_0) \\
&\qquad\: - \det(\1-D_0) \: \tr_\Fock \bigg\{
\exp \bigg( -\sum_{n=1}^N s_n\, \tilde{\Psi}_n^\dagger \tilde{\Psi}_n \bigg)
\sum_{k=1}^N s_k\:\Psi_k^\dagger \Psi_k \bigg\} \:.
\end{align*}
We now use that
\begin{align}
\tilde{\Psi}^\dagger(f) &= U \Psi^\dagger(f) U = 
\big( \Psi^\dagger(f) + \Psi(\overline{f}) \big)\, \Psi^\dagger(f)\, \big( \Psi^\dagger(f) + \Psi(\overline{f}) \big) \notag \\
&= \Psi(\overline{f})\, \Psi^\dagger(f)\, \Psi(\overline{f}) = \Psi(\overline{f}) \\
\tilde{\Psi}(\overline{f}) &= \Psi^\dagger(f) \\
U &= \tilde{\Psi}^\dagger(f) + \tilde{\Psi}(\overline{f}) \label{Utildef}
\end{align}
and thus
\begin{align}
\Psi_n &= U \tilde{\Psi}_n U^* = U \tilde{\Psi}_n U = U\, \big\{ \tilde{\Psi}_n, U \big\} - U U \tilde{\Psi}_n = f_n\, U - \tilde{\Psi}_n \label{bogu} \\
\Psi^\dagger_n &:=  U \tilde{\Psi}^\dagger_n U^* = U \tilde{\Psi}^\dagger_n U = U\, \big\{ \tilde{\Psi}^\dagger_n, U \big\}
- U U \tilde{\Psi}^\dagger_n = \overline{f_n} \, U  - \tilde{\Psi}^\dagger_n \\
\Psi^\dagger_n \Psi_n 
&= \tilde{\Psi}^\dagger_n \tilde{\Psi}_n + |f_n|^2 - \tilde{\Psi}^\dagger_n \, f_n\, U -  \overline{f_n} \, U\, \tilde{\Psi}_n \:.
\end{align}
Hence
\begin{align*}
&\tr_\Fock \big(W \log W_0) - \log \det(\1-D_0) \\
&= -\det(\1-D_0) \: \tr_\Fock \bigg\{ e^{-\sum_{n=1}^N s_n\, \tilde{\Psi}^\dagger_n \tilde{\Psi}_n}
\sum_{k=1}^N s_k\:\Big( \tilde{\Psi}^\dagger_k \tilde{\Psi}_k + |f_k|^2 - \tilde{\Psi}^\dagger_k \, f_k\, U -  \overline{f_k} \, U\, \tilde{\Psi}_k \Big) \bigg\} \:.
\end{align*}
Here the trace is zero unless, for every~$\ell \in \{1, \ldots, N\}$, the number of creation operators~$\Psi^\dagger_\ell$
equals the number of annihilation operators~$\Psi_\ell$. We thus obtain
\begin{align*}
&\tr_\Fock \big(W \log W) - \log \det(\1-D_0) \\
&= -\det(\1-D_0) \: \tr_\Fock \bigg\{ e^{-\sum_{n=1}^N s_n\, \tilde{\Psi}^\dagger_n \tilde{\Psi}_n}
\sum_{k=1}^N s_k\: \tilde{\Psi}^\dagger_k \tilde{\Psi}_k \bigg\} \\
&= -\det(\1-D_0) \: \sum_{k=1}^N s_k \prod_{n \neq k} \frac{1}{1-d_n} 
\tr_{\C^2} \bigg\{ \begin{pmatrix} 1 & 0 \\ 0 & e^{-s_k} \end{pmatrix}
\begin{pmatrix} 0 & 0 \\ 0 & 1 \end{pmatrix} \bigg\} \\
&= -\sum_{k=1}^N s_k \:(1-d_k)\: e^{-s_k}
= -\sum_{k=1}^N s_k \:(1-d_k)\: \frac{d_k}{1-d_k}
=  -\sum_{k=1}^N s_k \:d_k \\
&=  -\sum_{k=1}^N d_k\: \log \Big( \frac{1-d_k}{d_k} \Big)
= -\sum_{k=1}^N d_k\: \log (1-d_k) + \sum_{k=1}^N d_k\: \log(d_k) \\
&\tr_\Fock \big(W \log W_0) - \log \det(\1-D_0) \\
&= -\det(\1-D_0) \: \tr_\Fock \bigg\{ e^{-\sum_{n=1}^N s_n\, \tilde{\Psi}^\dagger_n \tilde{\Psi}_n}
\sum_{k=1}^N s_k\,\Big( \big(1 - 2\, |f_k|^2 \big)\, \tilde{\Psi}^\dagger_k \tilde{\Psi}_k + |f_k|^2 \Big) \bigg\} \\
&= -\det(\1-D_0) \: \sum_{k=1}^N s_k \prod_{n \neq k} \frac{1}{1-d_n} 
\tr_{\C^2} \bigg\{ \begin{pmatrix} 1 & 0 \\ 0 & e^{-s_k} \end{pmatrix}
\begin{pmatrix} |f_k|^2 & 0 \\ 0 & 1- |f_k|^2 \end{pmatrix} \bigg\} \\
&= -\sum_{k=1}^N s_k \:(1-d_k)
\tr_{\C^2} \bigg\{ \begin{pmatrix} 1 & 0 \\ 0 & e^{-s_k} \end{pmatrix}
\begin{pmatrix} |f_k|^2 & 0 \\ 0 & 1- |f_k|^2 \end{pmatrix} \bigg\} \\
&= -\sum_{k=1}^N s_k \:d_k \Big( e^{s_k}\: |f_k|^2 + 1 - |f_k|^2 \Big) \:.
\end{align*}

We conclude that
\begin{align*}
S(W \Vert W_0) 
&=\sum_{k=1}^N s_k \:d_k \Big( e^{s_k}\: |f_k|^2 + 1 - |f_k|^2 \Big) - \sum_{k=1}^N s_k \:d_k \\
&=\sum_{k=1}^N |f_k|^2\, s_k \:d_k \Big( e^{s_k} -1 \Big) \\
&=\sum_{k=1}^N |f_k|^2\, s_k \:d_k \Big( \frac{1-d_k}{d_k} -1 \Big)
=\sum_{k=1}^N |f_k|^2\, s_k \:(1-2 d_k)
\:.
\end{align*}
This gives the result.
\QED

\section{Fermionic entropies of general Gaussian states} \label{appB}
In the appendix in~\cite{fermientropy} it was shown how to express fermionic entropies in terms
of the reduced one-particle density operator. All the computations were carried out under the
assumption that the Gaussian state is {\em{particle-number preserving}}, meaning that
all two-point expectations involving two creation or two annihilation operators vanish,
\[  \omega\big( \Psi^\dagger(\phi)\, \Psi^\dagger(\psi) \big) = 0 =  \omega\big( \Psi(\overline{\phi})\, \Psi(\overline{\psi}) \big) \:. \]
The state obtained by exciting the Rindler vacuum, however, is {\em{not}} particle-number preserving.
This makes it necessary to extend the formulas derived in~\cite[Appendix~A]{fermientropy} to
general Gaussian states. This will be worked out in detail in this appendix.
We remark that an alternative method for treating Gaussian states which do not preserve the
particle number is to transform to suitable normal modes. This method is explained
in~\cite[Section~8]{fermientropy} in the context of fermionic lattices.

We use the same setup and notation as in~\cite[Appendix~A]{fermientropy}
as summarized at the beginning of Appendix~\ref{appA} above.

\subsection{Bogoliubov transformations}
A {\em{Bogoliubov transformation}} is a transformation of the field operators of the general form
\[ \tilde{\Psi}_n = \sum_{k=1}^N \big( u_{nk}\, \Psi_m + v_{nk}\, \Psi^\dagger_k \big) \:, \]
where the coefficients~$u_{nk}$ and~$v_{nk}$ must be chosen such that the CAR remain valid.
It is convenient to write the transformation of these operators and their adjoints in matrix form as
\beq \label{scrUdef}
\begin{pmatrix} \tilde{\Psi} \\ \tilde{\Psi}^\dagger \end{pmatrix} 
= \scrU  \begin{pmatrix} \Psi \\ \Psi^\dagger \end{pmatrix} \qquad \text{with} \qquad \scrU := 
\begin{pmatrix} U & V \\ \overline{V} & \overline{U} \end{pmatrix}\:,
\eeq
where the bar denotes component-wise complex conjugation of all matrix elements.
A direct computation shows that, in order to preserve the CAR, the matrices~$U$ and~$V$
must satisfy the conditions
\[  \scrU \scrU^* = \1 \:. \]
Thus the matrix~$\scrU$ must be unitary. Moreover, the specific form of~$\scrU$ in~\eqref{scrUdef}
can be characterized by the relation
\beq \label{SUrel}
S \scrU S = \overline{\scrU} \qquad \text{with} \qquad S := \begin{pmatrix} 0 & 1 \\ 1 & 0 \end{pmatrix}\:.
\eeq

\subsection{General Gaussian states}
We consider the ansatz
\beq \label{Wdefgen}
W := \frac{1}{\tr_\Fock \exp(-Q)}\: \exp (-Q) \:,
\eeq
where~$Q$ is the most general symmetric quadratic functional in the fermionic field operators, which we write as
\beq \label{Qform}
Q := a_{jk} \Psi^\dagger_j \Psi_k + b_{jk} \Psi_j \Psi_k + \overline{b_{kj}}\:\Psi^\dagger_j \Psi^\dagger_k
- a_{kj} \Psi_j \Psi^\dagger_k \:,
\eeq
where
\[ \overline{a_{jk}} = a_{kj} \qquad \text{and} \qquad b_{jk} = - b_{kj} \:. \]
Thus, in matrix notation~$A=(a_{jk})$ and~$B=(b_{jk})$,
\beq \label{ABrel}
A^* = A \qquad \text{and} \qquad B^T = -B \:.
\eeq
Note that the first relation is needed in order for~$Q$ to be a symmetric operator on~$\Fock$,
whereas the second relation is no loss in generality in view of the CAR.
Likewise, the form of the last summand in~\eqref{Qform} can be arranged using the CAR.

We write~$Q$ more compactly as
\beq \label{QCdef}
Q = \bigg\langle \begin{pmatrix} \Psi \\ \Psi^\dagger \end{pmatrix} , \mathscr{C}
\begin{pmatrix} \Psi \\ \Psi^\dagger \end{pmatrix} \bigg\rangle 
\qquad \text{with} \qquad \mathscr{C} := 
\begin{pmatrix} A & B \\ -\overline{B} & -\overline{A} \end{pmatrix} \:.
\eeq
The specific form of~$\mathscr{C}$ is characterized by the relation
\beq \label{SCrel}
S \mathscr{C} S = -\overline{\mathscr{C}} \:.
\eeq

The matrix~$\mathscr{C}$ is Hermitian (as one sees from~\eqref{QCdef} and~\eqref{ABrel}).
Therefore, it can be diagonalized with the help of a unitary transformation~$\scrU$, i.e.\
\beq \label{Cdiag}
\scrU \mathscr{C} \scrU^{-1} = \tilde{\mathscr{C}}
\eeq
with~$\tilde{\mathscr{C}}$ a diagonal $2n \times 2n$-matrix. It turns out that the unitary matrix can be chosen
such as to satisfy the relation~\eqref{SUrel}. In other words, the matrix~$\mathscr{C}$ can be
diagonalized by a Bogoliubov transformation. For completeness, we give a proof of this well-known fact.
\begin{Lemma} There is a unitary transformation~$\scrU$ satisfying~\eqref{SUrel} such that~\eqref{Cdiag}
holds. Moreover, the matrix~$\tilde{\mathscr{C}}$ is of the form
\beq \label{tildeCdef}
\tilde{\mathscr{C}} = \begin{pmatrix} C & 0 \\ 0 & -C \end{pmatrix} \:,
\eeq
where~$C$ is a diagonal $n \times n$-matrix with non-negative entries.
\end{Lemma}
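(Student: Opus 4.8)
The plan is to exploit the particle--hole symmetry encoded in~\eqref{SCrel} by realizing it through an anti-unitary involution, and then to diagonalize~$\mathscr{C}$ by an orthonormal eigenbasis whose vectors are paired by this involution.

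First I would introduce the anti-linear map~$\mathcal{C} : \C^{2n} \to \C^{2n}$, $\mathcal{C} v := S \overline{v}$. Since~$S$ is real, symmetric and satisfies~$S^2 = \1$, a short computation gives~$\mathcal{C}^2 = \1$ and~$\la \mathcal{C} v \,|\, \mathcal{C} w \ra = \la w \,|\, v \ra$, so that~$\mathcal{C}$ is an anti-unitary involution. Rewriting~\eqref{SCrel} as~$S \overline{\mathscr{C}} = -\mathscr{C} S$, one checks directly that~$\mathcal{C} \mathscr{C} = -\mathscr{C} \mathcal{C}$, i.e.\ that~$\mathcal{C}$ anti-commutes with~$\mathscr{C}$. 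Because~$\mathscr{C}$ is Hermitian, its eigenvalues are real. If~$\mathscr{C} v = \mu v$, then~$\mathscr{C} (\mathcal{C} v) = -\mathcal{C} \mathscr{C} v = -\mu\, \mathcal{C} v$, so~$\mathcal{C}$ maps the~$\mu$-eigenspace anti-unitarily onto the~$(-\mu)$-eigenspace. Hence the spectrum is symmetric about the origin with matching multiplicities, and since the total dimension~$2n$ is even, in particular~$\ker \mathscr{C}$ is even-dimensional, say of dimension~$2 m_0$.

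Next I would build the eigenbasis respecting this pairing. For every eigenvalue~$\mu > 0$ I choose an orthonormal basis of its eigenspace; for two vectors~$w_k, w_l$ belonging to strictly positive eigenvalues one automatically has~$\la w_k \,|\, \mathcal{C} w_l \ra = 0$, because~$w_k$ and~$\mathcal{C} w_l$ lie in eigenspaces for eigenvalues of opposite sign. The delicate point is the kernel: here~$\mathcal{C}$ restricts to an anti-unitary involution, and I must produce vectors~$w$ with~$w \perp \mathcal{C} w$. Writing~$w = (a + i b)/\sqrt{2}$ with~$a, b$ in the real subspace~$\{u : \mathcal{C} u = u\}$, one finds~$\la w \,|\, \mathcal{C} w \ra = \tfrac12 (\la a \,|\, a\ra - \la b \,|\, b\ra) - i \la a \,|\, b\ra$, so it suffices to pick a real orthonormal basis~$r_1, \dots, r_{2 m_0}$ of that subspace and set~$w_j = (r_{2j-1} + i r_{2j})/\sqrt{2}$. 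Collecting the vectors so obtained as~$w_1, \dots, w_n$ with eigenvalues~$\mu_1, \dots, \mu_n \geq 0$ and defining~$w_{n+k} := \mathcal{C} w_k$ yields a complete orthonormal eigenbasis with eigenvalues~$\mu_1, \dots, \mu_n, -\mu_1, \dots, -\mu_n$.

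Finally I let~$W$ be the unitary matrix whose columns are~$w_1, \dots, w_{2n}$ and set~$\scrU := W^*$. Then~$\scrU \mathscr{C} \scrU^{-1} = \tilde{\mathscr{C}}$ has the block-diagonal form~\eqref{tildeCdef} with~$C = \mathrm{diag}(\mu_1, \dots, \mu_n) \geq 0$. It remains to verify the Bogoliubov constraint~\eqref{SUrel}. Taking adjoints of~\eqref{SUrel} shows it is equivalent to~$S W S = \overline{W}$, and reading this column by column reduces it precisely to the relation~$w_{n+k} = S \overline{w_k} = \mathcal{C} w_k$, which holds by construction. The main obstacle is the splitting of the kernel: the orthogonality~$w_k \perp \mathcal{C} w_l$ is automatic on the strictly positive part of the spectrum, but on~$\ker \mathscr{C}$ one must combine the evenness of its dimension (a consequence of the spectral symmetry) with the real-structure construction above to secure orthonormality of the paired basis.
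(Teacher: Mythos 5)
Your proposal is correct, and it differs from the paper's proof in the one place where the argument is genuinely delicate: the kernel of~$\mathscr{C}$. Both proofs rest on the same pairing symmetry (you phrase it as the anti-unitary involution~$\mathcal{C}v = S\overline{v}$ anti-commuting with~$\mathscr{C}$; the paper states equivalently that~$S\overline{\psi}$ is an eigenvector for~$-\lambda$ whenever~$\psi$ is one for~$\lambda$), and for the nonzero eigenvalues the two constructions are essentially identical — pick eigenvectors for the positive eigenvalues and adjoin their images under~$\mathcal{C}$, with orthogonality automatic across distinct eigenvalues. The paper, however, treats a non-trivial kernel by a detour: it perturbs~$\mathscr{C}$ by~$\lambda\,\mathrm{diag}(D, -\overline{D})$ with~$D$ Hermitian, shows by contradiction that a suitable~$D$ removes the kernel to first order, applies the non-degenerate argument, and then takes the limit~$\lambda \searrow 0$; this last step tacitly needs a compactness argument for the diagonalizing unitaries, which is left implicit. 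You instead settle the kernel directly: since~$\mathcal{C}$ restricts to an anti-unitary involution on~$\ker\mathscr{C}$, its fixed-point set is a real form, and recombining a real orthonormal basis pairwise into~$w_j = (r_{2j-1} + i r_{2j})/\sqrt{2}$ produces orthonormal vectors with~$w_j \perp \mathcal{C}w_j$, as your explicit formula for~$\la w \,|\, \mathcal{C}w\ra$ shows. This is constructive, avoids the perturbation-and-limit machinery entirely, and uses the evenness of~$\dim\ker\mathscr{C}$ (which both proofs derive from the spectral pairing) exactly where it is needed. The one point you pass over quickly — that the fixed-point set of~$\mathcal{C}$ in the kernel is a real form of real dimension equal to the complex dimension~$2m_0$ of the kernel, so that the basis~$r_1,\dots,r_{2m_0}$ exists and the vectors~$w_j, \mathcal{C}w_j$ span the kernel — is standard (decompose~$u = \tfrac{1}{2}(u + \mathcal{C}u) + i\,\tfrac{1}{2i}(u - \mathcal{C}u)$), but it is worth stating explicitly since it is the linchpin of your kernel construction.
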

\Proof Let~$\psi$ be an eigenvector of~$\mathscr{C}$, i.e.\
\[ \mathscr{C} \psi = \lambda \psi \qquad \text{with~$\lambda \in \R$}\:. \]
Then the vector
\[ \phi := S \overline{\phi} \]
is an eigenvector corresponding to the eigenvalue~$-\lambda$ because
\[ \mathscr{C} \phi = \mathscr{C} S \overline{\phi} \overset{\eqref{SUrel}}{=} S\, \big( S \mathscr{C} S \big) \overline{\phi} \overset{\eqref{SCrel}}{=}  -S\, \overline{ \mathscr{C} \phi}
= -\lambda S \overline{\phi} = -\lambda \phi \:. \]

We first give the proof in the case that the matrix~$\mathscr{C}$ is invertible.
We proceed inductively. Let~$\psi_1$ be a normalized eigenvector corresponding to the largest eigenvalue,
\[ \mathscr{C} \psi_1 = \lambda_1\, \psi_1 \qquad \text{with~$\lambda_1 > 0$}\:. \]
Then the vector~$\phi_1 := \mathscr{C} \overline{\psi}_1$
is an eigenvector corresponding to the eigenvalue~$-\lambda_1$.
Now we restrict attention to the orthogonal complement of~$\psi_1$ and~$\phi_1$,
let~$\psi_2$ be an eigenvector corresponding to the largest eigenvalue, set~$\phi_2=\mathscr{C} \overline{\psi}_2$, and so on. After~$N$ we have constructed an orthonormal eigenvector basis
\[ \psi_1,\ldots, \psi_N, \phi_1, \ldots, \phi_N \:. \]
Choosing~$\scrU$ as the matrix whose columns are these basis vectors,
this matrix is unitary and diagonalizes~$\mathscr{C}$,~\eqref{Cdiag}. Moreover, the fact that the last~$N$
columns are obtained from the first~$N$ columns by complex conjugation and the action of~$S$
means that~$\scrU$ has the desired form as in~\eqref{scrUdef}.

It remains to consider the case that~$\mathscr{C}$ has a non-trivial kernel. 
Counting dimensions, it is clear that this kernel is even-dimensional.
Our strategy is to show that this kernel can be removed by a small perturbation of the form
\beq \label{Cperturb}
\mathscr{C}_\lambda = \mathscr{C} + \lambda \begin{pmatrix} D & 0 \\ 0 & -\overline{D} \end{pmatrix}
\eeq
with~$D^*=D$ and~$\lambda>0$. Then we can argue as above, and taking the limit~$\lambda \searrow 0$
gives the result. In order to show that the kernel can indeed be removed with a perturbation of the form~\eqref{Cperturb}, we argue as follows. Assume conversely that~$\mathscr{C}$ has a non-trivial kernel
whose dimension cannot be decreased by first order perturbations of the form~\eqref{Cperturb}.
Then the perturbation operator vanishes on the kernel, i.e.\
\[ \la \psi,\, \begin{pmatrix} D & 0 \\ 0 & -\overline{D} \end{pmatrix} \psi \ra_{\C^{2N}}
\qquad \text{for all~$\psi \in \ker \mathscr{C}$}\:, \]
for any choice of the Hermitian $N \times N$-matrix~$D$. A direct computation shows that
then the vectors~$\psi$ and~$S \overline{\psi}$ must coincide up to a phase.
Since the kernel is even-dimensional, it contains two linearly independent vectors~$\psi_1$ and~$\psi_2$.
After multiplying these vectors by a phase, we can arrange that
\[ \psi_1 = \phi_1 := S \overline{\psi_1} \qquad \text{and} \qquad \psi_2 = \phi_2 := S \overline{\psi_2} \:. \]
Moreover, the two vectors
\[ \psi_1 + e^{i \varphi} \psi_2 \qquad \text{and} \qquad 
S \big( \psi_1 + e^{i \varphi} \psi_2 \big) = \psi_1 + e^{-i \varphi} \psi_1 \]
must be linearly independent for any choice of the
phase angle~$\varphi \in \R$. This implies that~$\psi_1$ and~$\psi_2$ must be zero.
This is a contraction, proving the claim.
\QED

We conclude that, after performing a suitable Bogoliubov transformation, the operator~$Q$
takes the form
\[ Q = \bigg\langle \begin{pmatrix} \tilde{\Psi} \\ \tilde{\Psi}^\dagger \end{pmatrix},
\begin{pmatrix} C & 0 \\ 0 & -C \end{pmatrix} \begin{pmatrix} \tilde{\Psi} \\ \tilde{\Psi}^\dagger \end{pmatrix}
\bigg\rangle \]
with~$\tilde{\Psi}$ again as in~\eqref{scrUdef}. Here~$C$ is a diagonal matrix, which we write as
\[ C = \text{diag}\, (c_1, \ldots, c_N) \qquad \text{with} \qquad c_n \geq 0 \:. \]
The operator~$Q$ is diagonal on the Fock space basis generated from the vacuum by acting with
monomials of the creation operators~$\tilde{\Psi}_n$, with the eigenvalues given as the products
of the corresponding eigenvalues of the matrix~$\tilde{\mathscr{C}}$ in~\eqref{tildeCdef}.
More specifically, on the tensor factor of the $n^\text{th}$ particle, the operator~$Q$ can be represented by the matrix
\beq \label{2matrix}
\begin{pmatrix} e^{-c_n} & 0 \\ 0 & e^{c_n} \end{pmatrix}
= 2 \cosh c_n\:
\begin{pmatrix} 1-d_n & 0 \\ 0 & d_n \end{pmatrix}
\eeq
with
\begin{align*}
d_n &= \frac{e^{c_n}}{2 \cosh c_n} = \frac{e^{c_n}}{e^{c_n} + e^{-c_n}}  \\
1-d_n &= \frac{e^{-c_n}}{e^{c_n} + e^{-c_n}} 
\end{align*}
(here the first and second component refer to occupation numbers one and zero of the $n^\text{th}$ particle,
respectively). In particular, the Fock trace in~\eqref{Wdefgen} reduces to a product,
\[ \tr_\Fock \exp(-Q) = \prod_{n=1}^N 2 \cosh c_n 
= \det \big(2 \cosh C \big) = \sqrt{ \det \big( 2 \cosh \tilde{\mathscr{C}} \big)} \:. \]
With the same method, we can also compute the reduced one-particle density operator~$\tilde{T}$ defined by
\[ \tilde{T}_{jk} = \tr_\Fock \bigg\{ \bigg| \begin{pmatrix} \tilde{\Psi} \\ \tilde{\Psi}^\dagger \end{pmatrix} \bigg\rangle
\bigg\langle \begin{pmatrix} \tilde{\Psi} \\ \tilde{\Psi}^\dagger \end{pmatrix} \bigg|_{jk} \, W \bigg\} \:. \]
Here one picks up the diagonal entries of the matrix in~\eqref{2matrix} to obtain
\[ \tilde{T} = \frac{e^{-\tilde{\mathscr{C}}}}{2 \cosh \tilde{\mathscr{C}}} \:. \]

We finally express our results in terms of the original Fock basis.
Here we use the fact that the Bogoliubov transformation simply corresponds to the
unitary transformation~$\mathscr{C} \rightarrow \tilde{\mathscr{C}} :=
\scrU \mathscr{C} \scrU^{-1}$. Transforming back, we obtain the following result.
\begin{Prp} \label{prpgauss}
Consider the Gaussian state~$W$ as defined by~\eqref{Wdefgen} with~$Q$ as in~\eqref{QCdef}
and a covariance matrix~$\mathscr{C}$ satisfying~\eqref{SCrel}.
Then~$W$ can be written as
\beq \label{Wprp}
W = \frac{1}{\sqrt{ \det \big( 2 \cosh \mathscr{C} \big)} } \: \exp 
\bigg\langle \begin{pmatrix} \Psi \\ \Psi^\dagger \end{pmatrix} , (-\mathscr{C})
\begin{pmatrix} \Psi \\ \Psi^\dagger \end{pmatrix} \bigg\rangle \:.
\eeq
Moreover, its reduced one-particle density operator takes the form
\beq \label{Tprp}
T = \tr_\Fock \bigg\{ \bigg| \begin{pmatrix} \Psi \\ \Psi^\dagger \end{pmatrix} \bigg\rangle
\bigg\langle \begin{pmatrix} \Psi \\ \Psi^\dagger \end{pmatrix} \bigg| \, W \bigg\} = 
\frac{e^{-\mathscr{C}}}{2 \cosh \mathscr{C}} \:.
\eeq
\end{Prp}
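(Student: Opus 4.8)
The plan is to reduce everything to the diagonal situation already treated in the discussion preceding the statement, and then to transform the results back to the original Fock basis. By the preceding Lemma, there is a Bogoliubov transformation $\scrU$ obeying~\eqref{SUrel} which diagonalizes the covariance matrix, $\scrU \mathscr{C} \scrU^{-1} = \tilde{\mathscr{C}}$ with $\tilde{\mathscr{C}}$ of the block form~\eqref{tildeCdef}. Passing to the normal modes $\tilde\Psi = \scrU \Psi$ in the compact notation of~\eqref{scrUdef}, the functional $Q$ becomes diagonal and the Fock trace factorizes over the individual modes according to~\eqref{2matrix}. This is precisely the computation carried out above, which yields $\tr_\Fock \exp(-Q) = \sqrt{\det(2\cosh\tilde{\mathscr{C}})}$ together with $\tilde T = e^{-\tilde{\mathscr{C}}}/(2\cosh\tilde{\mathscr{C}})$ for the reduced one-particle density operator in the normal modes. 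The content of the proposition is then to re-express these in terms of $\mathscr{C}$.

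For the normalization in~\eqref{Wprp}, I would invoke invariance of the determinant under similarity together with the holomorphic functional calculus: since $\tilde{\mathscr{C}} = \scrU\mathscr{C}\scrU^{-1}$, one has $\cosh\tilde{\mathscr{C}} = \scrU (\cosh\mathscr{C})\,\scrU^{-1}$ and hence $\det(2\cosh\tilde{\mathscr{C}}) = \det(2\cosh\mathscr{C})$. As the numerator $\exp(-Q) = \exp\langle (\Psi,\Psi^\dagger), (-\mathscr{C})(\Psi,\Psi^\dagger)\rangle$ is literally the same operator (only re-expressed), dividing by this normalization gives~\eqref{Wprp} at once. For the density operator~\eqref{Tprp}, the key is the covariance of $T$ under the Bogoliubov transformation. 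Inverting~\eqref{scrUdef} to write the column vector $(\Psi,\Psi^\dagger)$ as $\scrU^{-1}(\tilde\Psi,\tilde\Psi^\dagger)$ and inserting this into the definition~\eqref{Tprp}, the ket-bra of the original operators equals $\scrU^{-1}$ times the ket-bra of the normal-mode operators times $(\scrU^{-1})^* = \scrU$; by linearity and cyclicity of the trace this gives $T = \scrU^{-1}\,\tilde T\,\scrU$. Combining this with $\tilde T = g(\tilde{\mathscr{C}})$ for $g(x) = e^{-x}/(2\cosh x)$ and with $\tilde{\mathscr{C}} = \scrU\mathscr{C}\scrU^{-1}$, the conjugations cancel and one obtains $T = g(\mathscr{C}) = e^{-\mathscr{C}}/(2\cosh\mathscr{C})$, which is~\eqref{Tprp}.

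The step I expect to require the most care is the covariance identity $T = \scrU^{-1}\,\tilde T\,\scrU$. Here one must track the complex-conjugation entries in the lower block of $\scrU$ and verify that the adjoints entering the ket-bra transform consistently; this is exactly where the structural constraint~\eqref{SUrel}, namely $S\scrU S = \overline{\scrU}$, is needed, guaranteeing that $\scrU$ maps the pair $(\Psi,\Psi^\dagger)$ into another pair of the same doubled form and that passing to adjoints commutes with the transformation. Once this bookkeeping is settled, the remaining manipulations—the functional calculus for $\cosh$ and the invariance of the determinant—are routine.
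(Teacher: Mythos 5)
Your proposal is correct and follows essentially the same route as the paper: the paper's own ``proof'' is precisely the discussion preceding the proposition, namely diagonalizing $\mathscr{C}$ by the Bogoliubov transformation of the preceding Lemma, computing $\tr_\Fock \exp(-Q)$ and $\tilde{T}$ mode by mode via~\eqref{2matrix}, and then transforming back using the unitary equivalence $\tilde{\mathscr{C}} = \scrU \mathscr{C}\, \scrU^{-1}$. Your additional care with the covariance identity $T = \scrU^{-1} \tilde{T}\, \scrU$ and the role of~\eqref{SUrel} only makes explicit a step the paper compresses into ``transforming back, we obtain the following result.''
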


\subsection{Expressing Entropies in terms of reduced one-particle density operators}
Using the formulas of Proposition~\ref{prpgauss}, we can now express the von-Neumann
entropy and the relative entropy a Gaussian state in terms of the reduced one-particle density operator.

\begin{Prp} \label{prpvN}
The von-Neumann entropy of the Gaussian state~$W$ in~\eqref{Wprp} can be
expressed in terms of the reduced one-particle density operator~$T$ in~\eqref{Tprp} by
\[ S(W) := -\tr_\Fock\big(W \log W) = -\tr_\H \big( T \log T \big) \:. \]
\end{Prp}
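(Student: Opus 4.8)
The plan is to reduce the identity to the diagonal frame furnished by the Bogoliubov diagonalization established above, and then to read both sides off the single-mode factorization already recorded in~\eqref{2matrix}. The key preliminary observation is that \emph{neither} side changes when one passes from the modes~$\Psi$ to the transformed modes~$\tilde\Psi$ with $\tilde{\mathscr{C}} = \scrU \mathscr{C} \scrU^{-1}$. On the left, the quadratic functional $Q$ in~\eqref{Wprp} is literally the \emph{same} operator whether written through $\mathscr{C}$ or through $\tilde{\mathscr{C}}$ (and $\det(2\cosh\mathscr{C}) = \det(2\cosh\tilde{\mathscr{C}})$ by similarity), so $W$ is one and the same operator; since $\tr_\Fock$ is basis independent, I may evaluate $-\tr_\Fock(W\log W)$ in the $\tilde\Psi$-occupation basis, where $W$ is diagonal. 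On the right, $T = e^{-\mathscr{C}}/(2\cosh\mathscr{C})$ is a function of $\mathscr{C}$, hence transforms as $\scrU T \scrU^{-1} = e^{-\tilde{\mathscr{C}}}/(2\cosh\tilde{\mathscr{C}})$, and by cyclicity $\tr_\H(T\log T)$ is unchanged. Thus it suffices to treat $\mathscr{C} = \tilde{\mathscr{C}} = \begin{pmatrix} C & 0 \\ 0 & -C \end{pmatrix}$ with $C = \text{diag}(c_1,\dots,c_N)$, $c_n \geq 0$.

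In this diagonal frame everything factorizes over the $N$ modes. By~\eqref{2matrix} the state is the tensor product $W = \bigotimes_{n} W_n$ of single-mode density matrices $W_n$ with the two eigenvalues $1-d_n$ and $d_n$ (occupation one and zero). Hence $\log W = \sum_n \1\otimes\cdots\otimes\log W_n\otimes\cdots\otimes\1$, and using $\tr_\Fock W_n = 1$ I obtain
\[ S(W) = -\tr_\Fock(W\log W) = \sum_{n=1}^N \big( -(1-d_n)\log(1-d_n) - d_n\log d_n \big) . \]
For the right-hand side, the reduced one-particle density operator~\eqref{Tprp} in the diagonal frame is the $2N\times 2N$ matrix $T = \text{diag}(1-d_1,\dots,1-d_N,\,d_1,\dots,d_N)$, the first block coming from the $\Psi$-entries and the second from the $\Psi^\dagger$-entries. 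Therefore the spectrum of $T$ is exactly the multiset union $\{1-d_n\}_n \cup \{d_n\}_n$, i.e.\ the union over $n$ of the two eigenvalues of $W_n$, so that $-\tr_\H(T\log T)$ is the same $2N$-term sum and the two quantities coincide term by term.

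The one point that genuinely needs care is the bookkeeping of the doubled space: the entropy on the left assigns \emph{both} the $d_n$- and the $(1-d_n)$-contribution to the single physical mode $n$, whereas on the right these two numbers sit in the $\Psi$- and $\Psi^\dagger$-blocks of $T$. That tracing over the full $2N$-dimensional doubled space reproduces the per-mode entropies exactly, with no spurious factor of two, is precisely the content of the constraint $S\,T\,S = \1 - \overline{T}$ forced by~\eqref{SCrel}; equivalently, it is the statement that the spectrum of $T$ is the disjoint union of the single-mode spectra. Remaining degenerate cases $d_n \in \{0,1\}$ (i.e.\ $c_n \to \infty$) are handled by continuity with the convention $0\log 0 = 0$, so that the formula persists in the limit.
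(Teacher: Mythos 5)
Your proof is correct, but it follows a genuinely different route from the paper's. You reduce both sides to the Bogoliubov-diagonal frame and compare spectra: the left side becomes the sum of binary entropies $\sum_n \big( -(1-d_n)\log(1-d_n) - d_n \log d_n \big)$ by additivity of the von Neumann entropy over the tensor factors of~$W$, and the right side is the same sum because the spectrum of~$T$ in that frame is exactly $\{1-d_n\}_n \cup \{d_n\}_n$; the invariance of both sides under the transformation (the state~$W$ being literally the same operator, and $\tr_\H(T\log T)$ being similarity-invariant) is what licenses the reduction. The paper instead argues frame-free with operator identities: it takes the logarithm of~\eqref{Wprp} to obtain~\eqref{logW}, reduces $\tr_\Fock(W\log W)$ to $-\tr_\H(\mathscr{C}\,T)$ in~\eqref{zwischen1}, eliminates $\mathscr{C}$ via $\mathscr{C} = -\log T - \log(2\cosh\mathscr{C})$ in~\eqref{ClogT}, and finally invokes the trace identity~\eqref{traceformula}. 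Note that~\eqref{traceformula} — which rests on the $\pm c_n$ pairing of the eigenvalues of~$\mathscr{C}$ and on the paired eigenvalues of~$T$ summing to one — encodes precisely the doubled-space bookkeeping you isolate at the end through $S\,T\,S = \1 - \overline{T}$; the two proofs capture the same structural fact in different clothing. What your route buys is transparency: the identity becomes a term-by-term matching of single-mode entropies, and the degenerate limits $d_n \in \{0,1\}$ are visibly harmless. What the paper's route buys is reusability: the intermediate relations~\eqref{zwischen1}, \eqref{ClogT} and~\eqref{traceformula} are reused essentially verbatim in the proof of the relative entropy formula of Theorem~\ref{thmgauss}, where a purely spectral argument like yours would no longer suffice, since $\mathscr{C}$ and $\mathscr{C}_0$ (hence $T$ and $T_0$) need not be simultaneously diagonalizable by a single Bogoliubov transformation.
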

\Proof We first take the logarithm of~\eqref{Wprp},
\beq \label{logW}
\log W = -\frac{1}{2}\: \log \det \big( 2 \cosh \mathscr{C} \big) -
\bigg\langle \begin{pmatrix} \Psi \\ \Psi^\dagger \end{pmatrix} , \mathscr{C}
\begin{pmatrix} \Psi \\ \Psi^\dagger \end{pmatrix} \bigg\rangle \:.
\eeq
Hence
\begin{align}
\tr_\Fock & \big(W \log W) + \frac{1}{2}\: \log \det \big( 2 \cosh \mathscr{C} \big) \notag \\
&= -\tr_\Fock\Big\{
\bigg\langle \begin{pmatrix} \Psi \\ \Psi^\dagger \end{pmatrix} , \mathscr{C}
\begin{pmatrix} \Psi \\ \Psi^\dagger \end{pmatrix} \bigg\rangle\:W \Big\}
= -\tr_\H \big( \mathscr{C} T \big) \:. \label{zwischen1}
\end{align}
Next, taking the logarithm of~\eqref{Tprp}, we obtain
\begin{align}
\log T &= -\mathscr{C} - \log (2 \cosh \mathscr{C}) \\
\mathscr{C} &=  -\log T - \log (2 \cosh \mathscr{C}) \:. \label{ClogT}
\end{align}
Using the last relation in~\eqref{zwischen1}, we conclude that
\begin{align}
S(W) &= -\tr_\Fock\big(W \log W) \notag \\
&= \frac{1}{2}\: \tr_\H \big( \log( 2 \cosh \mathscr{C}) \big) + \tr_\H \big( \mathscr{C} T) \big) \notag \\
&= \frac{1}{2}\: \tr_\H \big( \log( 2 \cosh \mathscr{C}) \big)
- \tr_\H \big( T \log T \big) - \tr_\H \big( T \,\log (2 \cosh \mathscr{C}) \big) \:. \label{zwischen2}
\end{align}
In order to simplify the last trace, we note that, according to~\eqref{tildeCdef}, the eigenvalues of~$\mathscr{C}$
always come in pairs~$\pm c_n$. Since the hyperbolic cosine is even, whereas the corresponding
eigenvalues of~$T$ add up to one (see~\eqref{Tprp}), we obtain
\beq \label{traceformula}
\tr_\H \big( T \,\log (2 \cosh \mathscr{C}) \big) = \frac{1}{2}\: \tr_\H \big(\log (2 \cosh \mathscr{C}) \big) \:.
\eeq
Using this relation in~\eqref{zwischen2} gives the result.
\QED

\begin{Thm} \label{thmgauss} Let~$W$ be the Gaussian state~\eqref{Wprp}
and~$W_0$ another Gaussian state of the same form (denoted everywhere with an additional subscript zero).
Then the relative entropy can be expressed in terms of the reduced one-particle density operators by
\begin{align}
S(W \| W_0) &:= \tr_\Fock \big(W \:(\log W-\log W_{0}) \big) \notag \\
&\:= \tr_\H \Big\{ T \big( \log T - \log T_0 \big) \Big\} - \tr_\H \Big\{ \big( T -T_0 \big) \,\log (2 \cosh \mathscr{C}_0) \Big\} \label{Srel0} \\
&\:= -S(W) + S(W_0) + \tr_\H \Big\{ (T - T_0) \:\mathscr{C}_0 \Big\} \:. \label{Srel1}
\end{align}
If~$\mathscr{C}$ and~$\mathscr{C}_0$ are unitarily equivalent, we also have
\beq \label{Srel2}
S(W \| W_0 )= \tr_\H \Big\{ (T - T_0) \:\mathscr{C}_0 \Big\} \:.
\eeq
\end{Thm}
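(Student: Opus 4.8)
The plan is to reduce everything to the one-particle picture using Proposition~\ref{prpvN} together with the logarithm formula~\eqref{logW} and the matrix identity~\eqref{ClogT}, and to handle the only genuinely new ingredient, the cross term $\tr_\Fock(W \log W_0)$, by the same Fock-to-one-particle contraction already used in the proof of Proposition~\ref{prpvN}.

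First I would write $S(W\|W_0) = \tr_\Fock(W\log W) - \tr_\Fock(W\log W_0)$. The first summand is by definition $-S(W)$. For the second summand I would insert the logarithm of $W_0$ given by~\eqref{logW} (with an additional subscript zero everywhere), namely
\[ \log W_0 = -\tfrac{1}{2}\log\det\big(2\cosh\mathscr{C}_0\big) - \Big\langle \begin{pmatrix}\Psi\\\Psi^\dagger\end{pmatrix}, \mathscr{C}_0\begin{pmatrix}\Psi\\\Psi^\dagger\end{pmatrix}\Big\rangle, \]
and contract the quadratic term against $W$. Exactly as in the step leading to~\eqref{zwischen1}, this contraction extracts the reduced one-particle density operator of $W$ from its definition~\eqref{Tprp}, giving $\tr_\Fock(W\log W_0) = -\tfrac12\tr_\H\log(2\cosh\mathscr{C}_0) - \tr_\H(\mathscr{C}_0 T)$.

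The key manoeuvre is then to split $T = T_0 + (T-T_0)$ in the last trace. Applying Proposition~\ref{prpvN} to the state $W_0$ shows that $\tfrac12\tr_\H\log(2\cosh\mathscr{C}_0) + \tr_\H(\mathscr{C}_0 T_0) = S(W_0)$, so the matched part reassembles into the von Neumann entropy of $W_0$ and only the difference $T-T_0$ survives; this gives $\tr_\Fock(W\log W_0) = -S(W_0) - \tr_\H\big((T-T_0)\mathscr{C}_0\big)$ and hence directly the identity~\eqref{Srel1}. To pass from~\eqref{Srel1} to~\eqref{Srel0} I would substitute $S(W) = -\tr_\H(T\log T)$ and $S(W_0) = -\tr_\H(T_0\log T_0)$ (again Proposition~\ref{prpvN}) and replace $\mathscr{C}_0$ using~\eqref{ClogT} in the form $\mathscr{C}_0 = -\log T_0 - \log(2\cosh\mathscr{C}_0)$; the $\log T_0$ contributions recombine so that the $\tr_\H(T_0\log T_0)$ terms cancel, leaving precisely $\tr_\H\{T(\log T-\log T_0)\} - \tr_\H\{(T-T_0)\log(2\cosh\mathscr{C}_0)\}$. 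Finally,~\eqref{Srel2} is immediate from~\eqref{Srel1}: if $\mathscr{C}$ and $\mathscr{C}_0$ are unitarily equivalent then, by~\eqref{Tprp}, $T$ and $T_0$ have identical spectra, so $S(W)=S(W_0)$ and the entropy difference drops out.

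The main subtlety to watch is that the trace identity~\eqref{traceformula}, which encodes that the eigenvalues of the density operator sum to one on each $\pm c_n$ eigenpair of the corresponding covariance matrix, may only be applied to \emph{matched} pairs: it holds for $(T_0,\mathscr{C}_0)$ and for $(T,\mathscr{C})$, but \emph{not} for a mixed combination such as $\tr_\H\big(T\log(2\cosh\mathscr{C}_0)\big)$, since the eigenvectors of $T$ need not align with those of $\mathscr{C}_0$. This is exactly why the derivation is organized around the splitting $T = T_0 + (T-T_0)$: the matched term $\tr_\H(\mathscr{C}_0 T_0)$ is absorbed into $S(W_0)$ via Proposition~\ref{prpvN} (whose proof already invokes~\eqref{traceformula}), so that no illegitimate simplification of a genuine cross term is ever required.
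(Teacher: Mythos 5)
Your proof is correct and rests on exactly the same ingredients as the paper's own argument --- the logarithm formula \eqref{logW}, the Fock-to-one-particle contraction as in \eqref{zwischen1}, the identity \eqref{ClogT}, Proposition~\ref{prpvN}, and the matched-pair trace identity \eqref{traceformula} --- the only difference being order: you obtain \eqref{Srel1} first by splitting $T = T_0 + (T-T_0)$ and absorbing the matched term into $S(W_0)$, then recover \eqref{Srel0} by back-substitution, whereas the paper derives \eqref{Srel0} first (replacing $\tfrac{\1}{2}$ by $T_0$ via \eqref{traceformula}) and then \eqref{Srel1}. Your closing caution that \eqref{traceformula} may be applied only to matched pairs such as $(T_0,\mathscr{C}_0)$, never to the mixed combination $\tr_\H\big(T\log(2\cosh\mathscr{C}_0)\big)$, is exactly the right point, and your sign bookkeeping in the cross term is in fact more careful than the paper's first display, which omits two minus signs (compare \eqref{zwischen1}) that silently cancel one line later.
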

\Proof First, using~\eqref{logW} and~\eqref{ClogT} with additional subscripts zero, we obtain
\begin{align*}
&\tr_\Fock\big(W \log W_0) + \frac{1}{2}\: \log \det \big( 2 \cosh \mathscr{C}_0 \big) \\
&= 
\tr_\Fock\Big\{
\bigg\langle \begin{pmatrix} \Psi \\ \Psi^\dagger \end{pmatrix} , \mathscr{C}_0
\begin{pmatrix} \Psi \\ \Psi^\dagger \end{pmatrix} \bigg\rangle\:W \Big\}
= \tr_\H \big( \mathscr{C}_0 \, T \big) \\
&= \tr_\H \big( T \log T_0 \big) + \tr_\H \big( T \,\log (2 \cosh \mathscr{C}_0) \big) \:.
\end{align*}
Thus
\begin{align*}
\tr_\Fock\big(W \log W_0)
&=\tr_\H \big( T \log T_0 \big) + \tr_\H \bigg\{ \Big( T -\frac{\1}{2} \Big) \,\log (2 \cosh \mathscr{C}_0) \bigg\} \\
S(W \| W_0) 
&= \tr_\H \Big\{ T \big( \log T - \log T_0 \big) \Big\} - \tr_\H \bigg\{ \Big( T -\frac{\1}{2} \Big) \,\log (2 \cosh \mathscr{C}_0) \bigg\} \:.
\end{align*}
Using~\eqref{traceformula}, this can be rewritten as
\begin{align*}
&S(W \| W_0)
= \tr_\H \Big\{ T \big( \log T - \log T_0 \big) \Big\} - \tr_\H \Big\{ \big( T -T_0 \big) \,\log (2 \cosh \mathscr{C}_0) \Big\} \\
&= -S(W) + S(W_0)
-\tr_\H \Big\{ (T - T_0) \:\log T_0 \Big\} - \tr_\H \Big\{ \big( T -T_0 \big) \,\log (2 \cosh \mathscr{C}_0) \Big\} \\
&= -S(W) + S(W_0)-\tr_\H \Big\{ (T - T_0) \:\big( \log T_0 + \log (2 \cosh \mathscr{C}_0) \big) \Big\} \:.
\end{align*}
The first line gives~\eqref{Srel0}. Moreover, using~\eqref{ClogT} for~$\mathscr{C}_0$ 
in the last line gives~\eqref{Srel1}.

If~$\mathscr{C}$ and~$\mathscr{C}_0$ are unitarily equivalent, the same is
true for~$T$ and~$T_0$. Therefore, the result of Proposition~\ref{prpvN} shows that the
first two summands in~\eqref{Srel1} cancel each other, giving~\eqref{Srel2}.
\QED

\Thanks{{{\em{Acknowledgments:}} We would like to thank the German Science Foundation (DFG) for support. Furthermore, we would like to thank Maximilian Chtchekourov, Stefano Galanda, Robert Jonsson, Simone Murro and Rainer Verch for helpful discussions related to the subject. We also extend our gratitude to the anonymous referee for their valuable feedback and constructive comments, which significantly improved the current form of the manuscript.
\section*{Data Availability Statement}
No datasets were generated or analyzed during the current study.
\section*{Conflict of Interest Statement}
The authors declare that they have no conflicts of interest to disclose.

\bibliographystyle{amsplain}
\providecommand{\bysame}{\leavevmode\hbox to3em{\hrulefill}\thinspace}
\providecommand{\MR}{\relax\ifhmode\unskip\space\fi MR }
\providecommand{\MRhref}[2]{%
  \href{http://www.ams.org/mathscinet-getitem?mr=#1}{#2}
}
\providecommand{\href}[2]{#2}

\end{document}